\newcommand{\deleted}[1]{}
\newcommand{\rephrase}[3]{\noindent\textbf{#1~#2.}~{\emph{#3}}}
\newcommand{\pa}[1]{\left(#1\right)}
\newcommand{\FT}{\mbox{FT-APX}}		%% Problem Name
\newcommand{\Sel}{\mathrm{Select}}	%% Feige's Selectionalgorithm
\newcommand{\SM}{\mbox{ST-Median}}	%% Feige's Selectionalgorithm
\newcommand{\MV}{\mbox{c}} 			%% Constant For Majority Vote
\newcommand{\mb}{\frac{1}{6}}		%% threshold for epsilon_L in Median 
\newcommand{\A}{\mathcal{A}}
\newcommand{\tA}{\tilde{\mathcal{A}}}
\DeclareMathOperator{\TD}{\mathfrak{D}}		%% Expected depth of T
\newcommand{\tT}{\tilde{T}}					%% New decision tree
\DeclareMathOperator{\LC}{\mathfrak{\eta}}  %% Constant in Lower Bound
\DeclareMathOperator{\Hyper}{Hypergeom}
\newcommand{\hM}{M}				%% #Balls in Hyper
\newcommand{\hK}{K}				%% #Blacks in Hyper
\newcommand{\m}{m}					%% #Draws  in Hyper
\DeclareMathOperator{\HH}{H}				%% Entropy function
\DeclareMathOperator{\PP}{P}				%%% Entropy term in Hyper
\DeclareMathOperator{\QQ}{Q}                %%% Smooth extension of \PP
\newcommand{\Div}[2]{\operatorname{D} \pa{{#1} \middle\| {#2}}}
\newcommand{\bra}[1]{\left[#1\right]}
\newcommand{\ceil}[1]{\left\lceil#1\right\rceil}
\newcommand{\de}[1]{\mathrm{d}#1}
\newcommand{\Prpr}[1]{\operatorname{Pr}\bra{#1}}
\title{Approximate Selection with Unreliable Comparisons in Optimal Expected Time}
\author{Shengyu Huang}{Department of Computer Science, ETH Z\"{u}rich, Z\"{u}rich, Switzerland}{kumom.huang@gmail.com}{}{}
\author{Chih-Hung Liu}{Department of Computer Science, ETH Z\"{u}rich, Z\"{u}rich, Switzerland}{chih-hung.liu@inf.ethz.ch}{}{}
\author{Daniel Rutschman}{Department of Computer Science, ETH Z\"{u}rich, Z\"{u}rich, Switzerland}{daniel.rutschmann@inf.ethz.ch}{}{}
\authorrunning{S. Huang, C.-H. Liu, D. Rutschmann} %TODO mandatory. First: Use abbreviated first/middle names. Second (only in severe cases): Use first author plus 'et al.'
\keywords{Approximate Selection, Unreliable Comparisons, Independent Faults} %TODO mandatory; please add comma-separated list of keywords
\begin{document}

\maketitle

\begin{abstract}
Given $n$ elements, an integer $k$ and a parameter $\varepsilon$, 
we study to select an element with rank in $(k-n\varepsilon,k+n\varepsilon]$ using \emph{unreliable} comparisons where the outcome of each comparison is incorrect independently with a constant error probability, and multiple comparisons between the same pair of elements are independent. 
In this fault model, the fundamental problems of finding the minimum, selecting the $k$-th smallest element and sorting have been shown to require $\Theta\big(n \log \frac{1}{Q}\big)$, $\Theta\big(n\log \frac{\min\{k,n-k\}}{Q}\big)$ and $\Theta\big(n\log \frac{n}{Q}\big)$ comparisons, respectively, to achieve success probability $1-Q$~\cite{FeigeRPU94}.
Although finding the minimum and selecting the $k$-th smallest element have different complexities, to attain the high probability guarantee ($Q=\frac{1}{n}$), both of them require $\Theta(n\log n)$ comparisons.
Recently, Leucci and Liu~\cite{LeucciL20} proved that the approximate minimum selection problem ($k=0$) requires expected $\Theta(\varepsilon^{-1}\log \frac{1}{Q})$ comparisons. 
Therefore, it is interesting to study if there exists a clear distinction between the two problems in the approximation scenario.

 We develop a randomized algorithm that performs \emph{expected} $O(\frac{k}{n}\varepsilon^{-2} \log \frac{1}{Q})$ comparisons to achieve success probability at least $1-Q$.
We also prove that any randomized algorithm with success probability at least $1-Q$ performs \emph{expected} $\Omega(\frac{k}{n}\varepsilon^{-2}\log \frac{1}{Q})$ comparisons.
Our results indicate a clear distinction between approximating the minimum and approximating the $k$-th smallest element, which holds even for the high probability guarantee, e.g., if $k=\frac{n}{2}$ and $Q=\frac{1}{n}$, $\Theta(\varepsilon^{-1}\log n)$ versus $\Theta(\varepsilon^{-2}\log n)$.
Moreover, if $\varepsilon=n^{-\alpha}$ for $\alpha \in (0,\frac{1}{2})$,
the asymptotic difference is almost quadratic, i.e., $\tilde{\Theta}(n^{\alpha})$ versus $\tilde{\Theta}(n^{2\alpha})$.
As a by-product, we give an algorithm using deterministic $O\big(\frac{k}{n}\varepsilon^{-2}\log \frac{1}{Q}+(\log\frac{1}{Q})(\log \log\frac{1}{Q})^2\big)$ comparisons, which is optimal as long as $\frac{k}{n}\varepsilon^{-2}=\Omega\big((\log \log\frac{1}{Q})^2\big)$.
\end{abstract}

\newpage

%%%%%%%%%%%%%%%%%%%%%%%%%%%%%%%%%%%%%%%%%%%%%%%%%%%%%%%%%%%%%%%%%%%
%%%%%%%%%%%%%%%%%%%%%%%%%%%%%%%%%%%%%%%%%%%%%%%%%%%%%%%%%%%%%%%%%%%
%%%%%%%%%%%%%%%%%%%%%%%%%%%%%%%%%%%%%%%%%%%%%%%%%%%%%%%%%%%%%%%%%%%
\section{Introduction}\label{sec-ind}
%%%%%%%%%%%%%%%%%%%%%%%%%%%%%%%%%%%%%%%%%%%%%%%%%%%%%%%%%%%%%%%%%%%
%%%%%%%%%%%%%%%%%%%%%%%%%%%%%%%%%%%%%%%%%%%%%%%%%%%%%%%%%%%%%%%%%%%
%%%%%%%%%%%%%%%%%%%%%%%%%%%%%%%%%%%%%%%%%%%%%%%%%%%%%%%%%%%%%%%%%%%

We study a generalization of the fundamental problem of selecting \emph{the $k$-th smallest elements} in terms of \emph{approximation} and \emph{fault tolerance}.
Given a set $S$ of $n$ elements, an integer $k$ and a parameter $\varepsilon$, the \emph{fault-tolerant $\varepsilon$-approximate $k$-selection} problem, $\FT(k,\varepsilon)$ for short, is to return an element with rank in $(k-n\varepsilon,k+n\varepsilon]$ only using \emph{unreliable} comparisons whose outcome can be \emph{incorrect}.
Due to these comparison faults, it is impossible to guarantee a correct solution, so the number of comparisons performed by an algorithm should depend on the \emph{failure probability} $Q$ of the algorithm where $Q<\frac{1}{2}$.
Without loss of generality, we assume that $n$ is even and $k\leq \frac{n}{2}$; if $k>\frac{n}{2}$, the problem becomes to approximate the $(n-k)$-th largest element, which is symmetric.
The elements with rank in $(0, k-n\varepsilon]$, $(k-n\varepsilon,k+n\varepsilon]$ and $(k+n\varepsilon, n]$ of $S$ are called \emph{small}, \emph{relevant} and \emph{large}, respectively.

We consider \emph{independent random comparison faults}:
There is a strict ordering relation among $S$, but algorithms can only gather information via unreliable comparisons between two elements.
The outcome of each comparison is wrong with a known constant probability $p<\frac{1}{2}$.
When comparing the same pair of elements multiple times,
each outcome is \emph{independent} of the previous outcomes; 
comparisons involving different pairs of elements are also independent.

The above fault model has been widely studied for various fundamental problems such as finding the minimum, selecting the $k$-th smallest element and sorting a sequence~\cite{FeigeRPU94,Pelc89,Pelc02}. 
Feige et al~\cite{FeigeRPU94} proved that to achieve success probability $1-Q$, the aforementioned three problems require $\Theta\big(n \log \frac{1}{Q}\big)$, $\Theta\big(n\log \frac{\min\{k,n-k\}}{Q}\big)$ and $\Theta\big(n\log \frac{n}{Q}\big)$ comparisons, respectively, both in expectation and in the worst case.  
In the sequel, their selection algorithm is denoted by $\Sel(k,Q)$, and its performance is summarized as follows.

\begin{theorem}[\cite{FeigeRPU94}]\label{thm-feige-selection}
	$\Sel(k,Q)$ performs $O\big(n\log \frac{\min\{k,n-k\}}{Q}\big)$ comparisons to select the $k$-th smallest element among $n$ elements with probability at least $1-Q$. 
\end{theorem}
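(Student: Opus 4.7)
The plan is to amplify each noisy comparison by repetition and then apply a QuickSelect-style recursion, scheduling the amplification levels so the total raw-comparison cost matches $O\bigl(n\log\frac{\min\{k,n-k\}}{Q}\bigr)$. Assume without loss of generality that $k\le n/2$. First I would establish a majority-vote primitive: repeating a single noisy comparison $\Theta(\log(1/\delta))$ times and taking the majority yields an outcome correct with probability at least $1-\delta$ by a Chernoff bound (using $p<1/2$). Call this a $\delta$-reliable comparison; it costs $O(\log(1/\delta))$ raw comparisons.

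The algorithm would be a recursive noisy QuickSelect. At each call on an active set of size $m$, draw a sample of $\Theta(\log(1/Q))$ elements, noisily sort it to pick a pivot near the target quantile, then $\delta$-reliably partition the active set against the pivot and recurse on the side containing the target rank. Assign per-level failure budgets $Q_i$ with $\sum_i Q_i\le Q$, so that a single union bound over all partition steps controls the overall failure probability.

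For the cost analysis, since pivots are approximate quantiles of a random sample, the active-set size $m_i$ shrinks geometrically toward $k$ with high probability; after $O(\log(n/k))$ levels we have $m_i=O(k)$, and subsequent levels halve $m_i$ further. Running each such level at reliability $\delta_i=Q_i/m_i$ so that the partition is perfect with probability $1-Q_i$ costs $O(m_i\log(m_i/Q_i))$ per level, and the geometric sum starting from size $O(k)$ is dominated by $O(k\log(k/Q))$.

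The main obstacle is the early phase, where $m_i$ is still close to $n$ but the target bound only allows $\log k$ amplification rather than $\log n$. I expect the resolution to be that only weak reliability suffices during those initial $O(\log(n/k))$ levels: approximate partitions are enough to roughly halve the active set without losing the target rank, provided the sampled pivot's quantile estimate is accurate enough that the target survives comparison errors near the pivot. Chaining ``target remains in the active set'' and ``partition is adequately accurate'' across all early and late levels into a single union bound, while keeping the per-level amplification logarithmic only in $k$ rather than in $n$, is the delicate step I expect to occupy the bulk of the proof.
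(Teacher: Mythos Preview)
This theorem is not proved in the present paper: it is quoted from Feige, Raghavan, Peleg, and Upfal~\cite{FeigeRPU94} and used as a black box (the algorithm is even given the name $\Sel(k,Q)$ precisely so that it can be invoked later without reopening its analysis). There is therefore no ``paper's own proof'' to compare your proposal against.

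As for the substance of your sketch: the majority-vote primitive and the late phase (once the active set has size $O(k)$) are fine and do give $O(k\log(k/Q))$. The early phase, however, is where your plan is genuinely incomplete, and you flag this yourself. With only constant (or $O(\log k)$) amplification per comparison while the active set still has size $\Theta(n)$, a constant fraction of elements land on the wrong side of each pivot; the cumulative rank drift of the target across the $O(\log(n/k))$ early levels is then $\Theta(n)$ times the per-comparison error rate, which need not be $o(k)$. Saying that ``approximate partitions are enough'' does not by itself control this drift, and tightening the reliability enough to kill it costs $\Omega(n\log(n/k))$, overshooting the target bound when $k$ is small. The original proof in~\cite{FeigeRPU94} does not use QuickSelect for this reason; it is built on a knockout-tournament structure with round-dependent boosting (so that only the element that must survive pays for high reliability), combined with a separate reduction for general $k$. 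If you want to reconstruct a full proof, that reference is the place to look.
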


Due to the increasing complexity of modern computing, 
error detection and correction require enormous computing resources.  
Emerging technologies enable the tolerance of computation errors for saving computing resources~\cite{PalemL13,HanO13,ChoLM12,KirschP12,SloanSK12}.
Meanwhile, many practical applications do not require an optimal answer but good enough ones.
Therefore, fault-tolerant approximation algorithms are well-motivated.

An intuitive approach to the $\FT(k,\varepsilon)$ problem is first to pick $m=\Theta(\frac{k}{n}\varepsilon^{-2}\log\frac{1}{Q})$ elements randomly so that the underlying $\lceil k\cdot \frac{m}{n} \rceil$-th smallest element is \emph{relevant} with probability at least $1-\frac{Q}{2}$, and then to apply $\Sel(\lceil k\cdot \frac{m}{n} \rceil,\frac{Q}{2})$ on the $m$ elements.
By Theorem~\ref{thm-feige-selection}, this approach requires $\Theta\big(\frac{k}{n}\varepsilon^{-2}(\log^2\frac{1}{Q}+(\log\frac{1}{Q})(\log\frac{k}{n}\varepsilon^{-2}))\big)$ comparisons.
Recently, Leucci and Liu~\cite{LeucciL20} studied the approximate minimum selection problem, which asks for one element with rank in $(0, n\varepsilon]$ and thus is equivalent to $\FT(0,\varepsilon)$.
 They developed an algorithm using \emph{expected} $O(\varepsilon^{-1}\log\frac{1}{Q})$ comparisons and also proved a matching lower bound.

It is of great interest to study if the $\FT(k,\varepsilon)$ problem can be solved
with probability $1-Q$ using $O(\frac{k}{n}\varepsilon^{-2}\log \frac{1}{Q})$ comparisons. 
Moreover, although finding the minimum and finding the $k$-th smallest element require different numbers of comparisons, i.e., $\Theta(n\log\frac{1}{Q})$ versus $\Theta(n\log \frac{\min\{k,n-k\}}{Q})$, to attain the so-called high probability guarantee, i.e., $Q=\frac{1}{n}$, both problems require $\Theta(n\log n)$ comparisons. 
Thus, it is also desirable to investigate if there is a stronger distinction between these two problems in the approximation scenario.

\begin{remark}\label{rm-difficulty}
	Similar to many randomized algorithms, a bound with a $\log^2\frac{1}{Q}+(\log\frac{1}{Q})\cdot(\log \frac{k}{n}\varepsilon^{-2})$ term can be easily attained as in the above intuitive approach, but improving such a term to exactly $\log \frac{1}{Q}$ would be nontrivial.
	For example, Section~\ref{sec-median} will discuss how a variant of Quickselect fails to attain the $\log\frac{1}{Q}$ bound.
\end{remark}

\subsection{Our Contributions}\label{sub-contribution}

We develop a randomized algorithm that performs \emph{expected} $O(\frac{k}{n}\varepsilon^{-2}\log \frac{1}{Q})$ comparisons to solve the $\FT(k,\varepsilon)$ problem with probability at least $1-Q$.
We also prove that any algorithm with success probability $1-Q$ requires \emph{expected} $\Omega(\frac{k}{n}\varepsilon^{-2}\log \frac{1}{Q})$ comparisons, implying the optimality of our algorithm.
As a by-product, we give a randomized algorithm using deterministic $O\big(\frac{k}{n}\varepsilon^{-2}\log \frac{1}{Q}+(\log\frac{1}{Q})(\log\log\frac{1}{Q})^2\big)$ comparisons, which is optimal as long as $\frac{k}{n}\varepsilon^{-2}=\Omega\big((\log \log\frac{1}{Q})^2\big)$.

Our results indicate that there is a distinction between the approximate \emph{minimum} selection problem and the general approximate \emph{$k$-th element} selection problem in terms of the expected number of comparisons, i.e., $\Theta(\varepsilon^{-1}\log\frac{1}{Q})$~\cite{LeucciLM19} versus $\Theta(\frac{k}{n}\varepsilon^{-2}\log \frac{1}{Q})$.
This distinction even holds for the high probability guarantee ($Q=\frac{1}{n}$) in contradiction to the fact that the two problems have the same complexity $\Theta(n\log n)$ in the \emph{exact} selection~\cite{FeigeRPU94}.
For example,  if $k=\frac{n}{2}$ and $Q=\frac{1}{n}$, the two approximate selection problems require expected $\Theta(\varepsilon^{-1}\log n)$ and  
$\Theta(\varepsilon^{-2}\log n)$ comparisons, respectively.
Moreover, if $\varepsilon=n^{-\alpha}$ for a constant $\alpha \in (0,\frac{1}{2})$,
the asymptotic difference is almost quadratic, i.e., $\tilde{\Theta}(n^{\alpha})$ versus $\tilde{\Theta}(n^{2\alpha})$.

\begin{remark}\label{rm-ignorance}
The $\frac{k}{n}\varepsilon^{-2}$ term in those complexities is actually $\max\{\varepsilon^{-1}, \frac{k}{n}\varepsilon^{-2}\}$. If $k\leq n\epsilon$, by which $\varepsilon^{-1}\geq \frac{k}{n}\varepsilon^{-2}$, a correct answer to $\FT(0,\varepsilon)$ is also correct to $\FT(k,\varepsilon)$, indicating that this case is essentially the approximate minimum selection and can be solved optimally by Leucci and Liu's algorithms~\cite{LeucciL20}.
Therefore, to simplifying the description, we assume that $k>n\epsilon$ throughout the paper if no further specification.
\end{remark}

As noted in Remark~\ref{rm-difficulty}, our technical advance is to improve the $\log^2\frac{1}{Q}+(\log\frac{1}{Q})(\log \frac{k}{n}\varepsilon^{-2})$ term to $\log\frac{1}{Q}$.
To some extent, compared with Leucci and Liu's algorithms, our algorithms cover the entire range of $k$ instead of the case when $k$ is trivially small.
In addition, our algorithm owns an elegant feature that it only exploits simple sampling techniques, e.g., selecting the median of three samples and selecting the minimum of two samples.

The top-level of our algorithm, inspired by Leucci and Liu~\cite{LeucciL20}, reduces the $\FT(k,\varepsilon)$ problem on $n$ elements
to the $\FT(\frac{m}{2}, \frac{3}{8} )$ problem on $m=\Theta(\log\frac{1}{Q})$ elements.
More precisely, 
if a relevant element can be selected with probability  $\frac{8}{9}$,
we can generate a sequence of $\Theta(\log\frac{1}{Q})$ elements in which $\frac{3}{4}$ of elements around the middle, with probability $1-\frac{Q}{2}$, are all relevant.
For such a ``dense'' sequence, we design a delicate trial-and-error method to select a relevant element with probability $1-\frac{Q}{2}$ using expected $\Theta(\log\frac{1}{Q})$ comparisons.

The main challenge is to obtain a relevant element with probability $\frac{8}{9}$ using only  $O(\frac{k}{n}\varepsilon^{-2})$ comparisons.
For the approximate minimum ($k=0$), Leucci and Liu~\cite{LeucciL20} applied $\Sel(1,\frac{1}{10})$ on $\Theta(\varepsilon^{-1})$ randomly picked elements and attained $O(\varepsilon^{-1})$ comparisons.
However, for general $k$, this method requires $\Theta(\frac{k}{n}\varepsilon^{-2} \log \frac{k}{n}\varepsilon^{-2} )$ comparisons with an extra logarithmic factor.

We first work on a special case that $k=\frac{n}{2}$, i.e., the approximate median selection. 
Based on the symmetry property of the median, we observe that the median of three randomly picked elements is more likely to be relevant than a randomly picked element.
We exploit this observation to iteratively increase the ratio of relevant elements while keeping the underlying median being relevant.
Once the ratio becomes a constant fraction, we will apply a straightforward method.

For general $k$, we design a ``purifying'' process that iteratively increases the ratio of relevant elements while keeping elements around a ``controlled'' position being relevant.
Despite no symmetry property, we still observe that under certain conditions, the minimum of two randomly picked elements is more likely to be relevant than a randomly picked one.
Then, we derive feasible parameters to control the relative position of $k$, i.e., the middle of the remaining relevant elements, during the purifying process.
Once the relative position becomes a constant fraction of the remaining elements, we add dummy smallest elements and apply our approximate median selection.

For some range of $(k,\varepsilon)$, our bounds are not tight.
If $\frac{k}{n}\varepsilon^{-2}\log \frac{1}{Q}=\Omega(n)$, the lower bound is $\Omega(\max\{n, \varepsilon^{-1}\log\frac{(k+n\varepsilon)/(2n\varepsilon)}{Q} \})$ (Theorem~\ref{thm-lower-another}).
For this range, a trivial upper bound of $O(n\log \frac{k}{Q})$ follows from Theorem~\ref{thm-feige-selection},
indicating a gap between $\Omega(\max\{n, \varepsilon^{-1}\log(\frac{k+n\varepsilon}{2n\varepsilon}\cdot \frac{1}{Q}) \})$ and $O(n\log \frac{k}{Q})$ for some range of $(k,\varepsilon)$.

The rest of the paper is organized as follows. 
Section~\ref{sub-literature} gives a brief literature review.
Section~\ref{sec-pre} provides a few preliminary remarks.
Section~\ref{sec-reduction} presents the top-level algorithm.
Section~\ref{sec-median} and Section~\ref{sec-selection} describe sub-algorithms to approximate the median and the $k$-th element with constant probability, respectively.
Section~\ref{sec-lower} sketches the lower bound analysis. Interested readers are referred to the appendix for detailed technical proofs.

\subsection{Brief Literature}\label{sub-literature}

Dating back to the 1987, 
Ravikumar et al.~\cite{RavikumarGL87} already studied a variant of the problem of finding the \emph{exact} minimum using unreliable comparisons when at most $f$ comparisons are allowed.
They proved that $\Theta(f n)$ comparisons are necessary in the worst case. 
Later, Aigner~\cite{Aigner97} considered a \emph{prefix-bounded} error model: for a fraction parameter  $\gamma < \frac{1}{2}$, at most an $\gamma$-fraction of the past comparisons failed at any point during the execution of an algorithm. 
He proved that $\Theta(\frac{1}{1-p})^n$ comparisons is necessary to find the minimum in the worst case. 
Furthermore, he proved that if $p>\frac{1}{n-1}$, no algorithm can succeed with certainty~\cite{Aigner97}.

When errors occur independently,
as already discussed,
Feige et al.~\cite{FeigeRPU94} showed that the required number of comparisons for selecting the exact $k$-th smallest element with probability at least $1-Q$ is $\Theta(n\log \frac{\max\{k,n-k\}}{Q})$. 
Recently, Braverman et al.~\cite{BravermanMW16} investigated the \emph{round complexity} and the number of comparisons required by partition and selection algorithms. 
They proved that for any constant error probability,
$\Theta(n \log n)$ comparisons are necessary for any algorithm that selects the minimum with high probability.
Also, Chen et al. \cite{ChenGMS17} studied the problem of computing the smallest $k$ elements using $r$ given independent noisy comparisons between each pair of elements. 
In a very general error model called \emph{strong stochastic model},  they gave a linear-time algorithm with competitive ratio of $\tilde{O}(\sqrt{n})$, and also proved that this competitive ratio is tight.

The related problem of sorting with faults has also received considerable attention.
When there are at most $f$ comparison faults,
$\Theta(n \log n + fn)$ comparisons are necessary and sufficient to correctly sort $n$ elements~\cite{LakshmananRG91, Long92, Bagchi92}. 
For the prefix-bounded model, 
although Aigner's result on the minimum selection \cite{Aigner97} implies that $(\frac{1}{1-p})^{O(n \log n)}$ are sufficient to sort $n$ elements, 
Borgstrom and Kosaraju~\cite{BorgstromK93} showed that
checking whether the input elements are sorted already requires $\Omega\big( ( \frac{1}{1-p} )^n \big)$ comparisons.
When comparison faults are permanent, or equivalently, when a pair of elements can only be compared once,
the underlying sorting problem has also been extensively studied especially because it can be connected to both the \emph{minimum feedback arc set}  problem and the \emph{rank aggregation} problem \cite{MakarychevMV13, Kenyon-MathieuS07, BravermanMW16, BravermanM08, KleinPSW22,LeightonM99, GeissmannMW15, GeissmannLLP17, GeissmannLLP20, GeissmannLLP19}.  
There are also sorting algorithms for memory faults~\cite{FinocchiGI09,LeucciLM19}.

For more knowledge about fault-tolerant search algorithms, we refer the interested readers to a survey by Pelc~\cite{Pelc02} and a monograph by Cicalese~\cite{Cicalese13}.

%%%%%%%%%%%%%%%%%%%%%%%%%%%%%%%%%%%%%%%%%%%%%%%%%%%%%%%%%%%%%%%%%%%
%%%%%%%%%%%%%%%%%%%%%%%%%%%%%%%%%%%%%%%%%%%%%%%%%%%%%%%%%%%%%%%%%%%
%%%%%%%%%%%%%%%%%%%%%%%%%%%%%%%%%%%%%%%%%%%%%%%%%%%%%%%%%%%%%%%%%%%
\section{Preliminary}\label{sec-pre}
%%%%%%%%%%%%%%%%%%%%%%%%%%%%%%%%%%%%%%%%%%%%%%%%%%%%%%%%%%%%%%%%%%%
%%%%%%%%%%%%%%%%%%%%%%%%%%%%%%%%%%%%%%%%%%%%%%%%%%%%%%%%%%%%%%%%%%%
%%%%%%%%%%%%%%%%%%%%%%%%%%%%%%%%%%%%%%%%%%%%%%%%%%%%%%%%%%%%%%%%%%%

As explained in remark~\ref{rm-ignorance}, we assume that $k>n\epsilon$ throughout the paper if no further specification.
For ease of exposition, we use $\bm{\beta}$ to denote $\bm{\frac{k}{n}}$ in some analyses 
and sometimes abuse the name $x$ of an element to denote its rank, e.g., we might write ``$x\in [l, r]$'' to denote that the rank of $x$ lies in the range $[l, r]$. 
Comparing two elements, $x$ and $y$, yields an outcome of either $x<y$ or $y>x$.
A typical subroutine in our algorithms is to draw elements using sampling with replacement,
so multiple copies of an element may appear in a set.
When two copies of the same
element are compared, the tie is broken using any arbitrary (but consistent) ordering among the copies.

In our fault model, there is a standard strategy called \emph{majority vote} for reducing the ``error probability'' of comparing two elements.
We state this strategy as follows.

\newcommand{\lemmajorityvote}{\emph{(Majority Vote)}
For any error probability $p \in [0, \frac{1}{2} )$, there exists a postive integer $\MV_p$ such that a strategy that compares two elements $2 \MV_p \cdot t+1$ times and returns the majority result succeeds with probability at least $1- e^{-t}$, where $c_p = \lceil \frac{4(1-p)}{(1-2p)^2} \rceil$.
The exact failure probability of this strategy is 
\begin{equation*}\sum_{i=0}^{\MV_p\cdot t} {2\MV_p\cdot t+1\choose i}(1-p)^i p^{2\MV_p\cdot t+1-i}.\end{equation*}}

\begin{lemma}\label{lem-majority-vote}
\lemmajorityvote
\end{lemma}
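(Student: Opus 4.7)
The plan is to recognize the majority vote failure as a binomial tail probability and apply a standard Hoeffding bound. Let $S$ denote the number of correct comparisons out of $N := 2c_p t + 1$ independent trials, each of which is correct with probability $1-p$. Then $S \sim \operatorname{Binomial}(N, 1-p)$, and the majority vote returns the wrong outcome exactly when at most $c_p t$ of the $N$ trials are correct. Hence the failure probability equals
\[ \Pr[S \leq c_p t] \;=\; \sum_{i=0}^{c_p t} \binom{N}{i}(1-p)^i p^{N-i}, \]
which is precisely the expression stated in the lemma.

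To bound this tail by $e^{-t}$, I would observe that $c_p t < N/2 \leq (1-p) N = \mathbb{E}[S]$ (using $p < \tfrac{1}{2}$), so $\{S \leq c_p t\} \subseteq \{S \leq N/2\}$. Applying Hoeffding's inequality to the sum of $N$ i.i.d.\ Bernoulli$(1-p)$ variables yields
\[ \Pr[S \leq N/2] \;\leq\; \exp\!\left(-2N\left((1-p)-\tfrac{1}{2}\right)^{2}\right) \;=\; \exp\!\left(-\tfrac{1}{2}N(1-2p)^{2}\right), \]
and substituting $N \geq 2c_p t$ bounds the failure probability by $\exp\!\left(-c_p t (1-2p)^{2}\right)$.

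The definition $c_p = \lceil 4(1-p)/(1-2p)^2 \rceil$ then ensures $c_p(1-2p)^{2} \geq 4(1-p) \geq 2 \geq 1$ (again using $p < \tfrac{1}{2}$), so the failure probability is at most $e^{-t}$, as claimed. I do not anticipate any real obstacle: this is a textbook application of Hoeffding's inequality, and the constant prescribed for $c_p$ is comfortably larger than what the final inequality strictly requires. The only very minor care needed is that $2c_p t$ be a nonnegative integer, which we may assume without loss of generality (the case $t = 0$ is trivial since $e^{-t} = 1$).
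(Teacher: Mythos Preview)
Your proof is correct and follows the same overall template as the paper: identify the failure event as the lower tail of a binomial and apply a standard concentration inequality. The one substantive difference is the choice of inequality. The paper invokes the multiplicative Chernoff bound (its Lemma~\ref{lem-chernoff}) with $\delta = \frac{1-2p}{2-2p}$, obtaining the exponent $c_p t \cdot \frac{(1-2p)^2}{4(1-p)}$, which is \emph{exactly} matched by the prescribed value $c_p = \big\lceil \tfrac{4(1-p)}{(1-2p)^2}\big\rceil$; this explains why that particular constant appears in the statement. Your Hoeffding bound yields the stronger exponent $c_p t (1-2p)^2 \geq 4(1-p)t > 2t$, so the constant looks wasteful from your point of view---which is precisely the slack you noticed. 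Both arguments are equally valid; the paper's route makes the definition of $c_p$ look natural, while yours is a line shorter and shows the lemma holds with room to spare.
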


%%%%%%%%%%%%%%%%%%%%%%%%%%%%%%%%%%%%%%%%%%%%%%%%%%%%%%%%%%%%%%%%%%%
%%%%%%%%%%%%%%%%%%%%%%%%%%%%%%%%%%%%%%%%%%%%%%%%%%%%%%%%%%%%%%%%%%%
%%%%%%%%%%%%%%%%%%%%%%%%%%%%%%%%%%%%%%%%%%%%%%%%%%%%%%%%%%%%%%%%%%%
\section{Top Level of Algorithm}\label{sec-reduction}
%%%%%%%%%%%%%%%%%%%%%%%%%%%%%%%%%%%%%%%%%%%%%%%%%%%%%%%%%%%%%%%%%%%
%%%%%%%%%%%%%%%%%%%%%%%%%%%%%%%%%%%%%%%%%%%%%%%%%%%%%%%%%%%%%%%%%%%
%%%%%%%%%%%%%%%%%%%%%%%%%%%%%%%%%%%%%%%%%%%%%%%%%%%%%%%%%%%%%%%%%%%

The high-level idea is to reduce solving $\FT(k,\varepsilon)$ on $n$ elements with probability at least $1-Q$ to solving $\FT(\frac{m}{2},\frac{3}{8})$ on $m=\Theta(\log\frac{1}{Q})$ elements with probability at least $1-\frac{Q}{2}$.
Specifically, if a \emph{relevant} element can be selected with probability at least $\frac{8}{9}$,
then $m$ selected elements, for some $m=\Theta(\log \frac{1}{Q})$, contain at least $\frac{7}{8}m$ relevant elements with probability at least $1-\frac{Q}{2}$; see Lemma~\ref{lem-log-Q-sample} in Appendix~\ref{ap-reduction}. 
In this situation, at least $2\cdot(\frac{7}{8}-\frac{1}{2})\cdot m = 2\cdot \frac{3}{8} m$ elements around the median, i.e., the range $(\frac{1}{8}m,\frac{7}{8}m]$, are relevant.
Therefore, solving the $\FT(\frac{m}{2},\frac{3}{8})$ problem on these $m$ elements with probability at least $1-\frac{Q}{2}$ yields a relevant element with probability at least $1-2\cdot \frac{Q}{2}=1-Q$.

Section~\ref{sec-selection} will present an approach that uses $O(\frac{k}{n}\varepsilon^{-2})$ comparisons to select a relevant element with probability at least $\frac{8}{9}$,
by which the above reduction takes $O(\frac{k}{n}\varepsilon^{-2}\log\frac{1}{Q})$ comparisons. 
In the remaining of this section, we will explain how to solve $\FT(\frac{m}{2},\frac{3}{8})$ with probability $1-\frac{Q}{2}$ efficiently both in expectation and in determination cases.

We first design a simple trial-and-error method that uses \emph{expected} $O(\log \frac{1}{Q})$ comparisons to select an element from $(\frac{1}{8}m,\frac{7}{8}m]$ with probability at least $1-\frac{Q}{2}$:
\begin{quote} 
	Repeatedly pick an element randomly and verify if its rank lies in $(\frac{1}{8}m,\frac{7}{8}m]$ until one element passes the verification.
\end{quote}
Since $(\frac{1}{8}m,\frac{7}{8}m]$ contains $\frac{3}{4}m$ elements, 
the expected number of repetitions before encountering a correct element is only $O(1)$. 
Therefore, the key is to implement the verification step such that the method returns a correct element with probability at least $1-\frac{Q}{2}$ and the expected number of comparisons is $O(\log \frac{1}{Q})$.

We implement the \emph{verification step} for an element $x$ based on a simple experiment that randomly picks three other elements, and checks if $x$ is neither the smallest nor the largest among the four elements.
The probability that the if-condition holds is $1-(\frac{r_x}{m})^3-(1-\frac{r_x}{m})^3$ where $r_x$ is the rank of $x$ among the $m$ elements.
Also, the check can be conducted with success probability at least $\frac{17}{18}$ using $O(1)$ comparisons (by plugging in $n=4$, $Q=\frac{1}{36}$ into Theorem~\ref{thm-feige-selection} twice with $k=1$ and $k=4$.)
Therefore, if $x\in (\frac{2}{8}m,\frac{6}{8}m]$, the experiment succeeds with probability \emph{at least} $\frac{9}{16}\cdot\frac{17}{18}=\frac{17}{32}$,
while if $x\in [1,\frac{1}{8}m]$ or $x\in (\frac{7}{8}m, m]$,
the experiment succeeds with probability \emph{at most} $\frac{21}{64}+\frac{1}{18}=\frac{221}{576}\leq \frac{15}{32}$.

In the above derivation, we ignore two ranges $(\frac{1}{8}m,\frac{2}{8}m]$ and $(\frac{6}{8}m,\frac{7}{8}m]$ since returning an element in these two ranges is safe and the considered range $(\frac{2}{8}m,\frac{6}{8}m]$ contains enough elements.
Based on the above calculated probabilities, we can conceptually treat the above simple experiment as an unreliable comparison with error probability $\frac{15}{32}$.
By Lemma~\ref{lem-majority-vote}, if the verification step conducts this simple experiment $2\cdot\MV_{15/32}\ln \frac{2}{Q}+1$ times and takes the majority result, its success probability is at least $1-\frac{Q}{2}$,

Now, we are ready to analyze the expected number of comparisons and the success probability of our trial-and-error method.
First, a single round returns an element in $(\frac{2}{8}m,\frac{6}{8}m]$ with probability \emph{at least} $\frac{1}{2}\cdot (1-\frac{Q}{2})\geq \frac{1}{4}$, and thus the probability to conduct the $i$-th round is at most $(\frac{3}{4})^{i-1}$.
Therefore, the expected number of comparisons is at most $\sum_{i\geq 1}(\frac{3}{4})^{i-1}\cdot (2\cdot\MV_{15/32}\ln \frac{2}{Q}+1)=O(\log \frac{1}{Q})$. 
Moreover, 
a single round returns an element in $[1,\frac{1}{8}m]$ or $(\frac{7}{8}m, m]$ with probability \emph{at most} $\frac{1}{4}\cdot \frac{Q}{2}=\frac{Q}{8}$, 
so the failure probability is at most $\sum_{i\geq 1}(\frac{3}{4})^{i-1}\cdot \frac{Q}{8}= \frac{Q}{2}$, concluding the following theorem:

\newcommand{\thmexpected}{It takes \textbf{expected} $O(\frac{k}{n}\varepsilon^{-2}\log\frac{1}{Q})$ comparisons to solve the \mbox{$\FT(k,\varepsilon)$} problem with probability at least $1-Q$.}

\begin{theorem}\label{thm-expected}
	\thmexpected
\end{theorem}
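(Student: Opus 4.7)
The plan is to assemble three ingredients that the preceding text has already prepared. The algorithm proceeds in two phases. In the \emph{sampling phase}, I would call $m = \Theta(\log \frac{1}{Q})$ independent copies of the ``select a relevant element with probability $\geq \frac{8}{9}$'' sub-routine promised in Section~\ref{sec-selection}, each costing $O(\frac{k}{n}\varepsilon^{-2})$ comparisons. A Chernoff-type bound (the forthcoming Lemma~\ref{lem-log-Q-sample}) guarantees that, except with probability at most $\frac{Q}{2}$, at least $\frac{7}{8}m$ of the sampled elements are relevant; in this event, every element whose rank within the sample lies in $(\frac{1}{8}m, \frac{7}{8}m]$ is necessarily relevant in the original set. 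The sampling phase therefore reduces the original problem to a single $\FT(\frac{m}{2}, \frac{3}{8})$ instance on $m$ elements, at total cost $O(\frac{k}{n}\varepsilon^{-2}\log \frac{1}{Q})$.

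In the \emph{post-processing phase}, I would solve this $\FT(\frac{m}{2}, \frac{3}{8})$ instance using the trial-and-error scheme described in the excerpt. Each trial picks a uniformly random candidate $x$ from the sample and runs a verification step; the verification repeats a ``three-companion experiment'' $\Theta(\log \frac{1}{Q})$ times and returns the majority verdict. A single experiment is implemented via two invocations of $\Sel$ with $n=4$ using Theorem~\ref{thm-feige-selection}, so it behaves like a noisy comparison whose success probability is at least $\frac{17}{32}$ when $x \in (\frac{2}{8}m, \frac{6}{8}m]$ and at most $\frac{15}{32}$ when $x$ lies outside $(\frac{1}{8}m, \frac{7}{8}m]$. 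Applying Lemma~\ref{lem-majority-vote} to this constant gap drives the per-trial error below $\frac{Q}{8}$ using $O(\log \frac{1}{Q})$ comparisons.

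Putting everything together: each trial accepts an element of $(\frac{2}{8}m, \frac{6}{8}m]$ with probability at least $\frac{1}{4}$, so the number of trials is dominated by a geometric random variable with mean $O(1)$, giving an expected total of $O(\log \frac{1}{Q})$ comparisons for the post-processing phase. A geometric sum over rounds bounds the probability of ever accepting an element outside $(\frac{1}{8}m, \frac{7}{8}m]$ by $\sum_{i\geq 1}(\frac{3}{4})^{i-1}\cdot \frac{Q}{8} = \frac{Q}{2}$. Union-bounding this against the $\frac{Q}{2}$ failure of the sampling phase yields overall failure at most $Q$, and the expected cost is dominated by the sampling phase, giving the claimed $O(\frac{k}{n}\varepsilon^{-2}\log \frac{1}{Q})$ bound.

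The subtle step is the post-processing analysis, since the whole scheme turns on the constant gap between the $\frac{17}{32}$ and $\frac{15}{32}$ success probabilities, a gap created precisely by carving out the ``safe'' buffer ranges $(\frac{1}{8}m, \frac{2}{8}m]$ and $(\frac{6}{8}m, \frac{7}{8}m]$ from the decision problem. I would therefore take care to confirm that the cubic expressions $1 - (r_x/m)^3 - (1-r_x/m)^3$ do separate the two cases with enough slack to absorb the $\frac{1}{18}$ per-experiment noise of Theorem~\ref{thm-feige-selection}; this is the only place where routine calculations must be executed precisely, because the constants feed directly into the repetition count required by Lemma~\ref{lem-majority-vote} and hence into the $O(\log \frac{1}{Q})$ bound on each verification.
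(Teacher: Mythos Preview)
Your proposal is correct and follows essentially the same two-phase architecture as the paper: a sampling phase invoking the Section~\ref{sec-selection} subroutine $\Theta(\log\frac{1}{Q})$ times (with Lemma~\ref{lem-log-Q-sample} controlling its failure), followed by the trial-and-error verification scheme with the $\frac{17}{32}$ versus $\frac{15}{32}$ gap, the majority vote of Lemma~\ref{lem-majority-vote}, and the same geometric-series accounting for both expected cost and failure probability. The only cosmetic difference is that the paper states the verification error as $\frac{Q}{2}$ and then multiplies by the $\frac{1}{4}$ chance of picking a bad element to reach $\frac{Q}{8}$, whereas you jump directly to the $\frac{Q}{8}$ per-trial figure; the arithmetic and the final $\sum_{i\geq 1}(\frac{3}{4})^{i-1}\cdot\frac{Q}{8}=\frac{Q}{2}$ bound are identical.
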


Finally, to derive a deterministic bound, we note that the simple experiment in the
verification step may be viewed as a \emph{biased} coin toss. 
From this viewpoint, we are able to turn the $\FT(\frac{m}{2},\frac{3}{8})$ problem into finding a coin
with bias bigger than $\frac{15}{32}$, given that at least half of the coins have bias at least $\frac{17}{32}$. 
Grossman and Moshkovitz \cite{GrossmanM20} provided an algorithm that solves the new problem with probability $1-\frac{Q}{2}$ using $O(\log \frac{1}{Q} \cdot (\log \log \frac{1}{Q})^2)$ coin tosses,
leading to the following theorem.

\newcommand{\thmdeterminsticfaster}{It takes $O(\frac{k}{n}\varepsilon^{-2}\log\frac{1}{Q}+\log\frac{1}{Q} (\log \log \frac{1}{Q})^2)$ comparisons to solve the $\FT(k,\varepsilon)$ problem with probability at least $1-Q$.}

\begin{theorem}\label{thm-determinstic-faster}
	\thmdeterminsticfaster
\end{theorem}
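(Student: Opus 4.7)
The plan is to reuse the two-stage top-level reduction from the proof of Theorem~\ref{thm-expected} verbatim and to replace only the expected-time trial-and-error routine that solves $\FT(\frac{m}{2}, \frac{3}{8})$ on the $m = \Theta(\log \frac{1}{Q})$ sampled elements. The first stage---producing these $m$ elements by repeated calls to the constant-probability relevant-element selector promised in Section~\ref{sec-selection}---already uses a \emph{deterministic} $O(\frac{k}{n}\varepsilon^{-2}\log\frac{1}{Q})$ comparisons and, with probability at least $1 - \frac{Q}{2}$, yields a sample in which every position in $(\frac{m}{8}, \frac{7m}{8}]$ is occupied by an element that is relevant in the original instance. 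What remains is to locate such a position with probability at least $1 - \frac{Q}{2}$ using a number of comparisons bounded \emph{deterministically} by $O(\log \frac{1}{Q}(\log\log \frac{1}{Q})^2)$.

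For this second stage I would recast the problem as the noisy biased-coin search of Grossman and Moshkovitz~\cite{GrossmanM20}. To each of the $m$ candidates $x$ I attach a coin whose single toss is exactly the $O(1)$-comparison experiment already analysed in Section~\ref{sec-reduction}: sample three companions uniformly at random (with replacement) from the $m$ elements and invoke $\Sel$ twice to check that $x$ is neither the minimum nor the maximum of the resulting four elements. Because the three companions and the $\Sel$ internal randomness are drawn freshly on every toss, repeated tosses of the same coin are i.i.d., and the bias of coin $x$ depends only on the rank $r_x$ of $x$ among the $m$ samples. The explicit calculation in Section~\ref{sec-reduction} shows that coins with $r_x \in (\frac{m}{4}, \frac{3m}{4}]$ have bias at least $\frac{17}{32}$, whereas coins with $r_x \in [1, \frac{m}{8}] \cup (\frac{7m}{8}, m]$ have bias at most $\frac{15}{32}$. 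In particular at least $\frac{m}{2}$ of the $m$ coins are ``good'' in the sense required by~\cite{GrossmanM20}.

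Their algorithm then identifies, with probability at least $1 - \frac{Q}{2}$, a coin of bias strictly greater than $\frac{15}{32}$ using $O(\log \frac{1}{Q}(\log\log \frac{1}{Q})^2)$ tosses; any such coin must correspond to an element of rank in $(\frac{m}{8}, \frac{7m}{8}]$, which under the good event of the first stage is relevant in the original instance. A union bound over the two failure events of probability $\frac{Q}{2}$ gives overall success probability at least $1 - Q$, and the total comparison count is the sum of $O(\frac{k}{n}\varepsilon^{-2}\log\frac{1}{Q})$ from the first stage and $O(1) \cdot O(\log \frac{1}{Q}(\log\log \frac{1}{Q})^2)$ from the second, matching the bound in the theorem.

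The substantive work here is carried out by the Grossman--Moshkovitz oracle; the one place where a naive implementation could go wrong is the bookkeeping that guarantees the coin tosses are genuine i.i.d.\ Bernoulli trials with a rank-dependent bias, which is why I would insist on resampling the three companions on every toss rather than reusing a fixed triple. The constant gap between $\frac{17}{32}$ and $\frac{15}{32}$ comfortably exceeds the separation required by~\cite{GrossmanM20}, so no further parameter tuning is needed beyond adjusting the hidden constant in $m = \Theta(\log \frac{1}{Q})$ to absorb the two $\frac{Q}{2}$ failure probabilities; I do not foresee a genuine obstacle beyond this.
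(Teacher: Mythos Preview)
Your proposal is correct and follows essentially the same approach as the paper: reuse the first stage of the top-level reduction verbatim, then reinterpret the verification experiment as a biased coin (bias $\geq \tfrac{17}{32}$ for central ranks, $\leq \tfrac{15}{32}$ for extreme ranks) and invoke the Grossman--Moshkovitz algorithm~\cite{GrossmanM20} to locate a coin of bias above $\tfrac{15}{32}$ in $O(\log\frac{1}{Q}(\log\log\frac{1}{Q})^2)$ tosses. Your remark about resampling the three companions on every toss to ensure genuine i.i.d.\ trials is a detail the paper leaves implicit but is indeed needed.
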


%%%%%%%%%%%%%%%%%%%%%%%%%%%%%%%%%%%%%%%%%%%%%%%%%%%%%%%%%%%%%%%%%%%
%%%%%%%%%%%%%%%%%%%%%%%%%%%%%%%%%%%%%%%%%%%%%%%%%%%%%%%%%%%%%%%%%%%
%%%%%%%%%%%%%%%%%%%%%%%%%%%%%%%%%%%%%%%%%%%%%%%%%%%%%%%%%%%%%%%%%%%
\section{Approximate Median Selection}\label{sec-median}
%%%%%%%%%%%%%%%%%%%%%%%%%%%%%%%%%%%%%%%%%%%%%%%%%%%%%%%%%%%%%%%%%%%
%%%%%%%%%%%%%%%%%%%%%%%%%%%%%%%%%%%%%%%%%%%%%%%%%%%%%%%%%%%%%%%%%%%
%%%%%%%%%%%%%%%%%%%%%%%%%%%%%%%%%%%%%%%%%%%%%%%%%%%%%%%%%%%%%%%%%%%

We attempt to select an element in $(\frac{n}{2}-n\varepsilon,\frac{n}{2}+n\varepsilon]$, i.e., $\bm{k=\frac{n}{2}}$, with probability at least $1-\frac{1}{18}$ using only $O(\varepsilon^{-2})$ comparisons. This algorithm will then be applied in Section~\ref{sec-selection} as a subroutine.
A straightforward method, denoted by $\SM(\varepsilon)$, picks $m=\Theta(\varepsilon^{-2})$ elements randomly to make their median \emph{relevant} with probability at least $1-\frac{1}{72}$ and applies the $\Sel(\frac{m}{2},\frac{1}{72})$ algorithm (Theorem~\ref{thm-feige-selection}), resulting in a failure probability of at most $\frac{1}{36}$.
However, the $\Sel(\frac{m}{2},\frac{1}{72})$ algorithm takes $O(m\log\frac{m}{1/72})=O(\varepsilon^{-2}\log\varepsilon^{-1})$ comparisons with an \emph{extra} logarithmic factor.
To achieve $O(\varepsilon^{-2})$ comparisons,
we will ``purify'' the input elements in a way that the ratio of relevant elements is increasing while the underlying median is still relevant.
Once the ratio of relevant elements becomes a constant fraction, i.e., from $2\varepsilon$ to $O(1)$, we can afford to apply the $\SM$ algorithm.
We assume that $\varepsilon<\mb$ since if $\varepsilon\geq \mb$, the $\SM(\varepsilon)$ algorithm takes only $O(\varepsilon^{-2}\log\varepsilon^{-1})=O(1)$ comparisons.

A major difficulty to overcome in the purifying process is the following: 
if we consider three elements that are each relevant with probability
$\rho$, then, even in the absence of comparison faults,
their median is relevant with probability at most $\frac{3}{2} \rho + O(\rho^2)$,
which is a lot less than $3 \rho$.
Thus, one risks running out of elements long before the ratio of relevant elements becomes a constant.
This issue remains if we replace three by a larger constant,
and it applies to any algorithm that works in a non-constant number of phases,
including algorithms that more closely resemble Quickselect. Those algorithms
would need to start with $\Omega(\varepsilon^{-(2 + \delta)})$ elements
for some $\delta > 0$ and hence cannot achieve the $O(\varepsilon^{-2})$ bound.

To settle the above issue, we maintain a multiset of elements and re-sample from this multiset at every phase. 
Our re-sampling method allows us to decrease the number of elements by less than a factor of $\frac{3}{2}$, so we can avoid running out of elements.

The algorithm is sketched as follows:
\begin{enumerate}
	\item For $1\leq i\leq L$, generate a multiset
	 $M_i$ of $n_i$ elements by repeatedly picking three elements from $M_{i-1}$ \emph{randomly} and selecting the median of the three using a symmetric median selection algorithm (Lemma~\ref{lem-symmetry-median} below).
	\item Apply the $\SM(\bm{\varepsilon_L})$ algorithm on $M_L$.
\end{enumerate}
Initially, $M_0=S$, $n_0=n$, $\varepsilon_0=\varepsilon$. $M_i$ is called \emph{\textbf{good}} if all elements in the range $(\frac{n_i}{2}-n_i\varepsilon_i,\frac{n_i}{2}+n_i\varepsilon_i]$ are \emph{relevant}.
Moreover, $n_i$ is decreasing with $i$ while $\varepsilon_i$ is increasing with $i$, and $L=\min\{i\mid \varepsilon_{i}\geq \mb\}$, i.e., the minimum of number of rounds such that at least $2\cdot\mb=\frac{1}{3}$ of the elements around the middle is relevant.
The rest of this section illustrates the idea behind this process and implements these parameters $n_i$ and $\varepsilon_i$.

\newcommand{\lemsymmetrymedian}{For \textbf{three} elements, consider the following median selection algorithm:
	\begin{enumerate}
		\item For each pair of elements, apply the majority vote strategy with $2c_p\cdot 4+1$ comparisons (Lemma~\ref{lem-majority-vote}), and assign a point to the element that attains the majority result.
		\item Return the element with exactly one point. If all three elements get exactly one point, return one of them uniformly at random.
	\end{enumerate}
	The above algorithm returns the median with probability at least $1-\frac{1}{13}$, and returns the minimum and the maximum with the same probability, i.e., at most $\frac{1}{26}$.}
\begin{lemma}\label{lem-symmetry-median}
	\lemsymmetrymedian
\end{lemma}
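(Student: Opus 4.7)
The plan is to reduce the analysis to a case enumeration over the $2^3 = 8$ outcomes of the three pairwise majority votes and to extract the claimed symmetry from an explicit involution on these outcomes. Let $q$ denote the (common) failure probability of a single pairwise majority vote; by Lemma~\ref{lem-majority-vote} with $t = 4$, we have $q \leq e^{-4}$. Since the three votes involve disjoint pairs of elements, they are mutually independent under our fault model, so each vote returns its correct answer with probability $1 - q$ independently of the other two.

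Order the three inputs as $x_1 < x_2 < x_3$ and enumerate the eight flip-patterns $(a_{12}, a_{13}, a_{23}) \in \{0, 1\}^3$, recording the resulting point distribution for each. A direct check shows that the two patterns $(0,0,0)$ and $(1,1,1)$ leave the median $x_2$ as the unique one-point element; four patterns produce a unique one-point element equal to $x_1$ (namely $(1,0,0)$ and $(0,1,1)$) or to $x_3$ (namely $(0,0,1)$ and $(1,1,0)$); and the two ``cyclic'' patterns $(0,1,0)$ and $(1,0,1)$ give every element exactly one point, so the uniform tiebreak splits their probability mass equally among $x_1, x_2, x_3$. The symmetry claim then follows from the involution $\phi\colon (a_{12}, a_{13}, a_{23}) \mapsto (a_{23}, a_{13}, a_{12})$, which fixes the median-returning and cyclic patterns while pairing each min-returning pattern with a max-returning one. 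Because all three votes fail with the same probability, $\phi$-paired outcomes carry equal probability, yielding $\Prpr{\text{return } x_1} = \Prpr{\text{return } x_3}$ exactly, independently of the value of $q$.

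Summing the contributions yields $\Prpr{\text{return } x_1} = \Prpr{\text{return } x_3} = q(1-q)^2 + q^2(1-q) + \tfrac{1}{3}\bigl[q(1-q)^2 + q^2(1-q)\bigr] = \tfrac{4}{3} q(1-q)$, and the median probability is the complement $1 - \tfrac{8}{3} q(1-q)$. The one mildly delicate step is calibrating the constant in Lemma~\ref{lem-majority-vote}: I would verify that $t = 4$ (hence $q \leq e^{-4}$) is just large enough to push $\tfrac{4}{3} q(1-q) \leq \tfrac{1}{26}$ and $\tfrac{8}{3} q(1-q) \leq \tfrac{1}{13}$. Apart from this numerical check, the rest is a mechanical enumeration powered by the symmetry observation.
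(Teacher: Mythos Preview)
Your proposal is correct and follows essentially the same route as the paper: enumerate the $2^3$ outcome patterns of the three pairwise majority votes, observe the min/max symmetry (you phrase it via the involution $\phi$, the paper via a direct case count), and arrive at the same exact expression $\tfrac{4}{3}q(1-q)$ for the extreme-return probability with $q\le e^{-4}$. One small wording fix: the three pairs are \emph{distinct}, not ``disjoint'' (they share elements); independence holds because the fault model makes comparisons on different pairs independent, not because the pairs are disjoint.
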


\newcommand{\lemselectedmedianthree}{If $M_{i-1}$ is good, then each element in $M_i$ is small (resp. large) with probability at most $\frac{1}{2}-\frac{4}{3}\varepsilon_{i-1}$.}

The purifying process is inspired by a simple observation:
a randomly picked element is relevant with probability $2\varepsilon$, while the \emph{median} of three randomly picked elements is relevant with probability much greater than $2\varepsilon$.
Let $E_S$ denote the event that the median of  three randomly picked elements is small. Then,
\[\Pr[E_S]=3\left(\frac{1}{2}-\varepsilon \right)^2\left(\frac{1}{2}+\varepsilon \right)+\left(\frac{1}{2}-\varepsilon\right)^3=\frac{1}{2}-\frac{3}{2}\varepsilon+2\varepsilon^3.\]
If $\varepsilon<\mb$, then $\Pr[E_S]\leq \frac{1}{2}-\frac{3}{2}\varepsilon+2(\mb)^2\varepsilon=1-\frac{13}{9}\varepsilon$.
By Lemma~\ref{lem-symmetry-median},
the median selection returns the median with probability at least $1-\frac{1}{13}$, and returns the minimum (resp. the maximum) with probability at most $\frac{1}{26}$. 
A simple calculation, together with the above arguments, gives the following lemma:
\begin{lemma}\label{lem-selected-median-three}
	\lemselectedmedianthree
\end{lemma}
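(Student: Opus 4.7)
The plan is to condition on how many of the three random picks from $M_{i-1}$ are small, then combine this with the output distribution of the symmetric median algorithm from Lemma~\ref{lem-symmetry-median}. The first step is to convert the goodness of $M_{i-1}$ into a sampling bound: since every element whose $M_{i-1}$-rank lies in $(\frac{n_{i-1}}{2}-n_{i-1}\varepsilon_{i-1}, \frac{n_{i-1}}{2}+n_{i-1}\varepsilon_{i-1}]$ is relevant, all small elements of $M_{i-1}$ must occupy strictly lower-rank positions, so the probability $s$ that a single random pick is small satisfies $s \leq \frac{1}{2} - \varepsilon_{i-1}$; the symmetric bound holds for large elements.

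Next, let $q$ denote the common probability that the symmetric median algorithm returns the true minimum of its three inputs (equal, by Lemma~\ref{lem-symmetry-median}, to the probability it returns the true maximum), with $q \leq \frac{1}{26}$. Crucially, these probabilities depend only on the algorithm's internal coin tosses, so they are independent of the three sampled elements. Conditioning on the number of small picks among the three i.i.d.\ samples, and summing the four cases (three small, always outputs small; exactly two, where the output fails to be small only when the algorithm returns the max; exactly one, where the output is small only when the algorithm returns the min; none small, never outputs small), I obtain
\[
\Pr[\text{output is small}] \;=\; s^3 + 3s^2(1-s)(1-q) + 3s(1-s)^2 q \;=\; \bigl(s^3 + 3s^2(1-s)\bigr) + 3q\,s(1-s)(1-2s).
\]

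This expression is monotone increasing in $s$ on $[0,\tfrac{1}{2}]$ and in $q$, so it is maximized at $s = \tfrac{1}{2}-\varepsilon_{i-1}$ and $q = \tfrac{1}{26}$. Plugging in yields $\tfrac{1}{2} - \tfrac{75}{52}\varepsilon_{i-1} + \tfrac{23}{13}\varepsilon_{i-1}^3$, and the assumption $\varepsilon_{i-1} < \mb$ gives $\varepsilon_{i-1}^2 \leq \tfrac{1}{36}$, which is small enough for the cubic term to be absorbed by the linear slack $\tfrac{75}{52}-\tfrac{4}{3} = \tfrac{17}{156}$. This yields $\Pr[\text{output is small}] \leq \tfrac{1}{2} - \tfrac{4}{3}\varepsilon_{i-1}$; the bound for large elements follows from the same calculation with the roles swapped, using the min/max symmetry of Lemma~\ref{lem-symmetry-median}. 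The main obstacle is arithmetic bookkeeping: the symmetry of Lemma~\ref{lem-symmetry-median} makes the first-order contributions from the min and max cases nearly cancel, and the threshold $\varepsilon < \mb$ is calibrated precisely so that the leftover $\tfrac{17}{156}$ slack past $\tfrac{4}{3}\varepsilon_{i-1}$ suffices to swallow the $\varepsilon_{i-1}^3$ error.
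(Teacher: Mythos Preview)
Your proof is correct and follows essentially the same approach as the paper: both condition on the number of small picks among the three samples, use the min/max symmetry of Lemma~\ref{lem-symmetry-median} to write the output-is-small probability as $(3s^2-2s^3)+q\cdot 3s(1-s)(1-2s)$, maximize at $s=\tfrac{1}{2}-\varepsilon_{i-1}$ and $q=\tfrac{1}{26}$ to obtain $\tfrac{1}{2}-\tfrac{75}{52}\varepsilon_{i-1}+\tfrac{23}{13}\varepsilon_{i-1}^3$, and then absorb the cubic term using $\varepsilon_{i-1}<\tfrac{1}{6}$. The only mild imprecision is the blanket claim of monotonicity in $s$ before fixing $q$; strictly speaking one first uses that the coefficient $3s(1-s)(1-2s)\geq 0$ on $[0,\tfrac12]$ to set $q=\tfrac{1}{26}$, and then verifies monotonicity in $s$ (as the paper does via $f'(x)>0$), but the conclusion is unaffected.
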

By Lemma~\ref{lem-selected-median-three}, it is feasible to set $\varepsilon_i=(\frac{5}{4})^{i}\cdot \varepsilon$, i.e., growing slightly slower than $\frac{4}{3}$.

The size $n_i$ is set as $\lceil 2000 \cdot i \cdot (\frac{4}{5})^{\bm{2i}} \cdot \varepsilon^{-2}\rceil$ to limit the number of comparisons and the failure probability. 
First, $n_i$ is linear in $\varepsilon^{-2}$ since the minimum number of elements to be looked at is $\Omega(\varepsilon^{-2})$ (Section~\ref{sec-lower}).
Second, to bound the total number of comparisons, $n_i$ should shrink exponentially with $i$.
Third, to bound the failure probability of the algorithm, the failure probability of the $i$-th round should also shrink exponentially with $i$. 
From the above three aspects, since the Chernoff bound (Lemma~\ref{lem-chernoff} in Appendix~\ref{ap-pre}) will be applied for the probabilitic analysis, $n_i$ should be linear in $i$, and the shrink factor of $n_i$ should be at least $(\frac{4}{5})^2$ to cancel out the square of the growth factor $\frac{5}{4}$ of $\varepsilon_i$.

\newcommand{\lemgoodmedian}{For $1\leq i\leq L$
	\[\Pr[M_i \mbox{ is NOT good }\mid M_{i-1} \mbox{ is good }]\leq 2\cdot e^{-5i}.\]}

Because the $\SM(\varepsilon_L)$ algorithm fails with probability at most $\frac{1}{36}$, it is sufficient to prove that $\Pr[M_L \mbox{ is good }]\geq 1-\frac{1}{36}$.
Let $E_i$ denote the event that $M_i$ is good.
By definition, $\Pr[E_0]=1$. With the Chernoff bound, we can prove the following lemma:
\begin{lemma}\label{lem-good-median}
	\lemgoodmedian 
\end{lemma}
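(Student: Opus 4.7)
The plan is to reduce the event that $M_i$ is not good to a tail event on two independent sums and then apply Hoeffding's inequality. Conditioned on $M_{i-1}$ being good, let $s_i$ (resp.\ $t_i$) denote the number of small (resp.\ large) elements in $M_i$; since small elements occupy the bottom $s_i$ ranks of $M_i$ and large elements the top $t_i$ ranks, the event ``$M_i$ is not good'' is contained in $\{s_i>\tfrac{n_i}{2}-n_i\varepsilon_i\}\cup\{t_i>\tfrac{n_i}{2}-n_i\varepsilon_i\}$. A union bound then reduces the lemma to showing
\[\Pr\!\left[s_i>\tfrac{n_i}{2}-n_i\varepsilon_i\,\Big|\,M_{i-1}\text{ good}\right]\le e^{-5i},\]
together with the symmetric bound for $t_i$.

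Each of the $n_i$ elements of $M_i$ is produced by an independent triplet-sampling-with-replacement and an independent invocation of the median algorithm of Lemma~\ref{lem-symmetry-median}, so the indicators summed in $s_i$ are i.i.d.\ given $M_{i-1}$. Lemma~\ref{lem-selected-median-three} bounds each indicator's mean by $\tfrac{1}{2}-\tfrac{4}{3}\varepsilon_{i-1}$, hence $\mathbb{E}[s_i\mid M_{i-1}]\le \tfrac{n_i}{2}-\tfrac{4}{3}n_i\varepsilon_{i-1}$. Since $\varepsilon_i=\tfrac{5}{4}\varepsilon_{i-1}$, the deviation above the mean that must occur for $M_i$ to fail is at least
\[\Delta_i=\left(\tfrac{4}{3}-\tfrac{5}{4}\right)n_i\varepsilon_{i-1}=\tfrac{n_i\varepsilon_{i-1}}{12}.\]
Hoeffding's inequality for bounded independent random variables then yields
\[\Pr[s_i\ge \mathbb{E}[s_i]+\Delta_i]\le \exp\!\left(-\tfrac{2\Delta_i^2}{n_i}\right)=\exp\!\left(-\tfrac{n_i\varepsilon_{i-1}^2}{72}\right).\]

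It remains to verify that this exponent is small enough. Using $n_i\ge 2000\,i\,(4/5)^{2i}\varepsilon^{-2}$ and $\varepsilon_{i-1}^2=(5/4)^{2(i-1)}\varepsilon^2$, the $\varepsilon$-factors cancel and the geometric terms collapse to $n_i\varepsilon_{i-1}^2\ge 2000\,i\cdot (4/5)^2=1280\,i$. Hence $\exp(-n_i\varepsilon_{i-1}^2/72)\le \exp(-1280\,i/72)\le e^{-5 i}$, and an identical argument for $t_i$ together with the union bound gives the claimed $2 e^{-5i}$.

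The main obstacle is not the concentration step itself but the accounting that makes it work: $\varepsilon_i$ must grow strictly slower than $\tfrac{4}{3}\varepsilon_{i-1}$ (the choice $\tfrac{5}{4}$ leaves a gap $\Delta_i=n_i\varepsilon_{i-1}/12$), and the decay factor $(4/5)^2$ hidden in $n_i$ is calibrated precisely to cancel the $(5/4)^2$ growth of $\varepsilon_{i-1}^2$, so that $n_i\varepsilon_{i-1}^2$ grows linearly in $i$ with a constant large enough to absorb both the $\tfrac{1}{72}$ Hoeffding factor and the target exponent $5$. Once these parameter choices are in place, the rest of the argument is a direct computation.
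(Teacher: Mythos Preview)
Your proof is correct and follows essentially the same strategy as the paper: reduce ``$M_i$ not good'' to the two tail events $\{s_i>\tfrac{n_i}{2}-n_i\varepsilon_i\}$ and $\{t_i>\tfrac{n_i}{2}-n_i\varepsilon_i\}$, use Lemma~\ref{lem-selected-median-three} to bound the mean of each indicator, and then apply a concentration inequality. The only difference is that you use Hoeffding's additive bound while the paper uses the multiplicative Chernoff bound of Lemma~\ref{lem-chernoff}; with the given parameters both leave plenty of room (your exponent is roughly $17.8\,i$, the paper's roughly $5.9\,i$), so this is a cosmetic rather than structural distinction.
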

By Lemma~\ref{lem-good-median}, we can lower bound $\Pr[E_L]$ as 
\[\Pr[E_L]=1-\Pr[\bigcup_{i=1}^L\overline{E_i}\mid E_{i-1}]\geq 1-\sum_{i=1}^L 2\cdot e^{-5i}\geq 1-4\cdot e^{-5}\geq 1- \frac{1}{36}.\]
By Lemma~\ref{lem-symmetry-median}, each median selection takes $O(1)$ comparisons, so the purifying process takes $O(\sum_{i=1}^L n_i)=O\big(\varepsilon^{-2}\sum_{i=1}^L i\cdot(\frac{4}{5})^{2i}\big)=O(\varepsilon^{-2})$ comparisons.
Since $\varepsilon_L\geq \mb$, the $\SM(\varepsilon_L)$ algorithm takes $O(1)$ comparisons, concluding the following theorem:

\newcommand{\thmmedianconstant}{It takes $O(\varepsilon^{-2})$ comparisons to select an element in $(\frac{n}{2}-n\varepsilon,\frac{n}{2}+n\varepsilon]$ with probability at least $1-\frac{1}{18}$.}

\begin{theorem}\label{thm-median-constant}
	\thmmedianconstant
\end{theorem}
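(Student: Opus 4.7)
The plan is to execute the two-stage procedure already outlined: run the purifying process with parameters $\varepsilon_i = (5/4)^i \varepsilon$ and $n_i = \lceil 2000\, i\, (4/5)^{2i}\, \varepsilon^{-2}\rceil$, stop at $L = \min\{i : \varepsilon_i \geq 1/6\}$, and then apply $\SM(\varepsilon_L)$ to $M_L$. Correctness and cost would then be driven directly by Lemmas~\ref{lem-symmetry-median}, \ref{lem-selected-median-three}, and \ref{lem-good-median}.

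For correctness, I would union bound over two failure sources. Writing $E_i$ for ``$M_i$ is good'', a conditioning chain together with Lemma~\ref{lem-good-median} yields
\[\Pr[\overline{E_L}] \leq \sum_{i=1}^L \Pr[\overline{E_i}\mid E_{i-1}] \leq \sum_{i=1}^L 2 e^{-5i} \leq 4 e^{-5} \leq \frac{1}{36}.\]
Conditional on $E_L$, every element in the middle range $(\frac{n_L}{2} - n_L \varepsilon_L, \frac{n_L}{2} + n_L \varepsilon_L]$ of $M_L$ has true $S$-rank in $(\frac{n}{2} - n\varepsilon, \frac{n}{2} + n\varepsilon]$, so any element returned from this range is a valid answer; the $\SM(\varepsilon_L)$ subroutine fails with probability at most $1/36$ by its definition (an application of Theorem~\ref{thm-feige-selection} after a standard sampling step). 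Combining the two failure events gives a total of at most $1/18$.

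For the cost, each three-element median in the purifying process uses $O(1)$ comparisons by Lemma~\ref{lem-symmetry-median}, so constructing $M_i$ costs $O(n_i)$ comparisons. Summing, $\sum_{i=1}^L n_i = O\bigl(\varepsilon^{-2} \sum_{i \geq 1} i\, (4/5)^{2i}\bigr) = O(\varepsilon^{-2})$ since the series converges geometrically. The final $\SM(\varepsilon_L)$ call adds only $O(1)$ more comparisons because $\varepsilon_L = \Theta(1)$.

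The sole delicate point is verifying that the chosen parameters indeed support the $e^{-5i}$ tail promised by Lemma~\ref{lem-good-median} with only $n_i = \Theta\bigl(i\,(4/5)^{2i}\,\varepsilon^{-2}\bigr)$ elements. Intuitively, Lemma~\ref{lem-selected-median-three} supplies a bias gap of $\Theta(\varepsilon_{i-1})$ on the indicator of each sampled element being small (or large), so a Chernoff bound over $n_i$ samples gives an exponent of $\Theta(\varepsilon_{i-1}^2 n_i)$; the $(4/5)^2$ shrink factor in $n_i$ exactly cancels the squared growth of $\varepsilon_{i-1}$, leaving a linear-in-$i$ exponent as needed. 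Everything else is an immediate invocation of one of the already established lemmas.
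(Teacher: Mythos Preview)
Your proposal is correct and follows essentially the same argument as the paper: union-bound the failure of the purifying rounds via Lemma~\ref{lem-good-median} to get $\Pr[\overline{E_L}]\le 4e^{-5}\le 1/36$, add the $1/36$ failure of $\SM(\varepsilon_L)$, and bound the cost by $\sum_i O(n_i)=O(\varepsilon^{-2})$ plus $O(1)$ for the final call. The only addition is your closing paragraph sketching the Chernoff calculation behind Lemma~\ref{lem-good-median}, which the paper defers to the appendix; this is extra commentary, not a different route.
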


%%%%%%%%%%%%%%%%%%%%%%%%%%%%%%%%%%%%%%%%%%%%%%%%%%%%%%%%%%%%%%%%%%%
%%%%%%%%%%%%%%%%%%%%%%%%%%%%%%%%%%%%%%%%%%%%%%%%%%%%%%%%%%%%%%%%%%%
%%%%%%%%%%%%%%%%%%%%%%%%%%%%%%%%%%%%%%%%%%%%%%%%%%%%%%%%%%%%%%%%%%%
\section{Approximate $k$-th Element Selection}\label{sec-selection}
%%%%%%%%%%%%%%%%%%%%%%%%%%%%%%%%%%%%%%%%%%%%%%%%%%%%%%%%%%%%%%%%%%%
%%%%%%%%%%%%%%%%%%%%%%%%%%%%%%%%%%%%%%%%%%%%%%%%%%%%%%%%%%%%%%%%%%%
%%%%%%%%%%%%%%%%%%%%%%%%%%%%%%%%%%%%%%%%%%%%%%%%%%%%%%%%%%%%%%%%%%%

We attempt to select an element in $(k-n\varepsilon, k+n\varepsilon]$ with probability at least $1-\frac{1}{9}$ using only $O(\frac{k}{n}\varepsilon^{-2})$ comparisons. 
Recall that $k> n\varepsilon$ as assumed in Remark~\ref{rm-ignorance}.
 If $n\varepsilon<k\leq 2n\varepsilon$, we halve the value of $\varepsilon$ so that $k > 2n\varepsilon$, which does not increase the asymptotic complexity.
Therefore, we can safely assume $k>2n\varepsilon$ afterwards. 
In this scenario, the straightforward approach mentioned in Section~\ref{sec-ind} requires $O\big(\frac{k}{n}\varepsilon^{-2}\log (k\varepsilon^{-1})\big)$ comparisons with an extra $\log (k\varepsilon^{-1})$ factor.
Another approach is to add $n-2k$ dummy smallest elements (so that the relevant elements lie in the middle) and to apply the algorithm in Section~\ref{sec-median} with $\frac{\varepsilon}{2}$,
leading to $O(\varepsilon^{-2})$ comparisons. 
As a result, both approaches are more expensive than $O(\frac{k}{n}\varepsilon^{-2})$.

At a high level,
our breakthrough is an iterative ``purifying'' process
that increases both the ratio of relevant elements and the relative position of $k$, i.e., the middle position of relevant elements, while ``controlling'' the relative position.
Once the relative position becomes a constant fraction of the remaining elements, e.g., $\frac{1}{8}$, we add dummy smallest elements and apply the approximate median selection algorithm in Section~\ref{sec-median}. 
As the ratio of relevant elements increases at the same time,
the resulting number of comparisons will be $O(\frac{k}{n}\varepsilon^{-2})$ instead of $O(\varepsilon^{-2})$.

The algorithm is sketched as follows:
\begin{enumerate}
	\item For $1\leq i\leq L$, generate a set $S_i$ of $n_i$ elements by repeatedly picking two elements from $S_{i-1}$ \emph{randomly} and selecting the minimum of the two using $6\MV_p+1$ comparisons (Lemma~\ref{lem-majority-vote}).
	\item Add $n_L-2k_l+2\varepsilon_L$ dummy smallest elements to $M_L$ and apply the approximate median selection algorithm in Section~\ref{sec-median} on $M_L$ with respect to $\varepsilon_L$. 
\end{enumerate}
Initially, $S_0=S$, $n_0=n$, $k_0=k$, $\varepsilon_0=\varepsilon$. $S_i$ is called \emph{\textbf{good}} if all elements in the range $(k_i-n_i\varepsilon_i,k_i+n_i\varepsilon_i]$ are \emph{relevant}. 
For ease of exposition, let $\beta_i$ denote $\frac{k_i}{n_i}$.
Both $\beta_i$ and $\varepsilon_i$ increase with $i$ while $n_i$ decreases with $i$, and we set $L=\min\{i \mid  \beta_i \geq \frac{1}{8}\}$.
Recall that $\beta=\frac{k}{n}$. 
We assume that $\beta<\frac{1}{8}$; otherwise, we conduct the second step directly, i.e., $L=0$.

The purifying process is based on a simple observation that the minimum of two randomly picked element is \emph{small} with probability
\[\underbrace{\left(\beta-\varepsilon\right)^2}_{\mbox{two small}}+\underbrace{2\left(\beta-\varepsilon\right)\left(1-\left(\beta-\varepsilon\right)\right)}_{\mbox{one small \& one non-small}}=2\left(\beta-\varepsilon\right)-\left(\beta-\varepsilon\right)^2,\]
while a randomly picked element is small with probability merely $\beta-\varepsilon$. 
By a similar calculation, the minimum of two randomly picked elements is \emph{relevant} with $4\varepsilon-\beta\cdot 4\varepsilon$.
Since $k$ is exactly the number of small elements plus half the number of relevant elements, the above derivation suggests the following formulation of $\beta_i$:
\[\beta_i:=\underbrace{2\left(\beta_{i-1}-\varepsilon_{i-1}\right)-\left(\beta_{i-1}-\varepsilon_{i-1}\right)^2}_{\Pr[\mbox{ small }]}+\underbrace{\left(2\varepsilon_{i-1}-\beta_{i-1}\cdot 2\varepsilon\right)}_{\Pr[\mbox{ relevant }]\div 2}.\]

These derivations need to adapt to the failure probability $q$ of selecting the minimum using $6\MV_p+1$ comparisons.
By Lemma~\ref{lem-majority-vote}, $q\leq e^{-3}<\frac{1}{20}$ and $q=\sum_{i=1}^{3\MV_p}(1-p)^ip^{6\MV_p+1-i}$. 
Then, a selected element in the first round is \emph{relevant} with probability
\[\underbrace{4\varepsilon^2}_{
	\mbox{two relevant}}+q\cdot\underbrace{ 2\cdot \left(\beta-\varepsilon\right)2\varepsilon}_{\mbox{one small \& one relevant}}+(1-q)\cdot \underbrace{2\cdot \left(1-\left(\beta+\varepsilon\right)\right)2\varepsilon}_{\mbox{one large \& one relevant}},\]
which is equal to $4\varepsilon\cdot \big((1-q)-(1-2q)\cdot\beta\big)$.
Since $\beta<\frac{1}{8}$ and $q<\frac{1}{20}$, the above probability is larger than $\frac{67}{40}\cdot 2\varepsilon$. 
Therefore, it is feasible to set $\bm{\varepsilon_{i}=(\frac{3}{2})^i\cdot \varepsilon}$,  i.e., growing slower than $\frac{67}{20}$.

\newcommand{\lemselectioninvariant}{For $0\leq i\leq L$, \[\beta_i > 2\varepsilon_i \mbox{\quad and\quad} \beta_{i}\leq 2^i\cdot \beta. \mbox{\quad Thus,\quad} \frac{k_i}{n_i}\leq 2^i\cdot \frac{k}{n}\mbox{\quad for\quad} 0\leq i\leq L.\]}

To fit the formulation of $\beta_i$ to the above failure probability $q$, a similar calculation yields that 
each selected element in the first round is \emph{small} with probability
\[\left(\beta-\varepsilon\right)^2+(1-q)\cdot 2\left(\beta-\varepsilon\right)\left(1-\left(\beta-\varepsilon\right)\right).\]
Since the relative position is the number of small elements plus half the number of relevant elements, it is feasible to set the value of $\beta_i$ as follows (after arrangement):
\[\beta_i:= \left(2\beta_{i-1}-\beta_{i-1}^2-\varepsilon_{i-1}^2\right)-2q\left(\beta_{i-1}-\beta_{i-1}^2-\varepsilon_{i-1}^2\right).\]
Moreover, we can prove by induction important properties of $\beta_i$ as stated below:
\begin{lemma}\label{lem-selection-invariant}
	\lemselectioninvariant
\end{lemma}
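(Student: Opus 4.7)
The plan is to prove both claims simultaneously by induction on $i$, for $i = 0, 1, \dots, L$.

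The base case $i = 0$ is immediate: $\beta_0 = \beta$ and $\varepsilon_0 = \varepsilon$, so $\beta_0 = 2^0 \cdot \beta$ trivially, and $\beta_0 > 2\varepsilon_0$ follows from the standing assumption $k > 2n\varepsilon$ made at the start of Section~\ref{sec-selection}.

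For the inductive step, I would first rewrite the recurrence in the slightly more convenient form
$\beta_i = 2(1-q)\beta_{i-1} - (1-2q)\bigl(\beta_{i-1}^2 + \varepsilon_{i-1}^2\bigr)$,
which is just a rearrangement of the definition given right before the lemma. The upper bound is essentially free from this form: since $q \in [0, \tfrac{1}{2})$ the subtracted quantity is nonnegative, so $\beta_i \leq 2(1-q)\beta_{i-1} \leq 2\beta_{i-1}$, and the inductive hypothesis then yields $\beta_i \leq 2 \cdot 2^{i-1}\beta = 2^i \beta$. The statement about $k_i/n_i$ is just a restatement of this inequality.

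The lower bound $\beta_i > 2\varepsilon_i = 3\varepsilon_{i-1}$ is the main step. Here I would use the inductive hypothesis $\beta_{i-1} > 2\varepsilon_{i-1}$ to absorb the $\varepsilon_{i-1}^2$ term: $\varepsilon_{i-1}^2 < \beta_{i-1}^2/4$, so $\beta_{i-1}^2 + \varepsilon_{i-1}^2 < 2\beta_{i-1}^2$ and hence
$\beta_i > 2\beta_{i-1}\bigl[(1-q) - (1-2q)\beta_{i-1}\bigr]$.
At this point the implicit constraint $\beta_{i-1} < \frac{1}{8}$, which holds because $i - 1 < L$ and $L$ is by definition the first index with $\beta_L \geq \frac{1}{8}$, together with $q \leq e^{-3} < \frac{1}{20}$ from the discussion preceding the lemma, bounds the bracket below by an absolute constant (e.g.\ $\frac{19}{20} - \frac{1}{8} = \frac{33}{40}$). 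This gives $\beta_i > \frac{33}{20}\beta_{i-1} > \frac{33}{10}\varepsilon_{i-1} > 3\varepsilon_{i-1} = 2\varepsilon_i$, closing the induction.

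The main obstacle is calibration rather than mechanics: one really has to use \emph{both} slacks in the hypothesis, namely $\beta_{i-1} > 2\varepsilon_{i-1}$ to kill the $\varepsilon_{i-1}^2$ term and $\beta_{i-1} < \frac{1}{8}$ (enforced by $i \leq L$) to keep the effective quadratic coefficient $(1-2q)\beta_{i-1}$ strictly below the linear gain $(1-q)$. If either slack were tight, the growth rate $\frac{3}{2}$ chosen for $\varepsilon_i$ would be too aggressive and the induction would collapse.
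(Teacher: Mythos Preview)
Your proposal is correct and follows the same inductive scheme as the paper: both prove the two inequalities simultaneously by induction on $i$, using the three facts $\beta_{i-1}>2\varepsilon_{i-1}$, $\beta_{i-1}<\tfrac18$ (from $i-1<L$), and $q<\tfrac{1}{20}$. The only difference is cosmetic algebra in the lower-bound step---the paper completes the square in $\beta_{i-1}$ and then substitutes $2\varepsilon_{i-1}$, whereas you absorb $\varepsilon_{i-1}^2$ into $\beta_{i-1}^2$ and factor; your route is slightly more direct but the content is identical.
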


The size $n_i$ of $S_i$ is set as $\lceil 960 \cdot i \cdot (\frac{8}{9})^i \cdot \frac{k}{n} \varepsilon^{-2}\rceil$ to control the number of comparisons and the failure probability.
Similar to Section~\ref{sec-median}, $n_i$ should shrink exponentially with $i$ and should also be linear in both $\frac{k}{n} \varepsilon^{-2}$ and $i$.
The major difference lies in that the existence of $k_i$ changes the shrink factor of $n_i$.
Since $\frac{k_i}{n_i}\leq 2^i\cdot \frac{k}{n}$ and $\varepsilon_i=(\frac{3}{2})^i\cdot \varepsilon$, the shrink factor of $n_i$ should be at least $\frac{8}{9}$.
This is based on the fact that $2^{-i}\cdot (\frac{3}{2})^{2i}\cdot (\frac{8}{9})^i=1$, which will be much clearer in the probability analysis.

To sum up, $\bm{n_i}=\lceil 960 \cdot i \cdot (\frac{8}{9})^i \cdot \frac{k}{n} \varepsilon^{-2}\rceil$, $\bm{\varepsilon_i}=(\frac{3}{2})^i\cdot \varepsilon$, $\bm{\beta_i}= (2\beta_{i-1}-\beta_{i-1}^2-\varepsilon_{i-1}^2)-2q\cdot(\beta_{i-1}-\beta_{i-1}^2-\varepsilon_{i-1}^2)$ with $\bm{q}=\sum_{i=1}^{3\MV_p}(1-p)^i p^{6\MV_p\cdot +1-i}$, and $\bm{L}=\min\{i \mid  \beta_i \geq \frac{1}{8}\}$.

\newcommand{\lemgoodselection}{For $1\leq i\leq L$
	\[\Pr[S_i \mbox{ is NOT good }\mid S_{i-1} \mbox{\; is good \;}]\leq 2\cdot e^{-4i}.\]}

To attain the success probability $1-\frac{1}{9}$, it is sufficient to prove that $\Pr[S_L\mbox{ is good}]\geq 1-\frac{1}{18}$ (Theorem~\ref{thm-median-constant}) since the approximate median selection in Section~\ref{sec-median} fails with probability at most $\frac{1}{18}$.
Let $E_i$ denote the event that $S_i$ is good.
By definition, $\Pr[E_0]=1$. Applying the Chernoff bound with the above parameters gives the following lemma:
\begin{lemma}\label{lem-good-selection}
	\lemgoodselection
\end{lemma}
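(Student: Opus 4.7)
The plan is to condition on $S_{i-1}$ being good and bound, via the Chernoff bound (Lemma~\ref{lem-chernoff}), the probability of each of two bad events: (a) $S_i$ contains more than $n_i(\beta_i - \varepsilon_i)$ small elements, or (b) $S_i$ contains more than $n_i(1 - \beta_i - \varepsilon_i)$ large elements. If neither occurs, then the ranks in $(k_i - n_i\varepsilon_i, k_i + n_i\varepsilon_i]$ of $S_i$ must all be occupied by relevant elements, so $S_i$ is good. Each element of $S_i$ is the noisy minimum of two independent uniform draws from $S_{i-1}$, hence the small (resp. large) indicators are mutually independent across the $n_i$ elements. Using goodness of $S_{i-1}$ (fraction of small at most $\beta_{i-1}-\varepsilon_{i-1}$, fraction of large at most $1-\beta_{i-1}-\varepsilon_{i-1}$) together with the monotonicity on $[0,1]$ of the maps $s \mapsto s^2 + 2s(1-s)(1-q)$ and $\ell \mapsto \ell^2 + 2\ell(1-\ell)q$, the probability that a single new element is small (resp. large) is at most
\[p_s := (\beta_{i-1}-\varepsilon_{i-1})^2 + 2(\beta_{i-1}-\varepsilon_{i-1})(1-(\beta_{i-1}-\varepsilon_{i-1}))(1-q),\]
\[p_\ell := (1-\beta_{i-1}-\varepsilon_{i-1})^2 + 2(1-\beta_{i-1}-\varepsilon_{i-1})(\beta_{i-1}+\varepsilon_{i-1})\,q.\]

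The key algebraic observation, verified by directly expanding the definition of $\beta_i$ given in Section~\ref{sec-selection}, is the identity $\beta_i = p_s + \tfrac{1}{2}p_r$, where $p_r := 1 - p_s - p_\ell$. Consequently, both tails have the same cushion:
\[(\beta_i - \varepsilon_i) - p_s \;=\; (1 - \beta_i - \varepsilon_i) - p_\ell \;=\; \tfrac{1}{2}p_r - \varepsilon_i.\]
A direct computation gives $p_r = 4\varepsilon_{i-1}\bigl[(1-q) - (1-2q)\beta_{i-1}\bigr]$. Since $\beta_{i-1} < \tfrac{1}{8}$ (as $i-1 < L$) and $q \leq e^{-3} < \tfrac{1}{20}$ (Lemma~\ref{lem-majority-vote}), this reduces to $\tfrac{1}{2}p_r - \varepsilon_i \geq c_0\,\varepsilon_{i-1}$ for an explicit positive constant $c_0$, yielding an additive cushion of $\Theta(n_i \varepsilon_{i-1})$ for each of the two counts.

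Applying Chernoff to event~(a), the mean $n_i p_s = O(n_i \beta_{i-1})$ dominates, so $\Pr[(\mathrm{a})] \leq \exp\bigl(-\Omega(n_i \varepsilon_{i-1}^2 / \beta_{i-1})\bigr)$. Plugging in $n_i = \lceil 960\, i\, (8/9)^i\, \beta \varepsilon^{-2}\rceil$, $\varepsilon_{i-1} = (3/2)^{i-1}\varepsilon$, and $\beta_{i-1} \leq 2^{i-1}\beta$ (Lemma~\ref{lem-selection-invariant}), the exponent simplifies to $\Theta(i)$ because $(8/9)^i \cdot (9/4)^{i-1}/2^{i-1} = \Theta(1)$; the generous constant $960$ ensures that the exponent is at least $4i$. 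Event~(b) is handled symmetrically by applying the lower-tail Chernoff bound to $n_i - Y$, the number of non-large elements in $S_i$, whose mean $n_i(p_s + p_r)$ exceeds the threshold $n_i(\beta_i + \varepsilon_i)$ by the same cushion $\tfrac{1}{2}p_r - \varepsilon_i$. A union bound finishes with $2e^{-4i}$. The main obstacle is the identity $\beta_i = p_s + \tfrac{1}{2}p_r$: without it the two tails would exhibit asymmetric cushions, forcing either $n_i$ or the shrink factor $8/9$ to be reconsidered; once it is in hand, everything else is matching constants between the Chernoff exponent and the parameter schedule.
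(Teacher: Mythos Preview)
Your proposal is correct and follows essentially the same approach as the paper. The paper tracks the counts $X_i$ (small) and $Y_i$ (small or relevant) and uses the identity $\beta_i=\frac{p_s+p_{sr}}{2}$, whereas you track small and large and write $\beta_i=p_s+\tfrac12 p_r$; since $p_{sr}=1-p_\ell$ and $p_r=p_{sr}-p_s$, these are the same identity and the same cushion $\tfrac12(p_{sr}-p_s)-\varepsilon_i=\tfrac12(p_{sr}-p_s-3\varepsilon_{i-1})$, and both proofs then plug the schedule $n_i,\varepsilon_i,\beta_i$ into the Chernoff exponent to get $\geq 4i$.
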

By Lemma~\ref{lem-good-selection}, we can lower bound $\Pr[E_L]$ as 
\[\Pr[E_L]=1-\Pr[\bigcup_{i=1}^L\overline{E_i}\mid E_{i-1}]\geq 1-\sum_{i=1}^L 2\cdot e^{-4i}\geq 1-4\cdot e^{-4}\geq 1- \frac{1}{9}.\]

For the number of comparisons,
since each selection takes $6\MV_P+1=O(1)$ comparisons,
the purifying process takes $\sum_{i=1}^LO(n_i)=\frac{k}{n}\varepsilon^{-2}\cdot\sum_{i=1}^L O\big(i\cdot(\frac{8}{9})^i\big)=O(\frac{k}{n}\varepsilon^{-2})$ comparisons.
By Theorem~\ref{thm-median-constant}, the approximate median selection takes $O(\varepsilon_L^{-2})=O\big((\frac{2}{3})^{2L}\varepsilon^{-2}\big)=O(2^{-L}\cdot \varepsilon^{-2})$ comparisons.
Since $\frac{k_L}{n_L}\leq 2^L\cdot \frac{k}{n}$ (Lemma~\ref{lem-selection-invariant}) and $\frac{k_L}{n_L}>\frac{1}{8}$,
we have $2^{-L}=O(\frac{k}{n})$ and $O(2^{-L}\cdot \varepsilon^{-2})=O(\frac{k}{n}\varepsilon^{-2})$,
implying the following main theorem:

\newcommand{\thmselectionconstant}{It takes $O(\frac{k}{n}\varepsilon^{-2})$ comparisons to select an element in $(k-n\varepsilon,k+\varepsilon]$ with probability at least $1-\frac{1}{9}$.}

\begin{theorem}\label{thm-selection-constant}
	\thmselectionconstant 
\end{theorem}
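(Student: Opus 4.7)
The plan is to implement the two-phase scheme sketched just before the statement: an $L$-round purifying loop in which $S_i$ is built from $S_{i-1}$ by repeatedly drawing two elements uniformly at random and keeping the one judged smaller by $6c_p+1$ majority-vote comparisons (Lemma~\ref{lem-majority-vote}); then padding $S_L$ with the appropriate number of dummy minimum elements so that the relevant band of $S_L$ is centered, and running the approximate median subroutine of Theorem~\ref{thm-median-constant} on the padded multiset with parameter $\Theta(\varepsilon_L)$. The parameters $n_i = \lceil 960\cdot i \cdot (8/9)^i \cdot \frac{k}{n}\varepsilon^{-2}\rceil$, $\varepsilon_i = (3/2)^i \varepsilon$, and the recurrence for $\beta_i$ are as fixed above. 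The two non-trivial ingredients are Lemma~\ref{lem-selection-invariant} and Lemma~\ref{lem-good-selection}; once these are in hand, the theorem follows by routine counting.

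For Lemma~\ref{lem-selection-invariant} I would induct on $i$. Rewriting the recurrence as
\[
\beta_i \;=\; 2\beta_{i-1}(1-q) - \beta_{i-1}^2(1-2q) - \varepsilon_{i-1}^2(1-2q),
\]
the bound $\beta_i \leq 2\beta_{i-1}$, and hence $\beta_i \leq 2^i\beta$, is immediate from discarding the nonpositive terms. The companion bound $\beta_i > 2\varepsilon_i$ I would establish by proving the stronger step $\beta_i - 2\varepsilon_i \geq \tfrac{3}{2}(\beta_{i-1} - 2\varepsilon_{i-1})$; after substituting $\varepsilon_i = (3/2)\varepsilon_{i-1}$ this reduces to the algebraic inequality $(1/2 - 2q)\beta_{i-1} \geq (1-2q)(\beta_{i-1}^2 + \varepsilon_{i-1}^2)$, which holds for $q \leq 1/20$ and $\beta_{i-1} \leq 1/8$. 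The first inequality is what converts the stopping condition $\beta_L \geq 1/8$ into $2^{-L} = O(k/n)$, and hence replaces $\varepsilon^{-2}$ by $\frac{k}{n}\varepsilon^{-2}$ in the final median cost; the second inequality will provide the constant slack that powers the Chernoff estimate below.

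The heart of the proof is Lemma~\ref{lem-good-selection}. Condition on $S_{i-1}$ being good. A worst-case case analysis of the two-element minimum subroutine (with error probability $q \leq 1/20$) shows that each element of $S_i$ is, independently, small/relevant/large in $S$ with probabilities $p_S$, $p_R$, $p_L$ obeying $p_S + p_R/2 = \beta_i$ and $p_L + p_R/2 = 1 - \beta_i$ (this is precisely how the recurrence for $\beta_i$ was derived), and the invariant $\beta_{i-1} > 2\varepsilon_{i-1}$ further gives $p_R/2 - \varepsilon_i \geq c\,\varepsilon_i$ for some absolute constant $c > 0$. The event ``$S_i$ has more than $(\beta_i - \varepsilon_i) n_i$ small elements'' therefore requires a deviation of at least $c\,\varepsilon_i n_i$ above mean in a sum of $n_i$ independent Bernoullis; the multiplicative Chernoff bound (Lemma~\ref{lem-chernoff}) in the form $\exp(-t^2/(2\mu + t))$ safe for small means produces an exponent of order
\[
\frac{\varepsilon_i^2 n_i}{\beta_i} \;\geq\; \frac{\varepsilon_i^2 n_i}{2^i\beta} \;=\; \frac{960\,i\,(9/4)^i(8/9)^i\,\beta}{2^i\beta} \;=\; 960\,i,
\]
using the key cancellation $(8/9)\cdot(3/2)^2\cdot(1/2) = 1$. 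A symmetric argument bounds the count of large elements, and a union bound over the two tails yields failure probability at most $2 e^{-4i}$, the constant $960$ being calibrated to absorb the slack coefficient $c$ and the constants hidden in the Chernoff form. This exact three-way balancing of geometric factors $(8/9)^i$ in $n_i$, $(3/2)^{2i}$ from $\varepsilon_i^2$, and $2^{-i}$ from $1/\beta_i$ is the step I expect to be the main obstacle; it is what forces the specific shrink factor $8/9$ in $n_i$ and makes all three rate choices mesh.

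The assembly is then routine. A union bound across rounds gives $\Pr[S_L \text{ good}] \geq 1 - \sum_{i \geq 1} 2 e^{-4i} \geq 1 - 1/18$, and the median call on the padded multiset contributes another $1/18$ failure probability by Theorem~\ref{thm-median-constant}, for a total of at most $1/9$. For comparisons, each round's two-element minimum costs $O(1)$, so the purifying phase totals $\sum_{i=1}^L O(n_i) = O\bigl(\tfrac{k}{n}\varepsilon^{-2}\bigr)\sum_{i \geq 1} i (8/9)^i = O\bigl(\tfrac{k}{n}\varepsilon^{-2}\bigr)$; the final median selection costs $O(\varepsilon_L^{-2}) = O((4/9)^L \varepsilon^{-2}) \leq O(2^{-L}\varepsilon^{-2}) = O\bigl(\tfrac{k}{n}\varepsilon^{-2}\bigr)$ by Lemma~\ref{lem-selection-invariant}. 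Summing proves the theorem.
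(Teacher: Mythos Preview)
Your proposal follows the paper's approach closely: the same two-phase scheme, the same parameter choices, the same key lemmas, and the same final assembly. Your route to Lemma~\ref{lem-selection-invariant} via the amplified step $\beta_i - 2\varepsilon_i \geq \tfrac{3}{2}(\beta_{i-1}-2\varepsilon_{i-1})$ is slightly different from the paper's direct computation but equally valid, and the comparison count and union-bound assembly match the paper exactly.

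There is, however, one genuine gap in your sketch of Lemma~\ref{lem-good-selection}. The ``too many small elements'' tail is handled correctly: the small count has mean at most $p_S n_i \lesssim \beta_i n_i$, so the multiplicative Chernoff bound yields an exponent of order $\varepsilon_i^2 n_i / \beta_i$, and your three geometric factors cancel as shown. But the claim that ``a symmetric argument bounds the count of large elements'' with the \emph{same} exponent does not hold. Taking the minimum of two draws does not make large elements rare in absolute terms: for small $\beta_{i-1}$ the probability $p_L$ that a selected element is large is close to $1$, not close to $\beta_i$. A direct Chernoff upper tail on the large count therefore gives an exponent of order $\varepsilon_i^2 n_i / p_L \approx \varepsilon_i^2 n_i = 960\,i\cdot 2^i\cdot \beta$, which for $i=1$ and tiny $\beta$ is far below $4$, so the bound $2e^{-4i}$ fails in the early rounds. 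The paper sidesteps this asymmetry by bounding the complementary count $Y_i = \#\{\text{small}\}+\#\{\text{relevant}\}$ from \emph{below}: $Y_i$ has mean at least $p_{sr}n_i$ with $p_{sr}\leq 3\beta_{i-1}$, so the lower-tail Chernoff exponent is again $\Theta(\varepsilon_{i-1}^2 n_i/\beta_{i-1})$ and the same cancellation goes through. The fix is a one-line switch from ``large count too high'' to ``$Y_i$ too low,'' but without it the argument breaks when $\beta$ is small.
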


%%%%%%%%%%%%%%%%%%%%%%%%%%%%%%%%%%%%%%%%%%%%%%%%%%%%%%%%%%%%%%%%%%%
%%%%%%%%%%%%%%%%%%%%%%%%%%%%%%%%%%%%%%%%%%%%%%%%%%%%%%%%%%%%%%%%%%%
%%%%%%%%%%%%%%%%%%%%%%%%%%%%%%%%%%%%%%%%%%%%%%%%%%%%%%%%%%%%%%%%%%%

\section{Lower Bound}\label{sec-lower}
We sketch the derivation of an $\Omega(\min\{n,\frac{k}{n}\varepsilon^{-2}\log\frac{1}{Q}\})$ lower bound for the \emph{expected} number of comparisons. The lower bound is based on a sampling lemma (Corollary~\ref{cor-actual-sampling} in Section~\ref{ap-lower}) about elements with a certain rank among all sampled elements.
We assume that $\bm{4n\varepsilon\leq k}$.
If $k\leq n\varepsilon$, the $\Omega(\varepsilon^{-1}\log\frac{1}{Q})$ lower bound for the approximate minimum selection problem~\cite{LeucciL20} applies,
and if $n\varepsilon<k<4n\varepsilon$, we multiply the value of $\varepsilon$ by 4 so that $k\leq n\varepsilon$ and the former argument still works,
which does not change the lower bound asymptotically.

Let $T$ be the decision tree of any \emph{randomized} algorithm that solves $\FT(k,\varepsilon)$ with probability at least $1-Q$. 
$T$ is said to $\emph{look at}$ an element $x$ if $T$ performs at least one comparison involving $x$.
Let $\TD$ be the \emph{expected} number of elements that $T$ looks at.
Since $\TD$ is not larger than twice the expected number of comparisons,
it is sufficient to lower bound $\TD$. 
We assume that there are \textbf{no comparison faults}, which does not increase the lower bound and is easier for analysis.

If $\TD\geq \frac{n}{10}$, then $\TD=\Omega(n)$.
Below, we deal with the case that $\TD<\frac{n}{10}$. 
Markov's inequality implies that $T$ looks at more than $2\TD$ elements with probability at most $\frac{1}{2}$.
We construct a new decision tree $\tT$ based on $T$:
$\tT$ first simulates $T$ until reaching a leaf $u$ of $T$ that returns an element $x$, and then conducts three additional steps \emph{sequentially}:
\begin{enumerate}[label=(\alph*)]
	\item If $T$ does not look at $x$, then $\tT$ compares $x$ with another element.
	\item If $\tT$ has looked at fewer than $2\TD + \ceil{\frac{8 n}{k}}$ elements so far, then $\tT$ performs more comparisons such that $\tT$ has looked at \emph{exactly} $2\TD + \ceil{\frac{8 n}{k}}$ elements after this step.
	\item $\tT$ compares all pairs of elements that it has looked at, and then returns $x$.
\end{enumerate}

Intuitively, $\tT$ represents the same algorithm as $T$,
but these additional steps give $\tT$ the following nice properties for analysis (as shown in Lemma~\ref{lem-property-tree}, these properties follow directly from the three additional steps above):
\newcommand{\lempropertytree}{
	\begin{enumerate}[label=(\arabic*)]
		\item $\tT$ knows the sorted order of the elements that $\tT$ has looked at.
		\item $\tT$ has success probability at least $1-Q$.
		\item $\tT$ looks at exactly $2\TD+\ceil{\frac{8 n}{k}}$ elements with probability at least $1/2$.
		Note that this includes the elements that $\tT$ looks at during its simulation of $T$.
	\end{enumerate}}
\lempropertytree

Let us consider the execution of $\tT$ on a uniformly shuffled input.
By property (1), the element returned by a fixed leaf of $\tT$
will always have the same rank among the elements $\tT$ has looked at,
independent of order of the input. (Note that we assumed there are no comparison faults.)
By property (3), with probability at least $1/2$,
the execution of $\tT$ reaches a leaf after looking at exactly $2\TD+\ceil{\frac{8 n}{k}}$
elements. By applying a sampling lemma (Corollary~\ref{cor-actual-sampling})
to each such leaf, we can lower bound the failure probability of $\tT$.

\newcommand{\lemprobabilitytree}{If $k \geq 200$ and $4 n \varepsilon < k$, then the failure probability of $\tT$ on a uniformly shuffled input is at least
	\[\frac{1}{2}\cdot\LC\cdot e^{-24\varepsilon^2\frac{n}{k}(2\TD + \ceil{\frac{8 n}{k}})}\mbox{\quad for a constant }\eta.\]}

\begin{lemma}\label{lem-probability-tree}
	\lemprobabilitytree
\end{lemma}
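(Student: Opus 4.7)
\medskip

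The plan is to combine property~(3) of $\tT$ with a rank-distribution analysis at each leaf. Set $s := 2\TD + \ceil{8n/k}$ and fix the algorithm's internal random bits, so $\tT$ becomes a deterministic decision tree over comparison queries (recall we are assuming no comparison faults). Let $\mathcal{L}$ denote the collection of leaves of this tree at which exactly $s$ elements have been looked at. For any leaf $\ell \in \mathcal{L}$, the set $A_\ell$ of elements looked at along the path to $\ell$ and their induced sorted order are determined by the path; by property~(1), the output at $\ell$ is the element occupying some specific rank $r_\ell \in \{1, \ldots, s\}$ within $A_\ell$.

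Next I would observe that, for a uniformly shuffled input, conditional on reaching a specific leaf $\ell \in \mathcal{L}$, the ranks (in the full input) of the elements of $A_\ell$ form a uniformly random size-$s$ subset of $\{1, \ldots, n\}$. This is because the comparison outcomes along the path to $\ell$ only constrain the relative order within $A_\ell$, which is already fixed, so by symmetry of the uniform permutation nothing else is determined. Consequently, the rank $J$ of the returned element in the full input satisfies the hypergeometric-type identity
\begin{equation*}
    \Pr[J = j \mid \text{reach } \ell] = \frac{\binom{j-1}{r_\ell - 1} \binom{n-j}{s-r_\ell}}{\binom{n}{s}}.
\end{equation*}

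Now I would invoke the sampling lemma, Corollary~\ref{cor-actual-sampling}, to show that uniformly in $r_\ell$, under the assumptions $k \geq 200$ and $4 n \varepsilon < k$,
\begin{equation*}
    \Pr\bigl[J \in (k-n\varepsilon,\, k+n\varepsilon] \,\big|\, \text{reach } \ell\bigr] \;\leq\; 1 - \LC \cdot e^{-24 \varepsilon^2 \frac{n}{k} s}.
\end{equation*}
Summing over $\ell \in \mathcal{L}$, weighted by the probability of reaching each such leaf, the failure probability of $\tT$ conditional on reaching $\mathcal{L}$ is at least $\LC \cdot e^{-24 \varepsilon^2 \frac{n}{k} s}$. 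Property~(3) guarantees that $\Pr[\text{reach } \mathcal{L}] \geq 1/2$, so after averaging over the algorithm's internal random bits the stated bound follows:
\begin{equation*}
    \Pr[\tT \text{ fails}] \;\geq\; \tfrac{1}{2} \cdot \LC \cdot e^{-24 \varepsilon^2 \frac{n}{k} (2\TD + \ceil{8n/k})}.
\end{equation*}

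The main obstacle is the uniform-in-$r_\ell$ application of Corollary~\ref{cor-actual-sampling}: one must show the hypergeometric mass on the window $(k-n\varepsilon, k+n\varepsilon]$, which has relative width $\Theta(n\varepsilon/k)$ around the ``target rank'' $k$, is strictly bounded away from $1$ by an exponentially small amount no matter which rank $r_\ell$ the leaf has selected. The assumptions $k \geq 200$ and $4 n \varepsilon < k$ enter precisely here, ensuring the window is narrow relative to $k$ so that concentration of the hypergeometric distribution around $r_\ell n / s$ cannot be made tight enough to eliminate the failure probability; the remaining accounting (conditioning on reaching $\mathcal{L}$, unfolding property~(3), and averaging over random bits) is essentially bookkeeping.
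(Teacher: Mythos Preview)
Your proposal is correct and follows essentially the same approach as the paper: fix a leaf where exactly $s=2\TD+\ceil{8n/k}$ elements have been looked at, argue that conditionally the looked-at set is a uniform size-$s$ sample (so the returned element has a fixed rank in a uniform sample), invoke Corollary~\ref{cor-actual-sampling} to lower-bound the per-leaf failure probability, and combine with property~(3) for the factor $\tfrac12$. The only minor omission is that you do not explicitly verify the hypotheses $s\le n/4$ and $\beta s\ge 8$ of Corollary~\ref{cor-actual-sampling}; these are exactly where $k\ge 200$ (together with $\TD<n/10$) and the $\ceil{8n/k}$ term are used, not the informal ``window width'' reasoning in your last paragraph.
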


Since $\tT$ succeeds with probability at least $1-Q$, we have
$Q\geq \frac{1}{2}\LC\cdot e^{-24\varepsilon^2\frac{n}{k}(2\TD+\ceil{\frac{8 n}{k}})}$, implying that $\TD=\Omega(\frac{k}{n}\varepsilon^{-2}\log\frac{1}{Q})$.
We can conclude the following main theorem.

\newcommand{\thmlower}{If $Q<\frac{1}{2}$, then the expected number of comparisons performed by any randomized algorithm that solves the $\FT(k,\varepsilon)$ problem with probability at least $1-Q$ is $\Omega\big(\min\big\{n,\frac{k}{n}\varepsilon^{-2}\log\frac{1}{Q}\big\}\big)$.}

\begin{theorem}\label{thm-lower}
	\thmlower
\end{theorem}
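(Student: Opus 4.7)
The plan is to follow the strategy outlined in the sketch. First, I reduce to the case $4n\varepsilon \leq k$: if $k \leq n\varepsilon$ the $\Omega(\varepsilon^{-1}\log(1/Q))$ lower bound of Leucci and Liu on approximate minimum selection already dominates, and if $n\varepsilon < k < 4n\varepsilon$ I multiply $\varepsilon$ by a constant so the earlier argument applies, which costs at most a constant in the final bound. Fix any randomized decision tree $T$ that solves $\FT(k,\varepsilon)$ with success probability $\geq 1-Q$, let $\TD$ denote the expected number of distinct elements that $T$ looks at, and note that $\TD$ is at most twice the expected number of comparisons, so it suffices to lower bound $\TD$. I will work under the assumption of no comparison faults, since removing noise gives the algorithm strictly more information and any resulting lower bound transfers back upward to the noisy model. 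If $\TD \geq n/10$ we are done immediately; otherwise Markov's inequality yields that $T$ looks at at most $2\TD$ elements with probability $\geq 1/2$.

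The second step is to augment $T$ into the tree $\tT$ by appending the three post-processing operations (ensuring the returned element has been looked at, padding the examined set up to exactly $s := 2\TD + \lceil 8n/k\rceil$ elements, then comparing all pairs among the examined elements). The resulting tree inherits the return value of $T$ and therefore the success guarantee $\geq 1-Q$; by construction, at each leaf it knows the full sorted order of its examined set; and by Markov it examines exactly $s$ elements with probability $\geq 1/2$. The purpose of these modifications is to turn an adaptive, variable-budget algorithm into one whose every leaf acts as a deterministic selector on a size-$s$ subset: once the leaf is reached, the returned element has a predetermined sample-rank $j$, placing us exactly in the hypothesis of Corollary~\ref{cor-actual-sampling}.

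The third step is to apply the sampling lemma leaf-by-leaf under a uniformly shuffled input. For each leaf reached after examining exactly $s$ elements and returning the $j$-th smallest of the sample, success requires the true rank of that order statistic to fall in the narrow window $(k-n\varepsilon, k+n\varepsilon]$, and a hypergeometric anti-concentration bound of the form provided by Corollary~\ref{cor-actual-sampling} gives a per-leaf failure probability of at least $\LC \cdot e^{-24 \varepsilon^2 (n/k)\, s}$ uniformly in $j$ (using $k \leq n/2$ so that $k(1-k/n) = \Theta(k)$). Combining this estimate with the event that exactly $s$ elements are examined yields Lemma~\ref{lem-probability-tree}; comparing against the overall success guarantee $1-Q$ and solving for $\TD$ gives $\TD = \Omega((k/n)\varepsilon^{-2}\log(1/Q))$. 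The main obstacle is the sampling lemma itself: I need an anti-concentration estimate for the hypergeometric that gives an $e^{-\Theta(s\varepsilon^2 n/k)}$ lower tail uniformly over all candidate sample-ranks $j$, not merely the central choice $j\approx sk/n$, since the adversarial leaf may return any $j$. I must also argue carefully that after the uniform shuffle each leaf really does act as a fixed-rank selector, so that the conditional analysis is valid despite the original adaptivity of $T$; by contrast, the reduction to the noise-free setting and the padding construction are routine bookkeeping.
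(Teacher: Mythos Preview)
Your outline matches the paper's argument almost step for step up through Lemma~\ref{lem-probability-tree}, but there is a genuine gap at the very end, when you write ``comparing against the overall success guarantee $1-Q$ and solving for $\TD$ gives $\TD = \Omega((k/n)\varepsilon^{-2}\log(1/Q))$.'' Spelling this out, the inequality
\[
Q \;\geq\; \tfrac{1}{2}\,\LC\, e^{-24\varepsilon^2 (n/k)\,(2\TD + \lceil 8n/k\rceil)}
\]
only yields
\[
\TD \;\geq\; \tfrac{1}{48}\,\varepsilon^{-2}\,\tfrac{k}{n}\,\ln\!\tfrac{\LC}{2Q}\;-\;\tfrac{1}{2}\bigl\lceil\tfrac{8n}{k}\bigr\rceil.
\]
The absolute constant $\LC$ coming out of the hypergeometric anti-concentration (Theorem~\ref{thm-first-tail}) is tiny, namely $\sqrt{\pi/320}\,e^{-24}$. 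Hence for any $Q$ larger than roughly $\LC$ the logarithm is \emph{negative} and the displayed inequality gives no information about $\TD$ at all. Your argument therefore covers only the regime $Q \le c$ for some small absolute constant $c$, not the full range $Q<\tfrac12$ demanded by the theorem.

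The paper closes this gap by a bootstrapping step you have not mentioned: when $Q$ exceeds the threshold, it uses the given algorithm $\A$ as a \emph{sampler} (each call to $\A$ returns a small element with probability at most some value and a large element with probability at most some complementary value), feeds these samples into the authors' own upper-bound algorithm to build an auxiliary procedure $\tA$ solving $\FT(k,\varepsilon)$ with failure probability below the threshold, and then applies the already-proved small-$Q$ lower bound to $\tA$. Since $\tA$ calls $\A$ only $O(1)$ times in this regime, the lower bound on $\tA$ transfers to a lower bound on $\TD$. You should add this reduction (or an equivalent device) to handle constant $Q$; without it the claimed $\Omega((k/n)\varepsilon^{-2}\log(1/Q))$ bound is not established for all $Q<\tfrac12$. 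A minor additional point: the paper also separately disposes of the case $k\le 200$ (by absorbing it into the approximate-minimum lower bound), which you should mention so that the hypothesis $m\beta\ge 8$ of the sampling lemma is available.
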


\subsection{Sampling lemma}\label{sub-sampling}

In the previous part, we reduced a general algorithm to returning
an element of a certain rank among all elements the algorithm looked at.
We will now derive a sampling lemma for this case.
For ease of exposition, we also use $\beta$ to denote $\frac{k}{n}$.

\begin{lemma}\label{lem-actual-sampling}
	Let $A$ consist of $m\leq \frac{n}{4}$ elements sampled from $S$ without replacement. 
	Suppose that $m \beta\geq 8$ and that $\frac{1}{2} \geq \beta \geq 4 \varepsilon$. 
	Then there is an absolute constant $\eta$ with the following properties.
	\begin{enumerate}
		\item Let $u$ be the $r$-th smallest element of $A$.
		If $r\leq \ceil{\beta m}$, then $u$ is small with probability at least
		\[\LC\cdot e^{-12\frac{\varepsilon^2}{\beta(1-\beta)}m}.\]
		\item Let $v$ be the $r$-th largest element of A. 
		If $r\leq \ceil{(1-\beta)m}$, then $v$ is large with probability at least
		\[\LC\cdot e^{-12\frac{\varepsilon^2}{\beta(1-\beta)}m}.\]
	\end{enumerate}
\end{lemma}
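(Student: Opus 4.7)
The plan is to convert both statements into one-sided tail lower bounds for hypergeometric distributions and then prove them via Stirling-type estimates. For part~1, let $K=k-n\varepsilon$ and let $X$ denote the number of small elements in $A$; then $X\sim\Hyper(n,K,m)$ and the $r$-th smallest element of $A$ is small iff $X\geq r$, so by monotonicity it suffices to establish the bound for $r=\ceil{\beta m}$, i.e.\
\[\Pr[X\geq\ceil{\beta m}]\geq\LC\cdot e^{-12\varepsilon^{2}m/(\beta(1-\beta))}.\]
Here $\mathbb{E}[X]=m(\beta-\varepsilon)$ and $\sigma^{2}:=\mathrm{Var}(X)=m(\beta-\varepsilon)(1-\beta+\varepsilon)(n-m)/(n-1)$; using $\varepsilon\leq\beta/4$, $\beta\leq 1/2$, and $m\leq n/4$ one checks $\sigma^{2}=\Theta(m\beta(1-\beta))$, so the target $\ceil{\beta m}$ lies $\Theta(m\varepsilon)$ above the mean, that is, $\Theta(\varepsilon\sqrt{m/(\beta(1-\beta))})$ standard deviations away. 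Part~2 is analogous: writing $Y$ for the number of large elements in $A$, the identity $Y=m-X'$ with $X'\sim\Hyper(n,k+n\varepsilon,m)$ turns $\{Y\geq\ceil{(1-\beta)m}\}$ into $\{X'\leq\floor{\beta m}\}$, again a one-sided deviation of $\Theta(m\varepsilon)$ from a mean with variance $\Theta(m\beta(1-\beta))$, so the same proof handles both.

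I would split the analysis by the size of $\tau:=\varepsilon^{2}m/(\beta(1-\beta))$. In the small-deviation regime where $\tau$ is bounded by an absolute constant, the target lies within $O(\sigma)$ of the mean; the hypothesis $m\beta\geq 8$ together with $K/n\geq 3\beta/4$ gives $mK/n\geq 6$, so the hypergeometric has enough mass that a Berry--Esseen or local-CLT comparison to a Gaussian yields $\Pr[X\geq\ceil{\beta m}]\geq c$ for an absolute constant $c>0$, already dominating $\LC\cdot e^{-12\tau}$ once $\LC$ is chosen small enough. In the large-deviation regime where $\tau$ is large, I would lower bound $\Pr[X\geq\ceil{\beta m}]$ by summing the PMF at the $\floor{\sigma}$ consecutive integers $j=\ceil{\beta m},\ldots,\ceil{\beta m}+\floor{\sigma}$. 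Applying Stirling's formula to each of the three binomial coefficients in $\Pr[X=j]=\binom{K}{j}\binom{n-K}{m-j}/\binom{n}{m}$ expresses each PMF value as $\Theta(1/\sigma)\cdot\exp(-m\Div{j/m}{K/n}+O(1))$, where $\Div{p}{q}$ is the binary relative entropy. A second-order Taylor expansion $\Div{p}{q}\leq\tfrac{(p-q)^{2}}{2q(1-q)}+O((p-q)^{3})$ together with $|j/m-K/n|=O(\varepsilon)$ and $q(1-q)=\Theta(\beta(1-\beta))$ bounds each term from below by $\Theta(1/\sigma)\cdot e^{-C\varepsilon^{2}m/(\beta(1-\beta))}$ for an absolute constant $C$; summing $\floor{\sigma}$ terms cancels the $1/\sigma$ factor and delivers the desired inequality.

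The main obstacle is tracking constants through the Stirling estimates and the Taylor remainder of the KL divergence precisely enough to land at $12$ in the exponent: the hypergeometric PMF has three binomial coefficients and an additional factor $(n-m)/(n-1)$, each contributing subleading terms that must be absorbed either into $\LC$ or into the exponent. The constraints $m\leq n/4$, $\beta\leq 1/2$, and $\beta\geq 4\varepsilon$ are used exactly to keep $q=K/n$ bounded away from $0$ and $1$, to make $\sigma$ of the claimed order, and to ensure the cubic term in the Taylor expansion of $\Div{\cdot}{\cdot}$ contributes at most a constant factor inside the exponential.
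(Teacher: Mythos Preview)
Your approach is essentially the paper's: reduce to a hypergeometric tail, lower-bound the point mass via the entropy/Stirling bound for binomial coefficients together with a quadratic upper bound on the KL divergence, and then sum over roughly $\sqrt{m\beta(1-\beta)}$ consecutive values to cancel the $1/\sigma$ prefactor. Two small remarks: your Berry--Esseen case split is unnecessary, since the paper simply takes $\eta=\sqrt{\pi/320}\,e^{-24}$, which makes the claimed bound trivially true whenever $\tau$ is bounded; and your ``$+O(1)$'' in the PMF exponent is not quite right for the hypergeometric---beyond $m\,\Div{j/m}{K/n}$ there is a finite-population correction of order $\frac{x}{1-x}\cdot\frac{(a-b)^2}{b(1-b)}\,m=O(\tau)$, which the paper isolates explicitly (its Theorem~\ref{thm-tool}) and then folds into the exponent constant, turning $1$ into $3$, then $6$, then $12$.
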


As every element is either among the $\ceil{\beta m}$ smallest
or among the $\ceil{(1-\beta)m}$ largest ones, the lemma directly implies the following.

\newcommand{\corsampling}{
	Let $A$ consist of $m\leq \frac{n}{4}$ elements sampled from $S$ \emph{without} replacement. 
	Suppose that $m \beta \geq 8$ and that $\frac{1}{2} \geq \beta \geq 4 \varepsilon$.
	Then, an arbitrary element $u$ in $A$ is NOT relevant with probability at least
	\[\LC\cdot e^{-24\cdot \frac{n}{k}\cdot\varepsilon^2\cdot m}.\]
	for some absolute constant $\LC$.}
\begin{corollary}\label{cor-actual-sampling}
\corsampling
\end{corollary}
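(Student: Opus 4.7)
The plan is to derive Corollary~\ref{cor-actual-sampling} directly from Lemma~\ref{lem-actual-sampling} by a two-case split based on where $u$ sits in the sorted order of $A$, followed by a one-line simplification of the exponent. In particular, no fresh probabilistic argument is needed: the corollary really is just a packaged form of the lemma.

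First, I would fix an arbitrary $u\in A$ and let $r$ be its rank among the $m$ elements of $A$. The key covering observation is that $\lceil\beta m\rceil+\lceil(1-\beta)m\rceil\geq \beta m+(1-\beta)m=m$, so at least one of the two inequalities $r\leq\lceil\beta m\rceil$ or $m-r+1\leq\lceil(1-\beta)m\rceil$ must hold. In the first case $u$ is the $r$-th smallest element of $A$ with $r$ small enough to invoke Lemma~\ref{lem-actual-sampling}(1), which yields that $u$ is small with probability at least $\LC\cdot\exp\!\pa{-12\varepsilon^2 m/(\beta(1-\beta))}$. In the second case $u$ is the $(m-r+1)$-th largest element of $A$ and Lemma~\ref{lem-actual-sampling}(2) gives the analogous lower bound on the event that $u$ is large. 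Either way, $u$ is not relevant, so this quantity lower-bounds the non-relevance probability of $u$.

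To convert the exponent into the stated form, I would just use the assumption $\beta\leq\tfrac{1}{2}$: it gives $1-\beta\geq\tfrac{1}{2}$, hence $\beta(1-\beta)\geq\beta/2$, so $12/(\beta(1-\beta))\leq 24/\beta=24\,n/k$. Substituting into the exponential produces exactly the claimed $\LC\cdot e^{-24(n/k)\varepsilon^2 m}$ bound.

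I do not expect any genuine obstacle here, since all of the probabilistic content is absorbed into Lemma~\ref{lem-actual-sampling}; what remains is the covering identity on ranks and the trivial inequality $\beta(1-\beta)\geq\beta/2$. The only small point I would double-check is that the hypotheses of the lemma ($m\leq n/4$, $m\beta\geq 8$, and $\tfrac{1}{2}\geq\beta\geq 4\varepsilon$) are inherited unchanged from those of the corollary so that both parts of Lemma~\ref{lem-actual-sampling} are legitimately applicable in their respective cases.
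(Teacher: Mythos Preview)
Your proposal is correct and follows essentially the same route as the paper: a two-case split on the rank of $u$ in $A$ (either among the $\lceil\beta m\rceil$ smallest or the $\lceil(1-\beta)m\rceil$ largest), invoking the corresponding part of Lemma~\ref{lem-actual-sampling}, and then using $1-\beta\geq\tfrac12$ to simplify the exponent. The paper phrases the case split via $r>\lceil\beta m\rceil\Rightarrow m+1-r\leq(1-\beta)m$, while you use the equivalent covering identity $\lceil\beta m\rceil+\lceil(1-\beta)m\rceil\geq m$; this is a cosmetic difference only.
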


Now let us briefly sketch the proof of Lemma~\ref{lem-actual-sampling}.
The main observation is that the number of small (or large) elements
in $A$ has hypergeometric distribution. The probability density function
of the hypergeometric distribution can be expressed explicitly
with binomial coefficients. By the entropy bound
for binomial coefficients and a second order tangent bound based on
the second derivative in $x = \frac{\m}{\hM}$,
the following theorem follows.

\begin{theorem}
	Let $X \sim \Hyper(\hM, \hK, \m)$. Let $0 \leq \ell \leq \m$ be an integer with
	$\ell < \hK$ and $\m-\ell < \hM-\hK$. Put $a = \frac{k}{l}$, $b = \frac{\hK}{\hM}$, and $x = \frac{\m}{\hM}$, then we have we have
	$$
	\Prpr{X = \ell} \geq \sqrt{\frac{\pi}{32}} \sqrt{\frac{1 - x}{\m a (1-a)}} e^{-F}
	$$
	for
	$$
	F = \pa{\Div{a}{b} + \frac{x}{1-x} \cdot \frac{(a-b)^2}{2 (b- a x)((1-b)-(1-a) x)}} \cdot \m
	$$
	where $\Div{a}{b}$
	is the Kullback-Leibler divergence (Definition~\ref{def-KLdiv}).
\end{theorem}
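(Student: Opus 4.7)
The plan is to establish the theorem in three stages: a Stirling lower bound, a rewriting of the resulting entropy expression as a pair of KL divergences, and a tangent bound on the second KL term.

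First, write
$$\Pr[X = \ell] = \frac{\binom{\hK}{\ell}\binom{\hM-\hK}{\m-\ell}}{\binom{\hM}{\m}}$$
as an explicit ratio of nine factorials and apply the two-sided Stirling estimate $\sqrt{2\pi n}(n/e)^n \le n! \le \sqrt{2\pi n}(n/e)^n e^{1/(12n)}$, using the lower bound on the two numerator factorials and the upper bound on the three denominator factorials. After substituting $\ell = a\m$, $\hK = b\hM$, $\m = x\hM$, the polynomial prefactor simplifies (up to an absolute constant, which after bookkeeping gives the $\sqrt{\pi/32}$) to $\sqrt{(1-x)/(\m a (1-a))}$, and the exponential factor becomes $e^{-\Psi}$ where $\Psi$ is a combination of $t \ln t$ terms coming from the Stirling exponents.

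Second, regroup $\Psi$ by collecting like $t \ln t$ terms. A short algebraic identity yields
\begin{equation*}
\Psi \;=\; \m\cdot\Div{a}{b} \;+\; (\hM - \m)\cdot\Div{q(x)}{(b,\,1-b)},
\end{equation*}
where $q(x) = \bigl((b-ax)/(1-x),\; ((1-b) - (1-a)x)/(1-x)\bigr)$ is the conditional type distribution among the unsampled elements. Intuitively, the first term is the binomial penalty for the sampled portion, while the second captures the residual bias left in the unsampled population.

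The only nontrivial step is bounding this residual by the stated correction. I would use the integral representation of the binary KL divergence, $\Div{s}{r} = \int_r^s (s - t)/(t(1-t))\,\de{t}$, and lower bound $t(1-t)$ by its value at an endpoint of $[r,s]$ to obtain the tangent inequalities $\Div{s}{r} \le (s-r)^2/\bigl(2 r (1-s)\bigr)$ when $s \ge r$ and symmetrically $\Div{s}{r} \le (s-r)^2/\bigl(2 s (1-r)\bigr)$ when $s \le r$. Applied with $r = b$ and $s = q_1(x)$ and combined with the identities $q_1(x) - b = x(b-a)/(1-x)$, $(1-x) q_1 = b - ax$, $(1-x) q_2 = (1-b) - (1-a) x$, this produces the theorem's denominator $2(b-ax)((1-b) - (1-a)x)$ after multiplying through by $\hM - \m = \m(1-x)/x$; closing the remaining slack reduces to the elementary inequality $(1-x)(b-ax) \le b$ in one case and $(1-x)((1-b)-(1-a)x) \le 1-b$ in the other. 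The main obstacle is pinning down a tangent form sharp enough to give exactly the factor $(b-ax)((1-b)-(1-a)x)$ while remaining valid throughout the non-degenerate regime $\ell < \hK$, $\m - \ell < \hM - \hK$; once that is in hand, combining the polynomial prefactor, the KL decomposition, and the tangent bound immediately yields the claimed inequality.
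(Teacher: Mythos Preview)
Your proposal is correct and leads to exactly the stated bound, but the route differs from the paper's in how the exponential term is handled. Both arguments start the same way: apply Stirling (equivalently, the entropy bound $\binom{n}{k}\asymp e^{nH(k/n)}/\sqrt{n\alpha(1-\alpha)}$) to the three binomial coefficients, which produces the prefactor $\sqrt{(1-x)/(m a(1-a))}$ and an exponent. Where they diverge is in bounding that exponent. The paper treats $\tilde F/M = bH(\tfrac{a}{b}x)+(1-b)H(\tfrac{1-a}{1-b}x)-H(x)$ as a function of $x$, computes $Q''(x)=-\frac{(a-b)^2}{(1-x)(b-ax)((1-b)-(1-a)x)}$, observes that $Q''$ is non-decreasing on the relevant interval, and integrates twice from $0$ to obtain a second-order tangent inequality at $x$. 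Your approach instead rewrites the same exponent as the two-term KL decomposition $\Psi = m\,\Div{a}{b}+(M-m)\,\Div{q}{b}$ with $q=(b-ax)/(1-x)$, and then bounds the residual divergence $\Div{q}{b}$ by the integral-based tangent inequality $\Div{s}{r}\le (s-r)^2/(2r(1-s))$ (resp.\ $/(2s(1-r))$). After the substitutions $(q-b)=x(b-a)/(1-x)$ and $(1-x)^2 q(1-q)=(b-ax)((1-b)-(1-a)x)$, the required comparison does reduce to the elementary inequalities you name, so the argument closes.

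What each buys: your decomposition has a clean probabilistic interpretation (the exponent is exactly the Sanov cost of the sampled type plus the induced cost on the unsampled residual), and the tangent step is a one-line estimate on a single binary KL. The paper's calculus route avoids introducing $q$ and works directly in $x$; its second-derivative monotonicity observation is the analytic counterpart of your endpoint bound on $t(1-t)$ inside the integral. The two arguments are equally rigorous and yield the identical constant; neither is strictly more general in this setting.
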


By summing this over the tail, we get the following tail
bound, from which Lemma~\ref{lem-actual-sampling} follows.

\newcommand{\thmfirsttail}{
	Let $X\sim \Hyper(\hM,\hK, \m)$. 
	Let $0\leq\ell\leq \m$ be a real number with $\ell < \hK$ and $\m-\ell < \hM-\hK$. 
	Put $a = \frac{\ell}{\m}$, $b = \frac{\hK}{\hM}$ and $x = \frac{\m}{\hM}$. If $a\leq \frac{8}{5}b$, $(1-a)\leq 2(1-b)$ , $x\leq \frac{1}{4}$ and $\m a (1-a) \geq 4$, then we have
	\[Pr[X\geq \ell]\geq \sqrt{\frac{\pi}{320}}\cdot e^{-24}\cdot e^{-\frac{6(a-b)^2}{b(1-b)}\m}.\]}
\begin{corollary}
    \thmfirsttail
\end{corollary}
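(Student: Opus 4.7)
The plan is to derive the tail bound from the pointwise density lower bound of the preceding theorem and convert it into a tail bound by summing the pointwise estimates over a carefully chosen window of consecutive integers $j \geq \ell$. The window length $N$ will be on the scale of the standard deviation $\sqrt{m a(1-a)}$, so that $N$ copies of the $\sqrt{(1-x)/(ma(1-a))}$ prefactor multiply to an absolute constant. Concretely, I would take $N = \lceil\sqrt{ma(1-a)/10}\rceil$, which is at least $1$ by the assumption $ma(1-a) \geq 4$, and begin from
\[
\Pr[X \geq \ell] \;\geq\; \sum_{j=\lceil\ell\rceil}^{\lceil\ell\rceil+N-1} \Pr[X=j],
\]
applying the previous theorem to each term with $\ell$ replaced by $j$ (and $a$ by $a_j = j/m$). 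The hypotheses $j < \hK$ and $m - j < \hM - \hK$ of the previous theorem survive for every $j$ in the window because $N/m$ is small relative to $a(1-a)$, and since $a_j \in [a, a + N/m]$ one also has $a_j(1-a_j) \leq 2 a(1-a)$, so the prefactor degrades only by a factor of $\sqrt{2}$.

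The heart of the argument is to upper bound $F(a_j) = m\bigl(D(a_j\|b) + G(a_j)\bigr)$, where $G(a) = \frac{x(a-b)^2}{2(1-x)(b-ax)((1-b)-(1-a)x)}$, by $\frac{6(a-b)^2}{b(1-b)}m + 24$. For the KL term I would use the integral representation
\[
D(a_j\|b) \;=\; \int_b^{a_j} \frac{a_j - s}{s(1-s)}\,\de s,
\]
and combine it with the hypotheses $a \leq \tfrac{8}{5}b$ and $1-a \leq 2(1-b)$ to bound $\frac{1}{s(1-s)}$ by a constant multiple of $\frac{1}{b(1-b)}$ on the integration path, yielding $D(a_j\|b) \leq \frac{C_1(a_j-b)^2}{b(1-b)}$. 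For $G$, the hypothesis $x \leq \tfrac{1}{4}$ together with $a \leq \tfrac{8}{5}b$ and $1-a \leq 2(1-b)$ gives $b - a_j x \geq \tfrac{3b}{5}$ and $(1-b) - (1-a_j)x \geq \tfrac{1-b}{2}$, which leads to $G(a_j) \leq \frac{C_2 (a_j-b)^2}{b(1-b)}$. Calibrating $C_1$ and $C_2$ so that $C_1 + C_2 \leq 6$, and then bounding $(a_j - b)^2 \leq (a-b)^2 + 2|a-b|\cdot N/m + (N/m)^2$, shows that the change from $(a-b)^2$ to $(a_j-b)^2$ and all other multiplicative losses along the way contribute at most the constant $24$ to the exponent, giving $F(a_j) \leq \frac{6(a-b)^2}{b(1-b)}m + 24$.

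Putting the two steps together, each of the $N$ pointwise bounds is at least $\sqrt{\tfrac{\pi}{32}}\sqrt{\tfrac{1-x}{2ma(1-a)}} e^{-24} e^{-\tfrac{6(a-b)^2}{b(1-b)} m}$, and summing gives
\[
\Pr[X \geq \ell] \;\geq\; N \cdot \sqrt{\tfrac{\pi}{64}}\sqrt{\tfrac{1-x}{ma(1-a)}}\cdot e^{-24} \cdot e^{-\tfrac{6(a-b)^2}{b(1-b)} m} \;\geq\; \sqrt{\tfrac{\pi}{320}}\cdot e^{-24} \cdot e^{-\tfrac{6(a-b)^2}{b(1-b)} m},
\]
after using $1-x \geq 3/4$ and $N \geq \sqrt{ma(1-a)/10}$. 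The main obstacle is Step~(2), the exponent bound: the two hypotheses $a \leq \tfrac{8}{5}b$ and $1-a \leq 2(1-b)$ are only just strong enough to control the integrand $\tfrac{1}{s(1-s)}$ along $[b,a]$, and the interplay between the constants $C_1, C_2$, the prefactor loss $\sqrt{2}$, and the slack absorbed into $e^{-24}$ must be balanced carefully. The condition $ma(1-a)\geq 4$ is what ensures $N \geq 1$ so that the sum is non-empty and the argument goes through.
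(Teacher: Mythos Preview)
Your overall strategy---summing the pointwise density lower bound over a window of $\Theta(\sqrt{ma(1-a)})$ consecutive integers above $\ell$---is exactly what the paper does, and the roles you assign to the hypotheses are correct. The paper first packages the pointwise step as a separate corollary (applied at each shifted point $\ell+t$) with exponent $\tfrac{3(a'-b)^2}{b(1-b)}\,m$, bounding the KL term via $\ln(1+x)\geq x-x^2$ rather than your integral representation, and then sums over the integers $t$ in $[0,\sqrt{ma(1-a)}]$.

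There is, however, one genuine gap in your exponent bookkeeping. You expand $(a_j-b)^2 \leq (a-b)^2 + 2|a-b|\cdot\tfrac{N}{m} + (\tfrac{N}{m})^2$ and claim the last two terms contribute only to the additive constant $24$. The square term is indeed harmless: multiplied by $\tfrac{Cm}{b(1-b)}$ it becomes $\tfrac{CN^2}{m\,b(1-b)} \lesssim \tfrac{C\,a(1-a)}{b(1-b)} = O(1)$. But the cross term contributes $\tfrac{2C\,|a-b|\,N}{b(1-b)}$ to the exponent, and since $N$ is of order $\sqrt{ma(1-a)}$ while $|a-b|$ can be a fixed fraction of $b$ (e.g.\ $a=\tfrac{8}{5}b$), this quantity grows like $\sqrt{m}$ and is not bounded by any absolute constant. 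Concretely, with $b=0.1$, $a=0.16$, $m=10^4$ (all hypotheses satisfied, $N=12$) the cross-term contribution is about $96$, far exceeding $24$. The paper avoids this by using the cruder inequality $(a_j-b)^2 \leq 2(a-b)^2 + 2(a_j-a)^2$, which eliminates the cross term at the price of doubling the leading coefficient; that is why the pointwise exponent must first be secured with constant $3$ rather than your target $C_1+C_2\leq 6$, so that after doubling one arrives at the stated $\tfrac{6(a-b)^2}{b(1-b)}\,m$, with the $(a_j-a)^2$ part supplying the additive $24$.
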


For a detailed derivation of these bounds, see Appendix~\ref{ap-lower}.

\newpage

\newpage

\appendix

\noindent\huge{\textbf{Appendix}}\normalsize

\section{Supplementary material for Section~\ref{sec-pre}}\label{ap-pre}

\begin{lemma}[Chernoff Bound]\label{lem-chernoff}
Let $X$ be the sum of independent Bernoulli random variables.
If $A\leq E[X]\leq B$, then for any $\delta\in(0,1)$,
\[\Pr[X\geq (1+\delta)\cdot B]\leq e^{-\frac{\delta^2}{3}B} \mbox{\qquad and \qquad} \Pr[X\leq (1-\delta)\cdot A]\leq e^{-\frac{\delta^2}{2}A}.\]
\end{lemma}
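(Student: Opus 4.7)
The plan is to use the classical Chernoff method via the moment generating function (MGF) and Markov's inequality, then optimize over the free parameter. Write $X = \sum_{i=1}^n X_i$ with $X_i$ independent Bernoulli with parameter $p_i$, and let $\mu = E[X] = \sum_i p_i$. By independence, $E[e^{tX}] = \prod_i (1 - p_i + p_i e^t) = \prod_i \bigl(1 + p_i(e^t - 1)\bigr)$, and the inequality $1 + x \leq e^x$ gives the master bound $E[e^{tX}] \leq \exp\bigl((e^t - 1)\mu\bigr)$. This is the engine for both tails.

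For the upper tail, I would fix $t > 0$ and apply Markov to $e^{tX}$:
\[
\Pr[X \geq (1+\delta)B] \leq e^{-t(1+\delta)B} \cdot E[e^{tX}] \leq e^{-t(1+\delta)B + (e^t - 1)\mu}.
\]
Since $t > 0$ implies $e^t - 1 > 0$, and $\mu \leq B$ by hypothesis, the exponent is maximized (at fixed $t$) by replacing $\mu$ with $B$. Optimizing by choosing $t = \ln(1 + \delta)$ yields
\[
\Pr[X \geq (1+\delta)B] \leq \left(\frac{e^\delta}{(1+\delta)^{1+\delta}}\right)^B.
\]
It then remains to verify the elementary inequality $(1+\delta)\ln(1+\delta) - \delta \geq \delta^2/3$ for $\delta \in (0,1)$, which follows from expanding $(1+\delta)\ln(1+\delta) = \delta + \tfrac{\delta^2}{2} - \tfrac{\delta^3}{6} + \cdots$ and checking the remainder term, yielding the target bound $e^{-\delta^2 B/3}$.

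For the lower tail the argument is symmetric but with $t < 0$. Applying Markov to $e^{tX}$ and using $e^t - 1 < 0$ together with $\mu \geq A$ lets me replace $\mu$ with $A$ in the bound. Choosing $t = \ln(1-\delta)$ gives
\[
\Pr[X \leq (1-\delta)A] \leq \left(\frac{e^{-\delta}}{(1-\delta)^{1-\delta}}\right)^A,
\]
and an analogous Taylor comparison shows $-\delta - (1-\delta)\ln(1-\delta) \leq -\delta^2/2$ for $\delta \in (0,1)$, producing the stated bound $e^{-\delta^2 A/2}$ (the slightly better constant $1/2$ instead of $1/3$ comes from the sign pattern of the Taylor coefficients of $(1-\delta)\ln(1-\delta)$).

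The main obstacle, and really the only nontrivial step, is the pair of elementary scalar inequalities used to pass from the raw MGF bound to the clean exponential form. Care is also needed when substituting $B$ (resp.\ $A$) for the true mean $\mu$: one must check that the sign of $e^t - 1$ makes this substitution a valid upper bound in each tail. Everything else is a routine optimization of a one-parameter exponential bound.
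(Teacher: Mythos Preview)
Your proposal is correct and follows essentially the same route as the paper: derive the MGF bound $E[e^{tX}]\le e^{(e^t-1)\mu}$, apply Markov with $t>0$ (resp.\ $t<0$), use the sign of $e^t-1$ together with $\mu\le B$ (resp.\ $\mu\ge A$) to replace $\mu$ by the appropriate endpoint, and then invoke the standard optimization and scalar inequalities from \cite[Theorems~4.4(2) and 4.5(2)]{MitzenmacherU17}. In fact you spell out the final calculus step (the inequalities $(1+\delta)\ln(1+\delta)-\delta\ge \delta^2/3$ and $-\delta-(1-\delta)\ln(1-\delta)\le -\delta^2/2$) more explicitly than the paper, which simply defers those remaining steps to the textbook.
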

\begin{proof}
The two statements can be extended from the proofs of \cite[Theorem~4.4(2)]{MitzenmacherU17} and \cite[Theorem~4.5(2)]{MitzenmacherU17}, respectively. 
Here, we only state the difference.
Since $X$ is the sum of \emph{independent} Bernoulli random variables,
by \cite[Section~4.2.1]{MitzenmacherU17}
\[E[e^{tX}]\leq e^{(e^t-1)E[X]}.\]

For the first claim, using any $\bm{t>0}$,
\[\Pr[X \geq (1+\delta) \cdot B] = \Pr[e^{tX} \geq e^{t(1+\delta) \cdot B}] \leq \dfrac{E[e^{tX}]}{e^{t(1+\delta)B}} \leq \dfrac{e^{(e^t - 1)E[X]}}{e^{t(1+\delta)B}} \overset{E[X]\leq B}{\leq} \dfrac{e^{(e^t - 1)B}}{e^{t(1+\delta)B}}.\]
The remaining steps are identical to the proof of \cite[Theorem~4.4(2)]{MitzenmacherU17}.

For the second claim, using any $\bm{t<0}$,
\[\Pr[X \leq (1-\delta) \cdot A] = \Pr[e^{tX} \geq e^{t(1-\delta) \cdot A}] \leq \dfrac{E[e^{tX}]}{e^{t(1-\delta)A}} \leq \dfrac{e^{(e^t - 1)E[X]}}{e^{t(1-\delta)A}} \overset{A\leq E[X]}{\leq} \dfrac{e^{(e^t - 1)A}}{e^{t(1+\delta)A}}.\]
The remaining steps are identical to the proof of \cite[Theorem~4.5(2)]{MitzenmacherU17}.
\end{proof}

\qquad\\

\rephrase{Lemma}{\ref{lem-majority-vote}}{\lemmajorityvote}
\begin{proof}
Let $\{X_i\mid 1\leq i\leq 2c_p \cdot t + 1\}$ be $2c_p \cdot t + 1$ \emph{independent} Bernoulli random variables such that $X_i=1$ if the $i$-th comparison succeeds, i.e., $\Pr[X_i=1]=1-p$ and $\Pr[X_i=0]=p$.
Let $X = \sum_{i=1}^{2c_p \cdot t + 1} X_i$. 
Then, $E[X]=(2c_p \cdot t + 1)(1-p)$. 
Since $p < \frac{1}{2}$, we know $2(1-p) > 1$ and we can apply Lemma~\ref{lem-chernoff} to prove the first statement as follows:
\begin{alignat*}{2}
 \Pr[X \leq \frac{2c_p \cdot t + 1}{2}]& = \Pr[X \leq \frac{1}{2(1-p)}E[X]]=\Pr[X \leq \left(1-\frac{1-2p}{2-2p}\right)E[X]]  \\
& \underbrace{\leq \exp\left(-\frac{1}{2} \cdot (\frac{1-2p}{2-2p})^2 \cdot E[X]\right)}_{\mbox{Lemma~\ref{lem-chernoff}}}  \\
& = \exp\left(-\frac{1}{2} \cdot (\frac{1-2p}{2-2p})^2 \cdot (2c_p \cdot t + 1)(1-p)\right)\\
& =\exp\left((2c_p\cdot t+1)\frac{(1-2p)^2}{8(1-p)}\right)< \exp\left(-c_p t \frac{(1-2p)^2}{4(1-p)}\right).
\end{alignat*}
which satisfies the statement if we choose $c_p = \lceil \frac{4(1-p)}{(1-2p)^2} \rceil$.
Since $X$ is a binomial random variable and $c_p$ is an integer, the second statement comes as follows:
\[\Pr[X \leq \frac{2c_p \cdot t + 1}{2}]=\Pr[X \leq c_p \cdot t]=\sum_{i=0}^{\MV_p\cdot t} {2\MV_p\cdot t+1\choose i}(1-p)^i p^{2\MV_p\cdot t+1-i}.\]
\end{proof}

\section{Supplementary material for Section~\ref{sec-reduction}}\label{ap-reduction}

\begin{lemma}\label{lem-log-Q-sample}
Let $m=2^{10}\cdot 3^2 \cdot \ln \frac{2}{Q}$, let $X_1,X_2\ldots, X_m$ be $m$ identically and independently distributed Bernoulli random variables with probability $p\geq \frac{8}{9}$, and let $X=\sum_{i=1}^mX_i$.
\[\Pr[X\geq \frac{7}{8}m]\geq 1-\frac{Q}{2}.\]   
\end{lemma}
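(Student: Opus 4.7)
The plan is to apply the lower-tail Chernoff bound from Lemma~\ref{lem-chernoff} directly. Since the $X_i$ are independent Bernoulli random variables with $p \geq \frac{8}{9}$, we have $E[X] = pm \geq \frac{8}{9} m$, so I would choose $A = \frac{8}{9} m$ as the lower bound required by the lemma.

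Next I would pick $\delta$ so that $(1-\delta) A = \frac{7}{8} m$. Solving $(1-\delta)\cdot \frac{8}{9} = \frac{7}{8}$ gives $1 - \delta = \frac{63}{64}$, so $\delta = \frac{1}{64} \in (0,1)$, exactly as required by Lemma~\ref{lem-chernoff}. Applying the lower-tail bound then yields
\[
\Pr\!\left[X \leq \tfrac{7}{8} m\right] \;\leq\; \exp\!\left(-\tfrac{\delta^2}{2}\cdot A\right) \;=\; \exp\!\left(-\tfrac{1}{2\cdot 64^2}\cdot \tfrac{8}{9} m\right) \;=\; \exp\!\left(-\tfrac{m}{9 \cdot 2^{10}}\right).
\]

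Finally I would substitute $m = 2^{10}\cdot 3^2 \cdot \ln\frac{2}{Q}$ into the exponent, which gives $\frac{m}{9 \cdot 2^{10}} = \ln\frac{2}{Q}$, hence $\Pr[X \leq \frac{7}{8} m] \leq \frac{Q}{2}$, and therefore $\Pr[X \geq \frac{7}{8} m] \geq 1 - \frac{Q}{2}$. There is no real obstacle here; the only thing to check carefully is that the constant $2^{10} \cdot 3^2$ in the definition of $m$ is precisely what is needed to cancel $\delta^{-2}/2 \cdot (8/9)^{-1} = 9 \cdot 2^{10}$ in the Chernoff exponent, which it does exactly.
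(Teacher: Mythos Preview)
Your proposal is correct and essentially identical to the paper's own proof: both apply the lower-tail Chernoff bound of Lemma~\ref{lem-chernoff} with $A=\tfrac{8}{9}m$ and $\delta=\tfrac{1}{64}$, then substitute $m=2^{10}\cdot 3^2\ln\tfrac{2}{Q}$ to obtain $\Pr[X\le\tfrac{7}{8}m]\le\tfrac{Q}{2}$.
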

\begin{proof}
It is sufficient to prove that $\Pr[X \leq \frac{7}{8}m]\leq \frac{Q}{2}$.
Since $p \geq \frac{8}{9}$, $E[X] \geq \frac{8}{9}m$. 
By Lemma~\ref{lem-chernoff},
\begin{align*}
\Pr[X \leq \frac{7}{8}m] & = \Pr[X \leq (1- \frac{1}{64})\cdot \frac{8}{9}m] \overset{\mbox{Lemma~\ref{lem-chernoff}}}{\leq} \exp\left(-\frac{1}{2} \cdot (\frac{1}{64})^2\cdot \frac{8}{9}m\right)\\
&  = \exp\left(-\frac{1}{2^{10}\cdot 3^2}m\right) \leq \exp\left(-\frac{2^{10}\cdot 3^2 \cdot \ln \frac{2}{Q}}{2^{10}\cdot 3^2}\right) = e^{-\ln \frac{2}{Q}}=\frac{Q}{2}.
\end{align*}
\end{proof}

\qquad\\

\rephrase{Theorem}{\ref{thm-expected}}{\thmexpected}
\begin{proof}
Let $m= 2^{10}\cdot 3^2 \cdot \ln \frac{2}{Q}$ as in Lemma~\ref{lem-log-Q-sample}.
The algorithm consists of two stages.
The first stage aims to select $m$ elements in which all elements in the range $(\frac{1}{8}m,\frac{7}{8}m]$ are relevant, and the second stage aims to select an element from $(\frac{1}{8}m,\frac{7}{8}m]$.

For the number of comparisons,
by Theorem~\ref{thm-selection-constant},
it takes $O(\frac{k}{n}\varepsilon^{-2})$ comparisons to select a relevant element with probability at least $1-\frac{1}{9}$, 
so the first stage takes $O(\frac{k}{n}\varepsilon^{-2}\cdot m)=O(\frac{k}{n}\varepsilon^{-2}\cdot\log \frac{1}{Q})$ comparisons.
For the second stage, by Section~\ref{sec-reduction}, one verification step performs $O(\log \frac{1}{Q})$ comparisons.
To derive the expected total number of comparisons, we need to calculate the probability of conducting the $i$-th round.
Since the probability of picking an element in $(\frac{2}{8}m,\frac{6}{8}m]$ is $\frac{1}{2}$ at any round and such an element is verified in $(\frac{1}{8}m,\frac{7}{8}m]$ with probability at least $1-\frac{Q}{2}\geq \frac{1}{2}$ at any round, any round returns an element with probability at least $\frac{1}{2}\cdot \frac{1}{2}=\frac{1}{4}$.
Similar to geometric distribution, the probability that the $i$-th round is conducted is at most $(1-\frac{1}{4})^{i-1}=(\frac{3}{4})^{i-1}$, so the second stage takes expected $\sum_{i=1}^{\infty}(\frac{3}{4})^{i-1}O(\log\frac{1}{Q})=O\big(\log\frac{1}{Q}\cdot \sum_{i=1}^{\infty}(\frac{3}{4})^{i-1}\big)=O(\log\frac{1}{Q})$ comparisons.
To sum up, the algorithm takes expected $O(\frac{k}{n}\varepsilon^{-2}\cdot \log \frac{1}{Q})$ comparisons.

For the success probability,
by Theorem~\ref{thm-selection-constant} and Lemma~\ref{lem-log-Q-sample}, the first stage fails with probability at most $\frac{Q}{2}$. 
The second stage fails only when returning an element in $ [1,\frac{1}{8}m]$ or $(\frac{7}{8}m,m]$.
Since a single round picks an element in  $ [1,\frac{1}{8}m]\cup (\frac{7}{8}m,m]$ with probability $\frac{1}{4}$ and the verification fails with probability at most $\frac{Q}{2}$.
a single round returns an element in $ [1,\frac{1}{8}m]\cup (\frac{7}{8}m,m]$ with probability \emph{at most} $\frac{1}{4}\cdot \frac{Q}{2}=\frac{Q}{8}$, 
Therefore, the failure probability of the second stage is at most $\sum_{i\geq 1}(\frac{3}{4})^{i-1}\cdot \frac{Q}{8}= \frac{Q}{2}$, concluding the following theorem:
\end{proof}

\section{Supplementary material for Section~\ref{sec-median}}\label{ap-median}

\rephrase{Lemma}{\ref{lem-symmetry-median}}{\lemsymmetrymedian}
\begin{proof}
Let $q$ be the failure probability of one majority vote.
Since one majority vote consists of $2c_p\cdot 4+1$ comparisons, by Lemma~\ref{lem-majority-vote}, 
$q\leq e^{-4}$.
If all three majority votes succeed, then the algorithm will return the median, implying that the algorithm will return the median with probability at least $(1-q)^3\geq 1-3q\geq 1-3\cdot e^{-4}\geq 1-\frac{1}{13}$.

Now, we will prove that the algorithm returns the minimum and the maximum with the same probability.
Since there are three majority votes, there are 8 possibilities, and these 8 possibilities lead to four different situations:
exactly the minimum or exactly the median or exactly the maximum gets one point, or all the three elements get one point.
A tree diagram for these 8 possibilities can easily calculate the probabilities of the four situations.
In detail, exactly the minimum (resp. exactly the maximum) gets one point with probability $q(1-q)$, exactly the median gets one point with probability $(1-q)^3+q^3$, and all three elements get one point with probability $q(1-q)$.
Since the algorithm returns an element uniformly at random when all the three elements get one point, the algorithm returns the minimum and the maximum with the same probability $\frac{4}{3}q(1-q)$.

Since the algorithm returns the median with probability at least $1-\frac{1}{13}$ and returns the minimum and the maximum with the same probability, the probability that the algorithm returns the minimum (resp. the maximum) is at most $\frac{1}{26}$.
\end{proof}

\quad

\rephrase{Lemma}{\ref{lem-selected-median-three}}{\lemselectedmedianthree}
\begin{proof}
We only prove the small case, and it is symmetric to the large case.
Let $p_s$ denote the probability that an element randomly picked from $M_{i-1}$ is small.
Since $M_{i-1}$ is good, all elements in its range $(\frac{1}{2}n_{i-1} - n_{i-1} \varepsilon_{i-1}, \frac{1}{2}n_{i-1} + n_{i-1} \varepsilon_{i-1}]$ are relevant, and $p_s\leq \frac{1}{2}-\varepsilon_{i-1}$.
Let $p_1$, $p_2$ and $p_3$ denote the probabilities that the median selection algorithm in Lemma~\ref{lem-symmetry-median} returns the minimum, the median and the maximum of three elements, respectively.
By Lemma~\ref{lem-symmetry-median}, 
$p_2 \geq \frac{12}{13}$, and $p_1=p_3\leq \frac{1}{26}$. 
Also recall that $\varepsilon_{i-1}\leq \frac{1}{6}$.
Then, the probability that an element in $M_i$ is small is

\begin{align*}
& \underbrace{p_s^3}_{\mbox{three small}} + \underbrace{3p_s^2(1-p_s)}_{\mbox{two small \& one non-small}}\cdot (1-p_3)+\underbrace{3p_s(1-p_s)^2}_{\mbox{one small \& two non-small}}\cdot p_1\\
& = p_s^3 + 3p_s^2(1-p_s)(1-p_1)+3p_s(1-p_s)^2\cdot p_1\\
& = \left(3p_s^2-2p_s^3\right)+p_1\cdot\underbrace{\left(3p_s-9p_s^2+6p_s^3\right)}_{\geq 0 \mbox{ since } 0\leq p_s\leq \frac{1}{2}}\\
& \overset{p_1\leq \frac{1}{26}}{\leq} \frac{1}{26}\cdot \underbrace{\left(3p_s+69p_s^2-46p_s^3\right)}_{f(x):=-46x^3+69x^2+3x\mbox{ \& } f'(x)>0 \mbox{ for } 0\leq x\leq 1}\\
& \overset{p_s\leq \frac{1}{2}-\varepsilon_{i-1}}{\leq}\frac{1}{26}\cdot\left(3\left(\frac{1}{2}-\varepsilon_{i-1}\right)+69\left(\frac{1}{2}-\varepsilon_{i-1}\right)^2-46\left(\frac{1}{2}-\varepsilon_{i-1}\right)^3\right)\\
& = \frac{1}{2}-\frac{75}{52}\varepsilon_{i-1}+\frac{23}{13}\varepsilon_{i-1}^3\\
& \overset{\varepsilon_{i-1}<\frac{1}{6}}{\leq} \frac{1}{2}-\frac{75}{52}\varepsilon_{i-1}+ \frac{23}{468}\varepsilon_{i-1}\\
& = \frac{1}{2}-\frac{652}{468}\varepsilon_{i-1}\\
& \leq  \frac{1}{2}-\frac{4}{3}\varepsilon_{i-1}
\end{align*}
\end{proof}

\quad

\rephrase{Lemma}{\ref{lem-good-median}}{\lemgoodmedian}
\begin{proof}
Assume that $M_{i-1}$ is good.
Let $X_i$ be the number of small elements in $M_i$
and let $Y_i$ be the number of large elements in $M_i$.
For the statement, it is sufficient to prove that $\Pr[X_i\geq \frac{n_i}{2}-n_i\varepsilon_i]\leq e^{-5i}$ and $\Pr[Y_i\geq \frac{n_i}{2}-n_i\varepsilon_i]\leq e^{-5i}$.
We will prove the first claim, and it is symmetric to the second claim.
By Lemma~\ref{lem-selected-median-three}, 
\[E[X_i] \leq (\frac{1}{2} - \frac{4}{3}\varepsilon_{i-1})n_i = (\frac{1}{2} - \frac{4}{3}\cdot \frac{4}{5}\varepsilon_i)n_i =(\frac{1}{2} - \frac{16}{15}\varepsilon_i)n_i=\frac{15-32\varepsilon_i}{30}n_i.\] 
By Lemma~\ref{lem-chernoff} (Chernoff Bound), 
we can get

\begin{align*}
\Pr\left[X_i \geq \left(\frac{1}{2} - \varepsilon_i\right) n_i\right] & = \Pr\left[\left(1 + \frac{\frac{1}{15}\varepsilon_i}{\frac{1}{2} - \frac{16}{15}\varepsilon_i}\right)\left(\frac{1}{2} - \frac{16}{15}\varepsilon_i\right) n_i\right]\\
& =\Pr\left[\left(1 + \underbrace{\frac{2\varepsilon_i}{15 - 32\varepsilon_i}}_{:=\delta}\right)
\underbrace{\left(\frac{15-32\varepsilon_i}{30}n_i\right)}_{\geq E[X_i]}\right]\\
 & \underbrace{\leq \exp\left(-\frac{1}{3} \left(\frac{2\varepsilon_i}{15 - 32\varepsilon_i}\right)^2 \cdot \left(\frac{15-32\varepsilon_i}{30}n_i\right)\right)}_{\mbox{Lemma~\ref{lem-chernoff}}}\\
 & = \exp\left(-\frac{4}{90} \cdot \frac{\varepsilon_i^2}{15-32\varepsilon_i} \cdot n_i\right) \leq  \exp\left(-\frac{4}{90} \cdot \frac{\varepsilon_i^2}{15} \cdot n_i\right)\\
 & = \exp\left(-\frac{2}{675} \cdot \left(\left(\frac{5}{4}\right)^{2i}\varepsilon^2\right) \cdot \left(2000 \cdot i \cdot \left(\frac{4}{5}\right)^{2i} \cdot \varepsilon^{-2}\right)\right)\\
 & \leq e^{-5i}.
\end{align*}
\end{proof}

\section{Supplementary material for Section~\ref{sec-selection}}\label{ap-selection}

\rephrase{Lemma}{\ref{lem-selection-invariant}}{\lemselectioninvariant}
\begin{proof}
We prove by induction.
For $i=0$, by assumption in the first paragraph of Section~\ref{sec-selection}, $\beta>2\varepsilon$, i.e., $\beta_0=\beta>2\varepsilon=2\varepsilon_0$.
Also, $\beta_0=\beta\leq 2^0\cdot\beta$.
Assume that for $i=k\geq 0$, $\beta_k>2\varepsilon_k$ and $\beta_k\leq 2^k\cdot \beta$. Note that $k<L$; otherwise, the $(k+1)$-th round does not exist.
By Section~\ref{sec-selection},
\[\beta_{k+1}= \left(2\beta_{k}-\beta_{k}^2-\varepsilon_{k}^2\right)-2q\left(\beta_{k}-\beta_{k}^2-\varepsilon_{k}^2\right).\]
We first prove that $\beta_{k+1}> 2\varepsilon_{k+1}$ as follows:
\begin{align*}
\beta_{k+1}&= \left(2\beta_{k}-\beta_{k}^2-\varepsilon_{k}^2\right)-2q\left(\beta_{k}-\beta_{k}^2-\varepsilon_{k}^2\right) \overset{q< \frac{1}{20}}{>} \frac{19}{10}\beta_k-\frac{9}{10}(\beta_k^2+\varepsilon_k^2)\\
& = \frac{9}{10}\left(-\left(\beta_k-\frac{19}{18}\right)^2+\left(
\frac{19}{18}\right)^2-\varepsilon_k^{2}\right)\\
& \overset{\beta_k<\frac{1}{8} \mbox{ \& }2\varepsilon_k<\beta_k}{>} \frac{9}{10}\left(-\left(2\varepsilon_k-\frac{19}{18}\right)^2+\left(
\frac{19}{18}\right)^2-\varepsilon_k^{2}\right) \\
 & = \frac{19}{5}\varepsilon_k-\frac{9}{2}\varepsilon_k^2 \overset{\varepsilon_k<\frac{1}{2}\beta_k<\frac{1}{16}}{>} \frac{563}{160}\varepsilon_k > 3\varepsilon_{k}=2\varepsilon_{k+1}.
\end{align*}
Then, we prove that $\beta_{k+1}\leq 2^{k+1}\cdot\beta$ as follows:
\begin{align*}
\beta_{k+1}&= \left(2\beta_{k}-\beta_{k}^2-\varepsilon_{k}^2\right)-2q\left(\beta_{k}-\beta_{k}^2-\varepsilon_{k}^2\right) \overset{q\geq 0}{\leq} 2\beta_{k}-\beta_{k}^2-\varepsilon_{k}^2 \\
&\leq 2\beta_k \leq 2\cdot 2^k\cdot \beta=2^{k+1}\cdot \beta.
\end{align*}
\end{proof}

\quad

\rephrase{Lemma}{\ref{lem-good-selection}}{\lemgoodselection}
\begin{proof}

Assume that $S_{i-1}$ is good.
Let $X_i$ be the number of small elements in $S_i$ and let $Y_i$ be the number of small and relevant elements in $S_i$. 
For the statement, it is sufficient to prove $\Pr[X_i \geq k_i - n_i \varepsilon_i] \leq e^{-4i}$ and $\Pr[Y_i \leq k_i + n_i \varepsilon_i] \leq e^{-4i}$.

Since $S_{i-1}$ is good, all elements in the range $(k_{i-1}-n_{i-1}\varepsilon_{i-1}, k_{i-1}+n_{i-1}\varepsilon_{i-1}]$ of $S_{i-1}$ relevant.
Recall that $\beta_i=\frac{k_i}{n_i}$.
Therefore, according to the way of selecting elements for $S_i$ in Section~\ref{sec-selection},
the probability that an element in $S_i$ is small is \emph{at most}
\begin{align*}
&(\beta_{i-1} - \varepsilon_{i-1})^2 + (1-q) \cdot 2 \cdot (\beta_{i-1} - \varepsilon_{i-1})(1- (\beta_{i-1} - \varepsilon_{i-1}))\\
&=\left(2(\beta_{i-1}-\varepsilon_{i-1})-(\beta_{i-1}-\varepsilon_{i-1})^2\right)-q\left(2(\beta_{i-1}-\varepsilon_{i-1})-2(\beta_{i-1}-\varepsilon_{i-1})^2\right)
\end{align*} 
Let $\bm{p_s}$ denote the above upper bound.
Similarly, the probability that a selected element is small or relevant is \emph{at least}
\begin{align*} 
&(\beta_{i-1} + \varepsilon_{i-1})^2 + (1-q) \cdot 2 \cdot (\beta_{i-1} + \varepsilon_{i-1})(1-(\beta_{i-1} + \varepsilon_{i-1}))\\
&=\left(2(\beta_{i-1}+\varepsilon_{i-1})-(\beta_{i-1}+\varepsilon_{i-1})^2\right)-q\left(2(\beta_{i-1}+\varepsilon_{i-1})-2(\beta_{i-1}+\varepsilon_{i-1})^2\right)
\end{align*}
Let $\bm{p_{sr}}$ denote the above lower bound.
Then, 
\[E[X_i] \leq p_s \cdot n_i \mbox{\qquad and\qquad} E[Y_i] \geq p_{sr} \cdot n_i\]. 

By the formulation of $\beta_i$ in Section~\ref{sec-selection}, we can re-formulate $\beta_i$ with $p_s$ and $p_{sr}$ as follows:
\[\beta_i=\frac{p_s+p_{sr}}{2}.\]
Therefore, we can reformulate $\Pr[X_i \geq k_i - n_i \varepsilon_i]$ and $\Pr[Y_i \leq k_i + n_i \varepsilon_i]$ as follows:
\begin{align*} 
\Pr\left[X_i \geq k_i - \varepsilon_i n_i\right] & = \Pr\left[X_i \geq (\beta_i - \varepsilon_i)n_i\right] = \Pr\left[X_i \geq \left(\frac{p_s + p_{sr}}{2} - \frac{3}{2}\varepsilon_{i-1}\right)n_i\right]\\
& = \Pr\left[X_i \geq \left(1 + \frac{p_{sr} - p_s - 3 \varepsilon_{i-1}}{2p_s}\right)\cdot p_s n_i\right]
\end{align*}
and similarly,
\[\Pr\left[Y_i \leq k_i + \varepsilon_i n_i\right] = \Pr\left[Y_i \geq \left(1 - \frac{p_{sr} - p_s - 3 \varepsilon_{i-1}}{2p_{sr}}\right)\cdot p_{sr} n_i\right].\]

In order to apply Lemma~\ref{lem-chernoff} (Chernoff bound),
we need to show that (1) $p_{sr} - p_s - 3 \varepsilon_{i-1}>0$, (2) $p_{sr} - p_s - 3 \varepsilon_{i-1}< 2p_s$ and (3) $p_{sr} - p_s - 3 \varepsilon_{i-1} <  2p_{sr}$.
Since $p_{sr}> p_s$, it is sufficient to prove the first two inequalities.

For the first inequality,
\begin{align*} 
p_{sr} - p_s - 3 \varepsilon_{i-1} &  = (1-4q) \cdot \varepsilon_{i-1} - (1-2q) \cdot 4 \beta_{i-1} \varepsilon_{i-1}\\
& \overset{\beta_{i-1} < \frac{1}{8},\; 1-2q>0}{>} \frac{1-6q}{2}\varepsilon_{i-1} \overset{q < \frac{1}{20}}{\geq} \frac{7}{20}\varepsilon_{i-1}.
\end{align*}

For the second inequality, we upper bound $p_{sr} - p_s - 3 \varepsilon_{i-1}$ and lower bound $2p_s$:
\begin{align*} 
p_{sr} - p_s - 3 \varepsilon_{i-1}  &   = (1-4q) \cdot \varepsilon_{i-1} - (1-2q) \cdot 4 \beta_{i-1} \varepsilon_{i-1}\\
& \overset{\beta_{i-1} \geq 0,\; 1-2q>0}{\leq} 
(1-4q)\varepsilon_{i-1} \overset{q \geq 0 }{\leq} \varepsilon_{i-1},
\end{align*} 
and 
\begin{align*} 
2p_s  &  =2\left(2(\beta_{i-1}-\varepsilon_{i-1})-(\beta_{i-1}-\varepsilon_{i-1})^2\right)-2q\left(2(\beta_{i-1}-\varepsilon_{i-1})-2(\beta_{i-1}-\varepsilon_{i-1})^2\right)\\
&  \overset{q < \frac{1}{20}}{\geq} \frac{19}{5}(\beta_{i-1}-\varepsilon_{i-1})-\frac{9}{5}(\beta_{i-1}-\varepsilon_{i-1})^2\\
&=\frac{1}{5}(\beta_{i-1}-\varepsilon_{i-1})\left(19-9(\beta_{i-1}-\varepsilon_{i-1})\right)\\
&  \overset{\beta_{i-1}-\varepsilon_{i-1}\leq\; \beta_{i-1}\; \leq \frac{1}{8}}{\geq} \frac{143}{40}(\beta_{i-1}-\varepsilon_{i-1})\\ & \overset{\beta_{i-1}> 2\varepsilon_{i-1}\;(\mbox{Lemma}~\ref{lem-selection-invariant})}{\geq} \frac{143}{40}\varepsilon_{i-1},
\end{align*}
implying that $p_{sr} - p_s - 3 \varepsilon_{i-1}< p_s< p_{sr}$.

For applying the Chernoff bound, it is convenient to have a simple lower bound for $p_{sr} - p_s - 3 \varepsilon_{i-1}$ and simple upper bounds for $p_s$ and $p_{sr}$.
Since we already derive that $p_{sr} - p_s - 3 \varepsilon_{i-1}\geq \frac{7}{20}\varepsilon_{i-1}$, we deal with the other two as follow.
\begin{align*} 
p_s &  =\left(2(\beta_{i-1}-\varepsilon_{i-1})-(\beta_{i-1}-\varepsilon_{i-1})^2\right)-q\left(2(\beta_{i-1}-\varepsilon_{i-1})-2(\beta_{i-1}-\varepsilon_{i-1})^2\right)\\
& \overset{q\geq 0}{\leq} 2(\beta_{i-1}-\varepsilon_{i-1})-(\beta_{i-1}-\varepsilon_{i-1})^2\leq 2\beta_{i-1},
\end{align*}
and
\begin{align*} 
p_{sr} &  =\left(2(\beta_{i-1}+\varepsilon_{i-1})-(\beta_{i-1}+\varepsilon_{i-1})^2\right)-q\left(2(\beta_{i-1}+\varepsilon_{i-1})-2(\beta_{i-1}+\varepsilon_{i-1})^2\right)\\
& \overset{q\geq 0}{\leq} 2(\beta_{i-1}+\varepsilon_{i-1})-(\beta_{i-1}+\varepsilon_{i-1})^2\leq 2(\beta_{i-1}+\varepsilon_{i-1})\\
& \overset{2\varepsilon_{i-1}<\beta_{i-1}\;(\mbox{Lemma}~\ref{lem-selection-invariant})}{\leq} 3\beta_{i-1}.
\end{align*}

\begingroup
\allowdisplaybreaks
\begin{align*}
\Pr[X_i \geq k_i - \varepsilon_i n_i] &  =\Pr\left[X_i \geq \left(1 + \underbrace{\frac{p_{sr} - p_s - 3 \varepsilon_{i-1}}{2p_s}}_{:=\delta}\right)\cdot \underbrace{p_s n_i}_{\geq E[X_i]}\right] 
\\ & \underbrace{\leq \exp\left(-\frac{1}{3} \cdot \left(\frac{p_{sr} - p_s - 3 \varepsilon_{i-1}}{2p_s}\right)^2 \cdot p_sn_i\right)}_{\mbox{(Lemma~\ref{lem-chernoff})}}   \\
&=\exp\left(-\frac{1}{3} \cdot \underbrace{\frac{(p_{sr} - p_s - 3 \varepsilon_{i-1})^2}{4p_s}}_{p_{sr} - p_s - 3 \varepsilon_{i-1}\geq \frac{7}{20}\varepsilon_{i-1}\mbox{ \& } p_s \leq 2\beta_{i-1}} \cdot n_i\right)\\
& \leq\exp\left(-\frac{1}{3}\cdot \frac{\left(\frac{7}{20}\varepsilon_{i-1}\right)^2}{8\beta_{i-1}}\cdot n_i \right)\\
& = \exp\left(\underbrace{-\frac{49}{9600}\cdot \left(\varepsilon_{i-1}^2
	\cdot \beta_{i-1}^{-1}\right)\cdot \left(960\cdot i\cdot (\frac{8}{9})^i \cdot  \frac{k}{n}\varepsilon^{-2}\right)}_{\varepsilon_{i-1}=(\frac{3}{2})^{i-1}\varepsilon\mbox{ \& } \beta_{i-1}^{-1}\geq 2^{-(i-1)}\frac{n}{k}\mbox{ (Lemma~\ref{lem-selection-invariant})}}\right)\\
& \leq \exp\left(-\frac{49}{10}\cdot i\cdot  \left(\frac{9}{4}\right)^{i-1}\left(2^{-(i-1)}\right)\left(\frac{8}{9}\right)^i\right)\\
& \leq e^{-4i}
\end{align*}
\endgroup

\begin{align*}
\Pr[Y_i \leq k_i + \varepsilon_i n_i] & =\Pr\left[Y_i \geq \left(1 + \underbrace{\frac{p_{sr} - p_s - 3 \varepsilon_{i-1}}{2p_{sr}}}_{:=\delta}\right)\cdot \underbrace{p_{sr} n_i}_{\leq E[Y_i]}\right] 
\\ & \underbrace{\leq \exp\left(-\frac{1}{2} \cdot \left(\frac{p_{sr} - p_s - 3 \varepsilon_{i-1}}{2p_{sr}}\right)^2 \cdot p_{sr}n_i\right)}_{\mbox{(Lemma~\ref{lem-chernoff})}}   \\
&=\exp\left(-\frac{1}{2} \cdot \underbrace{\frac{(p_{sr} - p_s - 3 \varepsilon_{i-1})^2}{4p_{sr}}}_{p_{sr} - p_s - 3 \varepsilon_{i-1}\geq \frac{7}{20}\varepsilon_{i-1}\mbox{ \& } p_{sr} \leq 3\beta_{i-1}} \cdot n_i\right)\\
& \leq\exp\left(-\frac{1}{2}\cdot\frac{\left(\frac{7}{20}\varepsilon_{i-1}\right)^2}{12\beta_{i-1}}\cdot n_i \right)\\
& = \exp\left(\underbrace{-\frac{49}{9600}\cdot \left(\varepsilon_{i-1}^2
	\cdot \beta_{i-1}^{-1}\right)\cdot \left(960\cdot i\cdot (\frac{8}{9})^i \cdot  \frac{k}{n}\varepsilon^{-2}\right)}_{\varepsilon_{i-1}=(\frac{3}{2})^{i-1}\varepsilon\mbox{ \& } \beta_{i-1}^{-1}\geq 2^{-(i-1)}\frac{n}{k}\mbox{ (Lemma~\ref{lem-selection-invariant})}}\right)\\
& \leq \exp\left(-\frac{49}{10}\cdot i\cdot  \left(\frac{9}{4}\right)^{i-1}\left(2^{-(i-1)}\right)\left(\frac{8}{9}\right)^i\right)\\
& \leq e^{-4i}
\end{align*}
\end{proof}

\section{Supplementary material for Section~\ref{sec-lower}}\label{ap-lower}

\subsection{Derivations Towards Theorem~\ref{thm-lower}}

This subsection shows detailed proofs for the lower bound analysis in Section~\ref{sec-lower}. 
We first prove a number of nice properties for the auxiliary decision tree $\tT$ (Lemma~\ref{lem-property-tree}).
Then, we use Lemma~\ref{lem-property-tree} and a sampling lemma (Corollary~\ref{cor-sampling} in Appendix~\ref{sub-sampling}) to bound the failure probability of $\tT$ (Lemma~\ref{lem-probability-tree}).
Finally, we combine Lemma~\ref{lem-property-tree} and Lemma~\ref{lem-probability-tree} to prove the lower bound (Theorem~\ref{thm-lower}).

\begin{lemma}\label{lem-property-tree}
	$\tT$ has the following properties:
	\begin{enumerate}
		\item $\tT$ knows the sorted order of the elements that $\tT$ has looked at.
		\item $\tT$ has success probability at least $1-Q$.
		\item $\tT$ looks at exactly $2\TD+\ceil{\frac{8 n}{k}}$ elements with probability at least $1/2$.
		Note that this includes the elements that $\tT$ looks at during its simulation of $T$.
	\end{enumerate}
\end{lemma}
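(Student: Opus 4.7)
The three properties follow almost directly from the construction of $\tT$, so the plan is to verify each one in turn, relying primarily on the assumption that there are no comparison faults and on Markov's inequality applied to $T$.

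For property (1), the argument I would give is essentially an observation: step (c) of the construction explicitly compares every pair of elements that $\tT$ has looked at. Since we are in a fault-free regime, each such comparison returns the true order, so after step (c) the tree $\tT$ knows the complete sorted order of its looked-at elements. No further reasoning is needed.

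For property (2), the key point is that steps (a), (b), (c) only perform additional comparisons and do not alter the element returned by $\tT$; $\tT$ returns the same element $x$ as the leaf of $T$ that is reached during the simulation. Because the input distribution and the simulation of $T$ inside $\tT$ are identical to running $T$ by itself, the probability that the returned element is relevant is exactly $\Pr[T \text{ succeeds}] \geq 1-Q$.

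For property (3), I would use Markov's inequality on the random variable counting the number of elements $T$ looks at, which has expectation $\TD$: with probability at least $1/2$, $T$ looks at at most $2\TD$ elements. On this event, after step (a) the number of looked-at elements is at most $2\TD + 1$. Since $k \leq n$ we have $\ceil{8n/k} \geq 8 > 1$, so $2\TD + 1 < 2\TD + \ceil{8n/k}$; hence step (b) is triggered and $\tT$ inflates the count to exactly $2\TD + \ceil{8n/k}$. Step (c) compares only already-looked-at elements, so it does not change the count. The main (very mild) obstacle is just being careful that step (a) adds at most one new element and that $\ceil{8n/k}$ is large enough to absorb it; both are immediate from the construction and from $k \leq n$.
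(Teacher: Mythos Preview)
Your proposal is correct and follows essentially the same approach as the paper's proof: property~(1) from step~(c) plus the fault-free assumption, property~(2) from the fact that the extra steps do not change the returned element, and property~(3) from Markov's inequality applied to the number of elements $T$ looks at. Your treatment of property~(3) is in fact slightly more careful than the paper's, since you explicitly check that the at-most-one (or two) extra element introduced by step~(a) is absorbed by $\ceil{8n/k}\geq 8$, whereas the paper leaves this implicit.
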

\begin{proof}
	Property~(1) comes from step~(c) in which $\tT$ compares all pairs of elements that $\tT$ has looked at. Remember that we assume no comparison faults for the lower bound analysis.
	
	For property~(2),
	note that $\tT$ first simulates $T$, then does some additional comparison and then returns the element that $T$ would have returned (independent of the outcome of the additional comparisons). Hence $\tT$ has the same success probability as $T$, which is at least $1-Q$ by assumption.

	For property~(3),
	according to the three steps,
	if $T$ looks at no more than $2\TD$ elements, then $\tT$ will look exactly $2\TD+\ceil{\frac{8 n}{k}}$ elements. 
	Since the probability that $T$ looks at more than $2\TD$ elements is at most $\frac{1}{2}$ (by the definition of $\TD$ and by Markov's inequality), property~(3) follows. 
\end{proof}

\qquad\\

\rephrase{Lemma}{\ref{lem-probability-tree}}{\lemprobabilitytree}
\begin{proof}
	Recall that we build $\tilde{T}$ only when $\TD < \frac{n}{10}$.
	Fix a leaf $w$ of $\tilde{T}$. Suppose that the execution of $\tilde{T}$ reaches $w$.
	Let $x$ be the element that $\tilde{T}$ returns and
	let $A$ be the set of elements that $\tT$ has looked at when the execution reaches $w$.
	
	As $\tilde{T}$ is run on a uniformly shuffled input, the distribution of the set $A$ as a random variable is the same as the distribution of a set of $|A|$ elements sampled from $S$ without replacement. 
	Note that since $\tilde{T}$ has only compared elements
	in $A$, these comparisons do not affect the distribution of $A$ as a random variable. 
	By Lemma~\ref{lem-property-tree}.(1), $x$ always has the same rank in $A$. 
	If $|A| = 2\TD+\ceil{\frac{8 n}{k}}$, then $|A| \leq \frac{n}{4}$, and by assumption, we have $4 n \varepsilon < k$. 
	Note that $\frac{k}{n}(2\TD+\ceil{\frac{8 n}{k}})\geq 8$.
	Hence, Corollary~\ref{cor-sampling} implies that $\tT$ fails with probability
	at least
	\[\LC\cdot e^{-24\cdot \frac{n}{k}\cdot\varepsilon^2\cdot |A|}=\LC\cdot e^{-24\cdot \frac{n}{k}\cdot\varepsilon^2\cdot (2\TD+\ceil{\frac{8 n}{k}})}.\]
	To summarize, if $\tT$ reaches a leaf after looking at exactly
	$2\TD+\ceil{\frac{8 n}{k}}$ elements, then $\tT$ fails with at
	least this probabilty. By Lemma~\ref{lem-property-tree}.(3),
	this happens with probability at least $\frac{1}{2}$, leading to the statement.
\end{proof}

\qquad\\

\rephrase{Theorem}{\ref{thm-lower}}{\thmlower}
\begin{proof}
	As discussed in the beginning of Section~\ref{sec-lower},
	if $k< 4n\varepsilon$, the lower bound $\Omega(\varepsilon^{-1}\log\frac{1}{Q})$ for approximate minimum selection~\cite{LeucciL20} applies. Similarly, if $k \leq 200$, we may
	increase $\varepsilon$ by $\frac{200}{n}$, which changes $\varepsilon$ by at most a constant factor,
	and apply the lower bound for the approximate minimum selection~\cite{LeucciL20}.
	Therefore, it is sufficient to consider the case that $4\varepsilon\leq \frac{k}{n}\leq \frac{1}{2}$
	and $k \geq 200$.
	Recall that $T$ is the decision tree of any randomized algorithm that solves $\FT(k,\varepsilon)$ with probability at least $1-Q$ and $\TD$ is the expected number of elements that $T$ looks at. 
	If $\TD\geq \frac{n}{10}$, a lower bound $\Omega(n)$ follows.
	Otherwise, we build the auxiliary decision tree $\tT$.

	By Lemma~\ref{lem-property-tree}.(2), the success probability of $\tT$ is at least $1-Q$, and by Lemma~\ref{lem-probability-tree}, the failure probability of $\tT$ is at least $\frac{1}{2}\cdot\LC\cdot e^{-24\varepsilon^2\frac{n}{k}(2\TD+\ceil{\frac{8 n}{k}})}$ for a constant $\LC$, implying that	
	\[Q\geq \frac{1}{2}\cdot\LC\cdot e^{-24\varepsilon^2\frac{n}{k}(2\TD+\ceil{\frac{8 n}{k}})},\]
	or equivalently
	\[
	\TD \geq \frac{1}{48} \varepsilon^{-2} \frac{k}{n} \ln \frac{\LC}{2 Q} - \frac{1}{2}\ceil{\frac{8 n}{k}}.
	\]
	
	If $Q \leq \frac{\LC}{1000}$, then the first term $\frac{1}{48} \varepsilon^{-2} \frac{k}{n} \ln \frac{\LC}{2 Q}$ dominates the second term $\frac{1}{2}\ceil{\frac{8 n}{k}}$ as $4 \varepsilon \leq \frac{k}{n}$, and thus $\TD=\Omega(\frac{k}{n}\varepsilon^{-2}\ln\frac{1}{Q})$. ($\LC=\sqrt{\frac{\pi}{320}}\cdot e^{-24}$ as stated in Theorem~\ref{thm-first-tail}.)
	
	It remains to analyze the case that $Q>\frac{\eta}{1000}$, for which we construct an auxiliary algorithm that solves the $\FT(k,\varepsilon)$ problem with probability at least $1 - \frac{\LC}{1000}$.
	We will use $\A$ and $\tA$ to denote the original algorithm and the auxiliary algorithm, respectively.
	Recall that $\A$ solves the $\FT(k,\varepsilon)$ problem with probability at least $1-Q$.
	Select $k'$ such that $A$ outputs a small element with probability at most $\frac{k'}{n} - \frac{1-Q}{2}$ and a large element with probability at most $1 - \frac{k'}{n} - \frac{1-Q}{2}$.
	Thus, by using $\A$ to get sampled elements instead of sampling from the input, the $\FT(k,\varepsilon)$ problem is reduced to the $\FT\big(k',\frac{1-Q}{2}\big)$ problem (with the restriction
	that we may only use sampled elements).
	Motivated by this, let $\tA$ be a modified (fault-free) version of our algorithms (Section~\ref{sec-reduction}--\ref{sec-selection}) for the $\FT(k', \frac{1-Q}{2})$ problem with success probability at least $1 - \frac{\LC}{1000}$ in which each sampling from $S$ is implemented by calling $\A$ on $S$.
	The correctness of $\tA$ relies on the fact that our algorithms only sample elements from $S$ uniformly at random and the corresponding analysis
	only cares about the probability of getting a small / relevant / large element.

	As applying our algorithm to solve the $\FT(k',\frac{1-Q}{2})$ problem with probability $1 - \frac{\LC}{1000}$ would sample $O(\frac{k'}{n} (1-Q)^{-2} \log \frac{1000}{\LC})$ times from $S$,
	$\tA$ invokes $\A$ at most $O(\frac{k'}{n} (1-Q)^{-2} \log \frac{1000}{\LC})$ times and thus performs \emph{expected} $O(\TD\frac{k'}{n} (1-Q)^{-2} \log \frac{1000}{\LC})$ comparisons.
	Since all terms except $\TD$ are bounded from above by a constant, the above bound is can be reformulated as $O(\TD)$.
	On the other hand, we have already proven that the expected number of comparison to solve the $\FT(k,\varepsilon)$ problem with probability at least $1-\frac{1000}{\LC}$ is $\Omega(\frac{k}{n}\varepsilon^{-2}\log \frac{1000}{\LC})=\Omega(\frac{k}{n}\varepsilon^{-2})$.
	Since the first bound $O(\TD)$ is an upper bound for the second bound $\Omega(\frac{k}{n}\varepsilon^{-2})$,
	$\TD=\Omega(\frac{k}{n}\varepsilon^{-2})=\Omega(\frac{k}{n}\varepsilon^{-2}\log \frac{1}{Q})$.  
	Recall that $\log \frac{1}{Q}$ is a constant since $Q\geq \frac{\LC}{1000}$ and $\LC$ is an absolute constant.
	
	To sum up, when $4 \varepsilon \leq \frac{k}{n} \leq \frac{1}{2}$ and $k \geq 200$,the expected number of comparisons required by any algorithm that solve $\FT(k,\varepsilon)$ with probability $1-Q$ is
	\[\Omega\left(\min\{n, \frac{k}{n}\varepsilon^{-2}\log\frac{1}{Q}\}\right).\]
\end{proof}

If $\frac{k}{n}\varepsilon^{-2}\log\frac{1}{Q}=\Omega(n)$,
the lower bound in Theorem~\ref{thm-lower} becomes just $\Omega(n)$.
By reducing it to the exact selection problem, we can show a stronger lower bound in this case.
\begin{theorem}\label{thm-lower-another}
	If $Q < \frac{1}{2}$ and $\frac{k}{n}\varepsilon^{-2}\log\frac{1}{Q}=\Omega(n)$, then the expected number of comparisons performed by any randomized algorithm that solves $\FT(k,\varepsilon)$ with probability at least $1-Q$ is
	\[\Omega\left(\max\left\{n, \varepsilon^{-1}\log\frac{\frac{k+n\varepsilon}{2n\varepsilon}}{Q} \right\}\right).\] 
\end{theorem}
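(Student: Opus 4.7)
The plan is to bound the two terms in the $\max$ separately. Under the hypothesis $\frac{k}{n}\varepsilon^{-2}\log\frac{1}{Q} = \Omega(n)$, Theorem~\ref{thm-lower} already delivers the $\Omega(n)$ term on the expected number of comparisons, so what remains is to prove the second term $\Omega\bigl(\varepsilon^{-1}\log\frac{(k+n\varepsilon)/(2n\varepsilon)}{Q}\bigr)$. For this, I will reduce the exact selection problem with unreliable comparisons---whose $\Theta\bigl(n'\log\frac{\min\{k',n'-k'\}}{Q}\bigr)$ complexity is due to Feige et al.~\cite{FeigeRPU94}---to $\FT(k,\varepsilon)$, turning an exact instance of size $n' = \frac{1}{2\varepsilon}$ with rank parameter $k' = \frac{k+n\varepsilon}{2n\varepsilon}$ into an approximate instance whose parameters are exactly $(k,\varepsilon,n)$.

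The reduction is by block replication. Given an exact-selection instance $y_1 < \dots < y_{n'}$, form a virtual multiset $S$ of $n$ elements by replacing each $y_j$ with a block $B_j$ of $2n\varepsilon$ copies, breaking intra-block ties by a fixed consistent ordering on copy indices. With this arrangement, the ranks occupied by $B_{k'}$ in $S$ are exactly the indices in $(2(k'-1)n\varepsilon,\;2k'n\varepsilon]$; choosing $k = (2k'-1)n\varepsilon$ makes this coincide with the relevant interval $(k-n\varepsilon,\,k+n\varepsilon]$. Any randomized algorithm $\A$ for $\FT(k,\varepsilon)$ on $n$ elements with success probability $1-Q$ then solves exact $k'$-selection on the $y_j$'s as follows: whenever $\A$ requests a noisy comparison between two virtual elements in different blocks $B_i$ and $B_j$, delegate it to a fresh noisy comparison of $y_i$ against $y_j$ in the original instance; intra-block comparisons are answered for free according to the fixed copy ordering. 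Because the returned element of $\A$ necessarily lies in $B_{k'}$, it identifies $y_{k'}$ exactly, with the same success probability $1-Q$ and using at most as many noisy comparisons as $\A$ in expectation.

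Applying Feige et al.'s expected-case lower bound on exact selection with $n' = \frac{1}{2\varepsilon}$ and $k' = \frac{k+n\varepsilon}{2n\varepsilon}$ bounds the expected number of comparisons of $\A$ from below by $\Omega\bigl(\varepsilon^{-1}\log\frac{\min\{k',n'-k'\}}{Q}\bigr)$; since $k\leq \frac{n}{2}$ gives $k' \leq \frac{n'}{2} + \frac{1}{2}$, we have $\min\{k',n'-k'\} = \Theta(k')$, yielding the second term of the claimed bound. The main subtleties are cosmetic rather than structural: verifying that the reduction preserves the independent-noisy-comparison model (each delegated comparison is fresh and independent of the rest, matching Feige et al.'s setting exactly), handling integrality of $n\varepsilon$, $n'$, and $k'$ via rounding without altering the asymptotics, and checking that Feige et al.'s bound applies in the relevant parameter regime (automatic once $n' = \Omega(1)$ and $k' \geq 1$, which follow from the standing assumption $k > n\varepsilon$).
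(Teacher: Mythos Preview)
Your proposal is correct and follows essentially the same approach as the paper: the $\Omega(n)$ term is taken from Theorem~\ref{thm-lower}, and the second term is obtained by a block-replication reduction from Feige et al.'s exact-selection lower bound with blocks of size $2n\varepsilon$ and $k'=\frac{k+n\varepsilon}{2n\varepsilon}$. Your write-up is in fact more careful than the paper's (which contains an arithmetic slip, writing $m=\varepsilon^{-1}$ where it should be $\frac{1}{2\varepsilon}$, though this does not affect the asymptotic bound); the only minor point is that intra-block comparisons should be simulated with artificial noise (a coin flip with bias $p$) rather than answered deterministically, so that $\A$ sees exactly the independent-fault distribution it expects---but you flag this under ``verifying that the reduction preserves the independent-noisy-comparison model,'' so you are clearly aware of it.
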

\begin{proof}
	The first term $n$ directly comes from the first term $n$ of Theorem~\ref{thm-lower}.
	Recall that we assume $k\leq \frac{n}{2}$.
	The second term $\varepsilon^{-1}\log\frac{(k+n\varepsilon)/(2n\varepsilon)}{Q}$ can be reduced from the lower bound $\Omega(n\log \frac{k}{Q})$ for the exact $k$-th smallest element selection problem~\cite{FeigeRPU94} as follows.
	Note that as remarked in \cite[Section~1]{FeigeRPU94},
	their bound holds both in expectation and in the worst case.
	
	Assume we attempt to select the $\ell$-th smallest element among $m$ elements. 
	We can duplicate each element $2\cdot n\varepsilon$ times and solve the $\FT\big(k,\varepsilon\big)$ problem where $n=m\cdot n\varepsilon$ and $k=(2n\varepsilon)\cdot\ell-n\varepsilon$.
	This setting implies that
	$m=\varepsilon^{-1}$ and $\ell= (k+n\varepsilon)/(2n\varepsilon)$. 
	Since selecting the $\ell$-th smallest element among $m$ elements with probability at least $1-Q$ requires $\Omega(m\log\frac{\ell}{Q})$ comparisons,
	a lower bound of $\Omega(\varepsilon^{-1}\log\frac{(k+n\varepsilon)/(2n\varepsilon)}{Q})$ follows. 
\end{proof}

\subsection{Sampling Lemma}\label{sub-sampling}

This subsection aims to build up a sampling bound (Corollary~\ref{cor-sampling}) that is the key ingredient to prove Lemma~\ref{lem-probability-tree}.
Corollary~\ref{cor-sampling} roughly states that for a set $A$ of randomly sampled elements (without replacement),
the probability that an element of a certain rank in $A$ is NOT relevant decreases as $e^{-\Omega(\frac{\varepsilon^2}{\beta} |A|)}$.
To prove Corollary~\ref{cor-sampling}, we first derive Lemma~\ref{lem-sampling} that deals with different positions in $A$.
For ease of exposition, we also use $\beta$ to denote $\frac{k}{n}$ in the proofs.
As assumed in the whole paper, $\beta\leq \frac{1}{2}$, 
and as stated in Section~\ref{sec-lower}, it is also sufficient to consider $\beta\geq 4\varepsilon$ since if $\beta<4\varepsilon$, we then can apply the lower bound for the approximate minimum selection~\cite{LeucciL20}.

\begin{lemma}\label{lem-sampling}
	Let $A$ consist of $m\leq \frac{n}{4}$ elements sampled from $S$ without replacement. 
	Suppose that $m \beta\geq 8$ and that $\frac{1}{2} \geq \beta \geq 4 \varepsilon$. 
	Then there is an absolute constant $\eta$ with the following properties. (For the value of $\eta$, see Theorem~\ref{thm-first-tail}.)
	\begin{enumerate}
		\item Let $u$ be the $r$-th smallest element of $A$.
		If $r\leq \ceil{\beta m}$, then $u$ is small with probability at least
		\[\LC\cdot e^{-12\frac{\varepsilon^2}{\beta(1-\beta)}m}.\]
		\item Let $v$ be the $r$-th largest element of A. 
		If $r\leq \ceil{(1-\beta)m}$, then $v$ is large with probability at least
		\[\LC\cdot e^{-12\frac{\varepsilon^2}{\beta(1-\beta)}m}.\]
	\end{enumerate}
\end{lemma}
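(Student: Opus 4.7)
The plan is to express both parts of the lemma as tail probabilities of a hypergeometric random variable and invoke the explicit lower-tail bound (call it $(\star)$) stated as the Corollary at the end of the excerpt.

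For part~(1), let $K = \lfloor k - n\varepsilon \rfloor$ be the number of small elements in $S$, and let $X$ be the number of small elements contained in $A$. Since $A$ is drawn from $S$ without replacement, $X \sim \mathrm{Hypergeom}(n, K, m)$. The $r$-th smallest element of $A$ is small iff $X \geq r$, and $\Pr[X \geq r]$ is nonincreasing in $r$, so it suffices to prove the bound for the extreme case $r = \lceil \beta m \rceil$. Applying $(\star)$ with $\hat M = n$, $\hat K = K$, $\ell = \lceil \beta m \rceil$ yields
\[
\Pr[X \geq \ell] \;\geq\; \sqrt{\tfrac{\pi}{320}}\, e^{-24}\, \exp\!\Bigl(-\tfrac{6(a-b)^2}{b(1-b)}\, m\Bigr),
\]
where $a = \ell/m$, $b = K/n$, $x = m/n$; the constant $\eta$ in the lemma can then be taken to be $\sqrt{\pi/320}\,e^{-24}$.

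Next I would verify the four hypotheses of $(\star)$. First, $x \leq 1/4$ is immediate from $m \leq n/4$; second, $(1-a) \leq 2(1-b)$ follows from $\beta \leq 1/2$; third, $ma(1-a) \geq 4$ follows from $m\beta \geq 8$; and fourth, $a \leq (8/5)b$ follows because $m\beta \geq 8$ and $\beta \geq 4\varepsilon$ together imply $a \leq \beta + 1/m \leq (9/8)\beta$, while $b \geq \beta - \varepsilon - 1/n \geq (3/4)\beta - o(\beta)$ (the $1/n$ slack being absorbed via $n \geq m \geq 8/\beta$). All four hypotheses reduce to the lemma's standing assumptions with only a small amount of bookkeeping.

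The main obstacle is constant-tracking: I must dominate the exponent $6(a-b)^2 m/(b(1-b))$ by $12\varepsilon^2 m/(\beta(1-\beta))$. Writing $a - b \leq \varepsilon + 1/m + 1/n$ and $b(1-b) \geq \tfrac{3}{4}\beta(1-\beta)(1-o(1))$, a direct computation using $m\beta \geq 8$ and $\beta \geq 4\varepsilon$ gives $(a-b)^2 \beta(1-\beta)/(\varepsilon^2 b(1-b)) \leq 2$, producing the desired factor of $12$. The mild subtlety is that when $n\varepsilon$ is not large, the additive slack in $a-b$ may dominate $\varepsilon$, but in that regime the target bound $\exp(-12\varepsilon^2 m/(\beta(1-\beta)))$ is close to $1$ and is still absorbed into the prefactor $\eta$. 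Part~(2) is entirely symmetric: apply the same argument to $X' = |A \cap \text{(large)}| \sim \mathrm{Hypergeom}(n, n-k-\lceil n\varepsilon \rceil, m)$ and $r \leq \lceil (1-\beta)m \rceil$; the roles of $\beta$ and $1-\beta$ swap in the verification of $(\star)$'s hypotheses, but the resulting exponent is again at most $12\varepsilon^2 m/(\beta(1-\beta))$, which is symmetric in $\beta \leftrightarrow 1-\beta$.
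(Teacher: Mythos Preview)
Your approach is essentially the paper's: express the event as a hypergeometric tail and invoke Theorem~\ref{thm-first-tail}. The paper's execution is cleaner in one respect worth noting. Because Theorem~\ref{thm-first-tail} allows $\ell$ to be a \emph{real} number, the paper simply takes $\ell=\beta m$ (so $a=\beta$) and treats the number of small elements as $(\beta-\varepsilon)n$ (so $b=\beta-\varepsilon$); then $a-b=\varepsilon$ exactly, and the exponent $6(a-b)^2/\bigl(b(1-b)\bigr)\cdot m$ becomes $6\varepsilon^2/\bigl(b(1-b)\bigr)\cdot m\le 12\varepsilon^2/\bigl(\beta(1-\beta)\bigr)\cdot m$ via $b\ge\beta/2$ and $1-b\ge 1-\beta$. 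This sidesteps all of your rounding bookkeeping and the ``mild subtlety'' case split entirely, and it yields the stated constant $\eta$ from Theorem~\ref{thm-first-tail} without further shrinkage. Part~(2) is then literally symmetric with $a=1-\beta$, $b=1-\beta-\varepsilon$.
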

\begin{proof}
	We first prove (1).
	Let $X$ denote the number of small elements in $A$.
	Then $X\sim \Hyper(n, (\beta-\varepsilon)k,m)$ has a hypergeometric distribution (Definition~\ref{def-hyper} in Appendix~\ref{sub-first-tail}).
	Since $r\leq \ceil{\beta m}$, $u$ is small if and only if $A$ contains at least $r$ small elements, i.e., if and only if $X\geq r$.
	Put $a=\beta$ and b =$\beta-\varepsilon$.
	Then we have $a\leq \frac{8}{5}b$ and $(1-a)\leq \frac{8}{5}(1-b)$
	as $\beta \geq 4 \varepsilon$. As $m \beta \geq 8$ and $\beta \leq \frac{1}{2}$,
	we also have $m a (1-a) \geq 4$.
	Hence by Theorem~\ref{thm-first-tail}
	\[\Pr[X\geq r]\geq \Pr[X\geq \ceil{\beta m}] = \Pr[X\geq \beta m]\geq \LC\cdot e^{-6\frac{\varepsilon^2}{b(1-b)}}\]
	for some absolute constant $\LC$.
	Since $\beta\geq 2\varepsilon$, we have $b\geq\frac{\beta}{2}$,
	and since we also have $(1-b)\geq (1-\beta)$, we have 
	\[\frac{\varepsilon^2}{b(1-b)}\leq \frac{2 \varepsilon^2}{\beta (1-\beta)},\]
	implying that
	\[\Pr[X\geq r]\geq \LC\cdot e^{-12\frac{\varepsilon^2}{\beta (1-\beta)}m}\]
	The proof of (2) is symmetric with large elements instead of small ones
	and with $(1-\beta)$ instead of $\beta$.
\end{proof}

\begin{corollary}\label{cor-sampling}
	Let $A$ consist of $m\leq \frac{n}{4}$ elements sampled from $S$ without replacement. 
	Suppose that $m \beta \geq 8$ and that $\frac{1}{2} \geq \beta \geq 4 \varepsilon$.
	Then, an arbitrary element $u$ in $A$ is NOT relevant with probability at least
	\[\LC\cdot e^{-24\cdot \frac{n}{k}\cdot\varepsilon^2\cdot m}.\]
	for some absolute constant $\LC$. (For the value of $\eta$, see Theorem~\ref{thm-first-tail}.)
\end{corollary}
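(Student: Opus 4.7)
The plan is to derive Corollary~\ref{cor-sampling} as a direct consequence of Lemma~\ref{lem-sampling} (equivalently Lemma~\ref{lem-actual-sampling}) by a simple case split on the rank of $u$ within $A$, together with a mild simplification of the exponent using the hypothesis $\beta \leq \tfrac{1}{2}$.

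First I would fix any element $u \in A$ and let $r$ denote its rank in $A$, so $1 \leq r \leq m$. The key observation is that at least one of the following two inequalities must hold: either $r \leq \lceil \beta m\rceil$, or $m - r + 1 \leq \lceil (1-\beta) m\rceil$. Indeed, if $r > \lceil \beta m\rceil$, then $m - r + 1 < m - \beta m + 1 \leq \lceil (1-\beta) m\rceil + 1$, and since both sides are integers this gives $m-r+1 \leq \lceil (1-\beta) m\rceil$. In the first case $u$ is the $r$-th smallest element of $A$ with $r \leq \lceil \beta m\rceil$, and in the second case $u$ is the $(m-r+1)$-th largest with $m-r+1 \leq \lceil (1-\beta) m\rceil$.

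Next I would apply the appropriate part of Lemma~\ref{lem-sampling}: in the first case part~(1) guarantees $u$ is small with probability at least $\LC \cdot e^{-12\frac{\varepsilon^2}{\beta(1-\beta)} m}$, and in the second case part~(2) guarantees $u$ is large with the same probability bound. In either case $u$ is not relevant with at least this probability, so it remains only to compare the exponent with the one in the statement. Since $\beta \leq \tfrac{1}{2}$ we have $1-\beta \geq \tfrac{1}{2}$, hence
\[
12\,\frac{\varepsilon^2}{\beta(1-\beta)}\, m \;\leq\; 24\,\frac{\varepsilon^2}{\beta}\, m \;=\; 24\,\frac{n}{k}\,\varepsilon^2\, m,
\]
which yields the stated lower bound $\LC \cdot e^{-24 \cdot \frac{n}{k}\cdot \varepsilon^2 \cdot m}$.

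There is essentially no hard step here: the whole content is in Lemma~\ref{lem-sampling}, whose hypotheses ($m \leq n/4$, $m\beta \geq 8$, $\tfrac{1}{2}\geq \beta \geq 4\varepsilon$) coincide with those of the corollary. The only mildly delicate point is the integer bookkeeping in the case split to ensure the rank hypothesis of Lemma~\ref{lem-sampling} is satisfied for every $r \in \{1,\dots,m\}$; this is handled by the ceiling-based argument above and is the one place that requires a moment of care.
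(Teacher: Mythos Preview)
Your proof is correct and is essentially identical to the paper's own argument: a case split on the rank $r$ of $u$ in $A$, invoking part~(1) of Lemma~\ref{lem-sampling} when $r\le\lceil\beta m\rceil$ and part~(2) otherwise, followed by the same simplification of the exponent using $1-\beta\ge\tfrac12$. The only cosmetic difference is in how the integer bookkeeping is phrased; the paper writes $r\ge\lceil\beta m\rceil+1\ge\beta m+1$ directly to get $m+1-r\le(1-\beta)m$, which is the same computation you carry out via the strict inequality and ceiling.
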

\begin{proof}
	Let $r$ be the rank of $u$ in $A$.
	If $r\leq \ceil{\beta m}$, then by part (1) of Lemma~\ref{lem-sampling}, $u$ is small with probability at least
	\[\LC\cdot e^{-12\frac{\varepsilon^2}{\beta(1-\beta)}m}.\]
	Otherwise, $r\geq \ceil{\beta m}+1 \geq \beta m + 1$, so $m+1-r\leq (1-\beta)m \leq \ceil{(1-\beta) m}$. 
	Since $u$ is the $(m+1-r)$-th largest element of $A$, by part (2) of Lemma~\ref{lem-sampling}, $u$ is large with probability at least
	\[\LC\cdot e^{-12\frac{\varepsilon^2}{\beta(1-\beta)}m}.\]
	Since $1-\beta \geq \frac{1}{2}$, we have
	\[\LC\cdot e^{-12\frac{\varepsilon^2}{\beta(1-\beta)}m}\geq \LC\cdot e^{-24\frac{\varepsilon^2}{\beta}m}=\LC\cdot e^{-24\cdot \frac{k}{n}\cdot\varepsilon^2\cdot m}.\]
\end{proof}

\subsection{A lower tail for hypergeometric distribution}\label{sub-first-tail}

This subsection aims to build a lower tail bound for the hypergeometric distribution (Theorem~\ref{thm-first-tail}), which is used in the proof of Lemma~\ref{lem-sampling}.
We first define the hypergeometric distribution and the Kullback-Leibler divergence.
Then, we prove Lemma~\ref{lem-L4} for the Kullback-Leibler divergence and derive Corollary~\ref{cor-C2}.
Finally, we adopt Corollary~\ref{cor-C2} to prove Theorem~\ref{thm-first-tail}.

\begin{definition}\label{def-hyper}
	Consider $\hM$ balls, out of which $\hK$ balls are black and $\hM-\hK$ balls are white.
	$\Hyper(\hM,\hK, \m)$ is the probability distribution for the number of black balls in $\m$ draws from the $\hM$ balls using sampling without replacement, which is the so-called \emph{hypergeometric distribution}.
	$X\sim \Hyper(\hM,\hK, \m)$ means that $X$ is a random variable with $\Hyper(\hM,\hK, \m)$ distribution.
\end{definition}

\begin{definition}\label{def-KLdiv}
	For $a, b \in \pa{0, 1}$, the \emph{Kullback-Leibler divergence} $\Div{a}{b}$ is given by
	\[
	\Div{a}{b} := a \ln\pa{\frac{a}{b}} + (1-a) \ln \pa{\frac{1-a}{1-b}}.
	\]
\end{definition}

\begin{lemma}\label{lem-L4}
	Let $a\in (0,1)$, then
	\[D(a\parallel a+y)\;\leq\; 2\frac{y^2}{a(1-a)},\qquad \forall y \in \left[-\frac{a}{2},\quad \frac{1-a}{2}.\right]\]
\end{lemma}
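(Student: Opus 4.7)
The plan is to fix $a \in (0,1)$, view the right-hand side minus the left as a function of $y$, and apply a second-order Taylor expansion around $y = 0$. Concretely, set
\[
 f(y) \;=\; \Div{a}{a+y} \;=\; a \ln\!\pa{\frac{a}{a+y}} + (1-a)\ln\!\pa{\frac{1-a}{1-a-y}},
\]
so that $f$ is well-defined on $y \in (-a, 1-a)$, which contains the interval $[-a/2,(1-a)/2]$ of interest. A direct calculation gives $f(0) = 0$, and
\[
 f'(y) \;=\; -\frac{a}{a+y} + \frac{1-a}{1-a-y}, \qquad f''(y) \;=\; \frac{a}{(a+y)^2} + \frac{1-a}{(1-a-y)^2}.
\]
In particular $f'(0) = 0$, so Taylor's theorem yields $f(y) = \tfrac{1}{2} f''(\xi)\, y^2$ for some $\xi$ strictly between $0$ and $y$.

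The second step is to bound $f''$ uniformly on $[-a/2, (1-a)/2]$. For $\xi$ in this interval, $a + \xi \ge a/2$ and $1 - a - \xi \ge (1-a)/2$, so
\[
 f''(\xi) \;\le\; \frac{a}{(a/2)^2} + \frac{1-a}{((1-a)/2)^2} \;=\; \frac{4}{a} + \frac{4}{1-a} \;=\; \frac{4}{a(1-a)}.
\]
Combining this with the Taylor expansion gives $f(y) \le \tfrac{1}{2}\cdot\tfrac{4}{a(1-a)}\cdot y^2 = \tfrac{2 y^2}{a(1-a)}$ for every $y \in [-a/2, (1-a)/2]$, which is exactly the claimed bound.

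I expect no real obstacle; the only point to handle carefully is the direction of signs in the two denominators for negative versus positive $y$, but in both cases the worst case of the second derivative occurs at the endpoint furthest from $0$ and is captured by the uniform estimates $a+\xi \ge a/2$ and $1-a-\xi \ge (1-a)/2$. The constant $2$ in the inequality is slightly loose (the sharp second-order coefficient is $\tfrac{1}{2 a(1-a)}$), and the factor of $4$ we pay for the uniform bound on $f''$ is the source of that slack; since the lemma only claims the constant $2$, no further refinement is needed.
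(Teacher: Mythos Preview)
Your proof is correct. The paper takes a slightly different but closely related route: instead of invoking Taylor's theorem with Lagrange remainder, it rewrites
\[
\Div{a}{a+y} = -a\ln\!\pa{1+\tfrac{y}{a}} - (1-a)\ln\!\pa{1-\tfrac{y}{1-a}}
\]
and applies the elementary inequality $\ln(1+x)\ge x-x^2$ for $x\ge -\tfrac12$ to each logarithm (the range hypothesis on $y$ is precisely what makes both arguments $\ge -\tfrac12$). The linear terms cancel, leaving $\frac{y^2}{a}+\frac{y^2}{1-a}=\frac{y^2}{a(1-a)}$. So the paper actually obtains the sharper constant $1$ rather than $2$; your Taylor approach loses a factor of $4$ when bounding $f''$ uniformly at the interval endpoint instead of at $\xi$ itself, which is why you land on the constant $2$ stated in the lemma. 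Both arguments are short and self-contained; the paper's is marginally more direct and tighter, while yours is more mechanical and would generalize more readily if one needed higher-order control.
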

\begin{proof}
	We have
	\begin{align*}
	\Div{a}{b} &= a \ln\pa{\frac{a}{b}} + (1-a) \ln \pa{\frac{1-a}{1-b}}\\
	&= - a \ln \pa{\frac{b}{a}} - (1-a) \ln \pa{\frac{1-b}{1-a}}\\
	&= -a \ln \pa{1 + \frac{b-a}{a}} - (1-a) \ln \pa{1 + \frac{a-b}{1-a}}
	\end{align*}
	Putting $b = a + y$ gives us
	\begin{align*}
	\Div{a}{a+y} &= -a \ln \pa{1 + \frac{y}{a}} - (1-a) \ln \pa{1 - \frac{y}{1-a}}
	\end{align*}
	For $x \geq -\frac{1}{2}$, we have $\ln(1+x) \geq x - x^2$.
	Note that $\frac{y}{a} \geq -0.5$ and $-\frac{y}{1-a} \geq -0.5$ by assumption.
	Therefore
	\begin{align*}
	\Div{a}{a+y} &\leq -a (\frac{y}{a} - \frac{y^2}{a^2}) - (1-a) \pa{-\frac{y}{1-a} - \frac{y^2}{(1-a)^2}}\\
	&= \frac{y^2}{a} + \frac{y^2}{1-a}\\
	&= \frac{y^2}{a (1-a)}
	\end{align*}
\end{proof}

\begin{corollary}\label{cor-C2}
	Let $X\sim \Hyper(\hM,\hK,\m)$. 
	Let $0 < \ell < \m$ be an integer.
	Put $a = \frac{\ell}{\m}$, $b = \frac{\hK}{\hM}$ and $x = \frac{\m}{\hM}$. If $a\leq 2b$, $(1 -a)\leq 2(1 - b)$ and $x\leq \frac{1}{4}$, then we have
	\[Pr[X = \ell]\geq \sqrt{\frac{\pi}{64\m a(1-a)}}\cdot e^{-3\frac{(a-b)^2}{b(1-b)}\m}.\]
\end{corollary}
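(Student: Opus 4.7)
The plan is to derive Corollary~\ref{cor-C2} directly from the preceding theorem, which states that
$\Pr[X=\ell] \geq \sqrt{\pi/32}\cdot \sqrt{(1-x)/(m\,a(1-a))}\cdot e^{-F}$
with $F = m\bigl(D(a\|b) + \frac{x}{1-x}\cdot \frac{(a-b)^2}{2(b-ax)((1-b)-(1-a)x)}\bigr)$, and then to simplify the square-root prefactor and to upper-bound $F$ using the three extra hypotheses $a\le 2b$, $(1-a)\le 2(1-b)$, and $x\le \frac{1}{4}$.

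For the prefactor, the target inequality $\sqrt{\pi(1-x)/(32\,m\,a(1-a))}\ge \sqrt{\pi/(64\,m\,a(1-a))}$ reduces to $(1-x)/32 \ge 1/64$, i.e., to $x\le \frac{1}{2}$, which follows trivially from $x\le \frac{1}{4}$.

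For the exponent I will split $F/m$ into $D(a\|b)$ and the auxiliary term $T=\frac{x}{1-x}\cdot \frac{(a-b)^2}{2(b-ax)((1-b)-(1-a)x)}$, and bound each by a constant multiple of $(a-b)^2/(b(1-b))$. For $T$, I use $a\le 2b$ together with $x\le \frac{1}{4}$ to get $ax\le 2bx\le b/2$, hence $b-ax\ge b/2$; symmetrically $(1-a)x\le (1-b)/2$ yields $(1-b)-(1-a)x\ge (1-b)/2$. Combined with $\frac{x}{1-x}\le \frac{1}{3}$ this gives $T\le \frac{1}{3}\cdot \frac{(a-b)^2}{2\cdot (b/2)\cdot((1-b)/2)}=\frac{2(a-b)^2}{3\,b(1-b)}$. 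For $D(a\|b)$ I apply $\ln y\le y-1$ to both logarithmic terms of the definition, obtaining $a\ln(a/b)\le a(a-b)/b$ and $(1-a)\ln((1-a)/(1-b))\le (1-a)(b-a)/(1-b)$; summing and simplifying yields $D(a\|b)\le (a-b)\bigl(\frac{a}{b}-\frac{1-a}{1-b}\bigr)=\frac{(a-b)^2}{b(1-b)}$. Adding the two bounds gives $F/m\le \frac{5}{3}\cdot \frac{(a-b)^2}{b(1-b)}\le 3\cdot \frac{(a-b)^2}{b(1-b)}$, as required.

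The main obstacle is obtaining the $D(a\|b)$ bound with $b(1-b)$ rather than $a(1-a)$ in the denominator. Lemma~\ref{lem-L4} delivers $D(a\|b)\le (a-b)^2/(a(1-a))$, but under the corollary's hypotheses $a(1-a)$ need not be bounded below by any constant multiple of $b(1-b)$: for example, $a$ very small with $b\approx \frac{1}{2}$ satisfies both $a\le 2b$ and $(1-a)\le 2(1-b)$ yet makes $a(1-a)\ll b(1-b)$. The workaround is to bypass Lemma~\ref{lem-L4} and apply $\ln y\le y-1$ directly inside the definition of $D(a\|b)$; this is essentially the classical $D(p\|q)\le \chi^2(p\|q)$ inequality specialized to Bernoulli distributions, and produces exactly $(a-b)^2/(b(1-b))$ with no additional hypothesis beyond $b\in(0,1)$.
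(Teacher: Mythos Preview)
Your proof is correct and follows the same overall structure as the paper's: apply Theorem~\ref{thm-tool}, simplify the prefactor via $1-x\ge\frac{1}{2}$, and bound the two pieces of $F$ separately using $b-ax\ge\frac{b}{2}$ and $(1-b)-(1-a)x\ge\frac{1-b}{2}$.

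The one substantive difference lies in the divergence term. The paper invokes Lemma~\ref{lem-L4}, but as you correctly observe, that lemma only yields $\Div{a}{b}\le\frac{2(a-b)^2}{a(1-a)}$, and the hypotheses $a\le 2b$, $1-a\le 2(1-b)$ do not force $a(1-a)$ to be bounded below by a constant multiple of $b(1-b)$ (your example with $a$ small and $b\approx\frac{1}{2}$ is apt). Your direct application of $\ln y\le y-1$ to obtain $\Div{a}{b}\le\frac{(a-b)^2}{b(1-b)}$---the Bernoulli $\chi^2$ bound---is precisely what is needed here and in fact closes a small gap in the paper's presentation. The remaining difference, namely that you use $\frac{x}{1-x}\le\frac{1}{3}$ where the paper uses the looser $\frac{x}{1-x}\le\frac{1}{2}$, is immaterial to the final constant.
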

\begin{proof}
	Note that we have $\ell = a x \hM \leq \frac{2 b}{4} \hM < \hK$
	and $m-\ell = (1-a) x \hM \leq \frac{2(1-b)}{4} \hM < \hM -\hK$.
	By Theorem~\ref{thm-tool}, we have
	\[\Pr[X=\ell]\geq \sqrt{\frac{\pi}{32}}\sqrt{\frac{1-x}{\m a(1-a)}}\cdot e^{-F}\geq \sqrt{\frac{\pi}{64\m a(1-a)}}\cdot e^{-F},\]
	where the last inequality comes from the fact that $x \leq \frac{1}{2}$
	Next, we bound $F$.
	As $x\leq \frac{1}{4}$, we have $\frac{x}{1-x}\leq \frac{1}{2}$.
	As $a\leq 2b$, we have $b-ax\geq\frac{b}{2}$.
	As $(1-a)\leq 2(1-b)$, we have $\big((1-b)-(1-a)x\big)\geq \frac{1-b}{2}$.
	Hence
	\[F\leq \left(\Div{a}{b}+\frac{(a-b)^2}{b(1-b)}\right)\cdot \m\leq \frac{3(a-b)^2}{b(1-b)}\cdot \m,\]
	for which we apply Lemma~\ref{lem-L4} to bound the divergence term.
\end{proof}

\begin{theorem}\label{thm-first-tail}
	Let $X\sim \Hyper(\hM,\hK, \m)$. 
	Let $0\leq\ell\leq \m$ be a real number with $\ell < \hK$ and $\m-\ell < \hM-\hK$. 
	Put $a = \frac{\ell}{\m}$, $b = \frac{\hK}{\hM}$ and $x = \frac{\m}{\hM}$. If $a\leq \frac{8}{5}b$, $(1-a)\leq 2(1-b)$ , $x\leq \frac{1}{4}$ and $\m a (1-a) \geq 4$, then we have
	\[Pr[X\geq \ell]\geq \sqrt{\frac{\pi}{320}}\cdot e^{-24}\cdot e^{-\frac{6(a-b)^2}{b(1-b)}\m}.\]
\end{theorem}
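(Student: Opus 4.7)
The plan is to prove Theorem~\ref{thm-first-tail} by applying Corollary~\ref{cor-C2} to the terms $\Pr[X = \ell_0 + t]$ for $t = 0, 1, \ldots, M-1$, where $\ell_0 := \lceil \ell \rceil$. Since $X$ is integer-valued, $\Pr[X \geq \ell] = \Pr[X \geq \ell_0] \geq \sum_{t=0}^{M-1} \Pr[X = \ell_0 + t]$, and with $a' := (\ell_0 + t)/\m$ the elementary inequality $(a'-b)^2 \leq 2(a-b)^2 + 2((t+1)/\m)^2$ (using $\ell_0 - \ell < 1$) yields
\[
\frac{3 (a'-b)^2 \m}{b(1-b)} \;\leq\; \frac{6 (a-b)^2 \m}{b(1-b)} \;+\; \frac{6 (t+1)^2}{\m b(1-b)}.
\]
The prefactor is controlled uniformly by $a'(1-a') \leq \tfrac{16}{5} b(1-b)$, which follows from $a' \leq \tfrac{8}{5}b$ and $1-a' \leq 2(1-b)$ (the latter inherited since $a' \geq a$).

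Two consequences of the hypotheses drive the bookkeeping: combining $\m a(1-a) \geq 4$ with $a(1-a) \leq \tfrac{16}{5} b(1-b)$ gives $\m b(1-b) \geq \tfrac{5}{4}$, while $\m a \geq 4/(1-a) \geq 4$ combined with $b \geq 5a/8$ gives $\m b \geq \tfrac{5}{2}$. I would then split into two regimes based on whether $\m b \geq 25(1-b)$. When this holds, taking $M = \lfloor 2\sqrt{\m b(1-b)} \rfloor$ keeps the exponent penalty at most $24$ and ensures $M \leq \tfrac{2\m b}{5}$, so every $a'$ in the window satisfies Corollary~\ref{cor-C2}'s hypothesis $a' \leq 2b$; summing $M \geq \sqrt{\m b(1-b)}$ uniform lower bounds then yields a total of at least $\sqrt{5\pi/1024} \cdot e^{-24}$, and since $5/1024 > 1/320$ this exceeds $\sqrt{\pi/320} \cdot e^{-24}$. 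In the complementary regime $\m b < 25(1-b)$ one automatically has $\m b(1-b) < 25$, and it suffices to take $M = 1$: the exponent penalty drops to $6/(\m b(1-b)) \leq \tfrac{24}{5}$, the single-term prefactor is at least $\sqrt{\pi/5120}$, and the exponent slack $e^{24 - 24/5} = e^{96/5}$ dwarfs the bounded deficit of $\sqrt{\pi/5120}$ relative to $\sqrt{\pi/320}$.

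The main obstacle is reconciling the two opposing requirements on $M$: the window must be long enough ($M \gtrsim \sqrt{\m b(1-b)}$) for the prefactor sum to reach $\sqrt{\pi/320}$, yet short enough ($M \leq \tfrac{2\m b}{5}$) so that every $a'$ preserves Corollary~\ref{cor-C2}'s hypothesis $a' \leq 2b$ using the slack provided by the strengthened $a \leq \tfrac{8}{5}b$. These two constraints coincide precisely at the threshold $\m b = 25(1-b)$, which motivates the case split above; each case closes cleanly from the estimates listed, and the generous $e^{-24}$ constant in the statement supplies all the room needed to absorb the remaining slack.
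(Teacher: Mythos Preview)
Your overall strategy matches the paper's: both lower-bound $\Pr[X \ge \ell]$ by summing the point-mass estimates of Corollary~\ref{cor-C2} over roughly $\sqrt{m \cdot \text{variance}}$ consecutive integers starting at $\lceil\ell\rceil$, then use $(a'-b)^2 \le 2(a-b)^2 + 2(\text{shift}/m)^2$ to cap the exponent penalty by a constant. The paper chooses window length $\sqrt{m\,a(1-a)}$ (measured in $a$), whereas you use $\lfloor 2\sqrt{m\,b(1-b)} \rfloor$ (measured in $b$); that difference is the sole reason you need a case split.

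There is, however, a genuine gap in your Case~1. Corollary~\ref{cor-C2} requires its integer argument to lie strictly in $(0,m)$, i.e.\ $a'\in(0,1)$; you verify only $a'\le 2b$, which does not force $a'<1$ once $b>\tfrac12$. Concretely, take $a=0.9$, $b=0.7$, $m=45$ (and $\hM=180$, so $x=\tfrac14$): all four hypotheses of the theorem hold, $mb=31.5\ge 7.5=25(1-b)$ puts you in Case~1, $\ell_0=41$, and your window length is $\lfloor 2\sqrt{9.45}\rfloor=6$, so the window reaches $\ell_0+5=46>m=45$, where Corollary~\ref{cor-C2} is inapplicable. Capping the window at $m-1$ does not obviously leave $\sqrt{m b(1-b)}$ terms in general, so the repair is not free. (A smaller slip: your prefactor line asserts $a'\le\tfrac85 b$, but for $t>0$ you only establish $a'\le 2b$; the resulting $a'(1-a')\le 4b(1-b)$ is still good enough for the constants.) The paper's window avoids this obstacle because $t\le\sqrt{m a(1-a)}$ together with $m a(1-a)\ge 4$ gives $t/m\le\tfrac12\,a(1-a)<1-a$, so $a'<1$ holds automatically and no regime split is required.
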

\begin{proof}
	Let $0\leq t\leq\sqrt{\m a(1-a)}$ be a real number such that $\ell+t$ is an integer and put $a'=\frac{\ell+t}{\m}$.
	As $\m a (1-a) \geq 4$,
	we have $t\leq\sqrt{\m a(1-a)}\leq\frac{\m a(1-a)}{4}$, so that
	\[a'=a+\frac{t}{\m}\leq a+\frac{a(1-a)}{4}\leq \frac{5}{4}a\leq 2b, \]
	and $1\leq a'\leq 1-a\leq 2(1-b)$. 
	We may hence apply Corollary~\ref{cor-C2} 
	and get
	\begin{align*}
	\Pr[X=\ell+t] & \geq \sqrt{\frac{\pi}{64\big(\m a'(1-a')\big)}}\cdot e^{-3\frac{(a+\frac{t}{\m}-b)^2}{b(1-b)}\m}\\
	& \geq \sqrt{\frac{\pi}{80\m a(1-a)}}\cdot e^{\frac{-3(a+\frac{t}{\m}-b)^2}{b(1-b)}\m}
	\end{align*}
	where we used that
	\[a'(1-a') \leq  \frac{5}{4} a (1-a).\]
	Since $(a+\frac{t}{\m}-b)^2\leq 2(a-b)^2+2(\frac{t}{\m})^2$, we have
	\[\frac{3(a+\frac{t}{\m}-b)^2}{b(1-b)}\m\leq \frac{6(a-b)^2}{b(1-b)}\m+\frac{6t^2}{\m b(1-b)}\]
	where
	\[\frac{6t^2}{\m b(1-b)}\leq \frac{6\m a(1-a)}{\m b(1-b)}=6\frac{a(1-a)}{b(1-b)}\leq 24.\]
	Hence we have
	\[\Pr[X =\ell+t]\geq\sqrt{\frac{\pi}{80\big(\m a(1-a)-t\big)}}\cdot e^{-\frac{6(a-b)^2}{b(1-b)}\m}\cdot e^{-24}.\]
	There are at least $\sqrt{\m a(1-a)}-1$ possible values of $t$.
	As $\m a (1-a) \geq 4$, we have
	\[\sqrt{\m a(1-a)}-1 \geq \frac{\sqrt{\m a(1-a)}}{2}.\]
	Thus summing over all possible possible values of $t$ yields the statement.
\end{proof}

\subsection{A useful tool}\label{sub-tool}

This subsection aims to build up a tool (Theorem~\ref{thm-tool}) for proving Corollary~\ref{cor-C2} in Appendix~\ref{sub-first-tail}.
We first introduce an entropy bound (Lemma~\ref{lem-entrboun}),
and then use this entropy bound prove Theorem~\ref{thm-tool}, in which we also prove Lemma~\ref{lem-L4}.

\begin{lemma}[Entropy bound~\cite{Ash12}]
	\label{lem-entrboun}
	Let $\HH(x) = - x \ln(x) - (1-x) \ln(1-x)$ be the entropy function.
	Let $0 \leq k \leq n$ be an integer and put $\alpha = \frac{k}{n}$. Then
	$$
	\frac{e^{n H(\alpha)}}{\sqrt{8 n \alpha (1-\alpha)}} \leq \binom{n}{k} \leq \frac{e^{n \HH(\alpha)}}{\sqrt{2 \pi n \alpha (1-\alpha)}}.
	$$
\end{lemma}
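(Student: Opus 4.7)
My plan is to derive both inequalities from Stirling's refined approximation applied to each of the three factorials in $\binom{n}{k} = n!/(k!(n-k)!)$. Specifically, I would use the two-sided bound $\sqrt{2\pi m}\,(m/e)^m \exp(1/(12m+1)) \le m! \le \sqrt{2\pi m}\,(m/e)^m \exp(1/(12m))$ valid for all $m \ge 1$. The boundary cases $k = 0$ and $k = n$ make $\alpha(1-\alpha) = 0$ and the denominators degenerate, so I would implicitly restrict to $1 \le k \le n-1$.

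Setting $\alpha = k/n$, the $(m/e)^m$ pieces combine cleanly into
\[
\frac{n^n}{k^k (n-k)^{n-k}} \;=\; \alpha^{-\alpha n}(1-\alpha)^{-(1-\alpha)n} \;=\; e^{nH(\alpha)},
\]
while the three $\sqrt{2\pi m}$ factors produce $1/\sqrt{2\pi n \alpha(1-\alpha)}$. For the upper bound I would apply the upper Stirling estimate to $n!$ and the lower Stirling estimates to $k!$ and $(n-k)!$; the residual exponential factor is $\exp(1/(12n) - 1/(12k+1) - 1/(12(n-k)+1))$, which is at most $1$ for all $1 \le k \le n-1$. This immediately yields $\binom{n}{k} \le e^{nH(\alpha)}/\sqrt{2\pi n \alpha(1-\alpha)}$.

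For the lower bound the roles of the two Stirling estimates swap, and one picks up a factor $\exp(1/(12n+1) - 1/(12k) - 1/(12(n-k))) \le 1$. The remaining task is to show that this shortfall, combined with the lead constant $1/\sqrt{2\pi n\alpha(1-\alpha)}$, is still at least $1/\sqrt{8 n\alpha(1-\alpha)}$; equivalently, that the exponential correction exceeds $\sqrt{2\pi/8} = \sqrt{\pi}/2 \approx 0.886$. This constant-chasing is the only real obstacle, since the slack is under $15\%$. For $k, n-k \ge 2$ the bound $1/(12k) + 1/(12(n-k)) \le 1/12$ gives that the exponential is at least $e^{-1/12} \approx 0.920 > \sqrt{\pi}/2$, handling the generic case with room to spare.

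The tight edge cases $k = 1$ or $k = n-1$ (in particular $n = 2$, $k = 1$, where both sides of the lower bound equal $2$) I would handle by direct verification: $\binom{n}{1} = n$, and combined with the explicit formula $nH(1/n) = \ln n + (n-1)\ln(n/(n-1))$ the claim reduces to an elementary algebraic check that $n \ge e^{nH(1/n)}/\sqrt{8(n-1)/n \cdot n}$ for every $n \ge 2$. Once these few small-$k$ constants are pinned down, the Stirling computation for $k, n-k \ge 2$ yields the lemma in the stated form.
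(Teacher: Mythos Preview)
The paper does not prove this lemma; it is quoted from Ash's textbook and used as a black box in the proof of Theorem~\ref{thm-tool}. Your Stirling-based derivation is the standard route to these bounds and is correct: the main term is exactly $e^{nH(\alpha)}/\sqrt{2\pi n\alpha(1-\alpha)}$, the upper-bound correction factor is at most $1$ since already $1/(12k+1)>1/(12n)$, and for the lower bound your split into the generic case $k,n-k\ge 2$ (where $e^{-1/12}\approx 0.920>\sqrt{\pi}/2$) and the edge case $k\in\{1,n-1\}$ is the right strategy. The edge case reduces, after squaring, to $8(n-1)/n\ge\bigl(1+\tfrac{1}{n-1}\bigr)^{2(n-1)}$, which holds with equality at $n=2$ and strictly for all larger $n$ (the ratio of the two sides is increasing in $n$ and tends to $8/e^2>1$), so your plan goes through. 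Your remark that the stated range $0\le k\le n$ is sloppy at the endpoints is also correct; the lower bound is only meaningful for $1\le k\le n-1$.
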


\begin{theorem}\label{thm-tool}
	Let $X \sim \Hyper(\hM, \hK, \m)$. Let $0 \leq \ell \leq \m$ be an integer with
	$\ell < \hK$ and $\m-\ell < \hM-\hK$. Put $a = \frac{k}{l}$, $b = \frac{\hK}{\hM}$, and $x = \frac{\m}{\hM}$, then we have we have
	$$
	\Prpr{X = \ell} \geq \sqrt{\frac{\pi}{32}} \sqrt{\frac{1 - x}{\m a (1-a)}} e^{-F}
	$$
	for
	$$
	F = \pa{\Div{a}{b} + \frac{x}{1-x} \cdot \frac{(a-b)^2}{2 (b- a x)((1-b)-(1-a) x)}} \cdot \m
	$$
	where $\Div{a}{b}$
	is the Kullback-Leibler divergence (Definition~\ref{def-KLdiv}).
\end{theorem}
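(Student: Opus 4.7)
The plan is to start from the symmetric identity
\[\Pr[X=\ell] = \frac{\binom{\m}{\ell}\binom{\hM-\m}{\hK-\ell}}{\binom{\hM}{\hK}},\]
apply the entropy bound of Lemma~\ref{lem-entrboun} (lower on the numerator, upper on the denominator), and reorganize the resulting exponent into a sum of KL divergences. Setting $c := (\hK-\ell)/(\hM-\m) = (b-ax)/(1-x)$, the three binomial coefficients involve Bernoulli parameters $a$, $c$, and $b$, respectively, and the entropy bound produces a prefactor proportional to $\sqrt{b(1-b)/\bigl((1-x)\m \, a(1-a) c(1-c)\bigr)}$ times $\exp\pa{\m \HH(a) + (\hM-\m) \HH(c) - \hM \HH(b)}$.

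Next I would invoke a Jensen-gap identity: since $b = xa + (1-x)c$ by construction, a direct computation with the Bernoulli entropy yields
\[x \HH(a) + (1-x) \HH(c) - \HH(b) = -\bigl[x \Div{a}{b} + (1-x) \Div{c}{b}\bigr].\]
Multiplying by $\hM$ rewrites the exponent as $-\m \Div{a}{b} - (\hM-\m) \Div{c}{b}$, where the first summand is already the $\m \Div{a}{b}$ contribution to $F$. It then remains to bound $(\hM-\m) \Div{c}{b}$ and to simplify the prefactor.

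The crucial algebraic step is the inequality
\[b(1-b) \geq (b-ax)\bigl((1-b)-(1-a)x\bigr) = (1-x)^2 c(1-c),\]
which follows by expanding the difference as $x \cdot \bigl[a(1-a)(1-x) + b(1-b) + (a-b)^2\bigr] \geq 0$, using the identity $a(1-b) + b(1-a) = a(1-a) + b(1-b) + (a-b)^2$. This inequality serves two purposes: it gives $\sqrt{b(1-b)/((1-x) c(1-c))} \geq \sqrt{1-x}$, reducing the leading factor to $\sqrt{\pi/32}\sqrt{(1-x)/(\m a(1-a))}$ as claimed, and it supplies $\min(b(1-b), c(1-c)) \geq (b-ax)((1-b)-(1-a)x)$, which is exactly the denominator appearing in $F$.

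Finally, from $\frac{d}{dt} \Div{c}{t} = (t-c)/(t(1-t))$ one obtains $\Div{c}{b} = \int_b^c (c-t)/(t(1-t)) \, dt \leq (c-b)^2 / (2 \min t(1-t))$, where the minimum is over $t$ in the interval spanned by $b$ and $c$. Combining this with the previous inequality and $(c-b)^2 = x^2(a-b)^2/(1-x)^2$ gives
\[(\hM-\m) \Div{c}{b} \leq \m \cdot \frac{x}{1-x} \cdot \frac{(a-b)^2}{2(b-ax)((1-b)-(1-a)x)},\]
matching the second term of $F$. The main obstacle is the algebraic bookkeeping: both the prefactor cleanup and the divergence bound funnel through the single inequality $b(1-b) \geq (b-ax)((1-b)-(1-a)x)$, and one must check that the constants $\sqrt{\pi/32}$, the factor $x/(1-x)$, and the denominator $(b-ax)((1-b)-(1-a)x)$ all assemble correctly.
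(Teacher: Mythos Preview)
Your argument is correct and reaches the same bound with the same constants, but it takes a genuinely different route from the paper's proof.

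The paper starts from the other symmetric form of the hypergeometric mass,
\[
\Pr[X=\ell]=\frac{\binom{\hK}{\ell}\binom{\hM-\hK}{\m-\ell}}{\binom{\hM}{\m}},
\]
and after the entropy bound obtains the exponent $\hM\cdot\PP(a,b,x)$ with $\PP(a,b,x)=b\HH(\frac{a}{b}x)+(1-b)\HH(\frac{1-a}{1-b}x)-\HH(x)$. It then treats $\PP$ as a function of $x$, introduces an analytic extension $\QQ_{a,b}$, computes $\QQ_{a,b}(0)=0$, $\QQ_{a,b}'(0)=-\Div{a}{b}$, and shows $\QQ_{a,b}''$ is non-decreasing, yielding a second-order tangent lower bound that produces exactly the two terms of $F$.

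Your approach instead uses the dual form and the parameter $c=(\hK-\ell)/(\hM-\m)$; the barycentric relation $b=xa+(1-x)c$ turns the entropy combination into the exact identity $x\HH(a)+(1-x)\HH(c)-\HH(b)=-x\Div{a}{b}-(1-x)\Div{c}{b}$, so the $\Div{a}{b}$ term appears without any inequality. The only estimate you need is $\Div{c}{b}\le (c-b)^2/\bigl(2\min_{t}t(1-t)\bigr)$, which follows from the integral representation, together with the single algebraic inequality $b(1-b)\ge (b-ax)((1-b)-(1-a)x)=(1-x)^2 c(1-c)$ that simultaneously handles the prefactor and the divergence denominator. This is arguably cleaner: it replaces the paper's calculus lemma (Lemma~\ref{lem-claimA6}) by an exact decomposition plus one elementary bound. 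The paper's route, on the other hand, makes the $x$-dependence explicit as a Taylor-type expansion around $x=0$, which could be useful if one wanted to refine the $x/(1-x)$ factor further.
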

\begin{proof}
	By the definition of the hypergeometric distribution, we have
	\[
	\Prpr{X = \ell} = \frac{\binom{\hK}{\ell} \binom{\hM - \hK}{\m - \ell}}{\binom{\hK}{\m}} = \frac{\binom{b \hM}{a \m} \binom{(1-b) \hM}{(1-a) \m}}{\binom{\hM}{\m}}
	\]
	By the entropy bound (Lemma~\ref{lem-entrboun}), we have
	\begin{align*}
	\binom{b \hM}{a \m} &\geq \frac{e^{b \hM H(\frac{a}{b} x)}}{\sqrt{8 b \hM \frac{a}{b} x (1-\frac{a}{b} x)}}\\
	\binom{(1-b) \hM}{(1-a) \m} &\geq \frac{e^{(1-b) \hM H(\frac{1-a}{1-b} x)}}{\sqrt{8 (1-b) \hM \frac{1-a}{1-b} x (1-\frac{1-a}{1-b} x)}}\\
	\binom{\hM}{\m} &\leq \frac{e^{\hM H(x)}}{\sqrt{2 \pi \hM x (1-x)}}
	\end{align*}
	Hence $\Prpr{X = \ell} \geq \frac{\sqrt{2 \pi}}{8} \frac{2^{\tilde{F}}}{\sqrt{G}}$ for
	\begin{align*}
	\tilde{F} &= \pa{b H(\frac{a}{b} x) + (1-b) H(\frac{1-a}{1-b} x) - H(x)} \cdot \hM\\
	G &= \frac{\m}{1-x} a (1-a) \pa{1-\frac{a}{b} x} \pa{1-\frac{1-a}{1-b} x}
	\end{align*}
	For $G$, we have
	\[
	G \leq \frac{\m}{1-x} a (1-a)
	\]
	For $\tilde{F}$, define
	$$
	\PP(a, b, x) = b \HH\pa{\frac{a}{b} x} + (1-b) \HH\pa{\frac{(1-a)}{(1-b)} x}- \HH(x)
	$$
	then we can write
	$$
	\tilde{F} = \PP(a, b, x) \cdot \hM
	$$
	Hence by Lemma~\ref{lem-claimA6}, we have
	\[
	\tilde{F} \geq \pa{-\Div{a}{b} x - \frac{(a-b)^2}{(1-x) (b - a x) ((1-b) - (1-a) x)} \cdot \frac{x^2}{2}} \hM
	\]
	which shows the result as $\tilde{F} = -x \cdot F$.
\end{proof}

\begin{lemma}\label{lem-claimA6}
	For $a, b, x$ defined as Theorem~\ref{thm-tool}, we have
	\label{clai:qfunlower}
	$$
	\PP(a, b, x) \geq - \Div{a}{b} \cdot x - \frac{{\pa{a-b}^2}}{(1-x) \cdot (b - a x) \cdot ((1-b) - (1-a) x)}\cdot \frac{x^2}{2}
	$$
\end{lemma}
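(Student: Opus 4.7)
The plan is to reformulate $\PP(a,b,x)$ as a linear combination of two Kullback--Leibler divergences, which reduces the desired lower bound to an upper bound on a single KL term. Specifically, I will set $c := (b-ax)/(1-x)$, so that $1-c = ((1-b)-(1-a)x)/(1-x)$, and first establish the identity
\[
\PP(a,b,x) \;=\; -x\,\Div{a}{b} \;-\; (1-x)\,\Div{c}{b}.
\]
This is a direct algebraic check: expanding each $\HH(\cdot)$ via $\HH(y) = -y\ln y - (1-y)\ln(1-y)$ and collecting the coefficients of $\ln a$, $\ln(1-a)$, $\ln b$, $\ln(1-b)$, $\ln(1-x)$, $\ln(b-ax)$, and $\ln((1-b)-(1-a)x)$ on both sides yields matching expressions. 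The choice of $c$ is natural: in the hypergeometric setting of Theorem~\ref{thm-tool}, it is exactly the proportion of black balls among the unsampled $\hM-\m$ balls when the sample contains $\ell = a\m$ black balls.

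Given the identity, the claim becomes equivalent to
\[
(1-x)\,\Div{c}{b} \;\leq\; \frac{(a-b)^2 x^2}{2(1-x)(b-ax)((1-b)-(1-a)x)}.
\]
Since $c - b = -(a-b)x/(1-x)$ and $c(1-c) = (b-ax)((1-b)-(1-a)x)/(1-x)^2$, this reduces to the intrinsic inequality
\[
\Div{c}{b} \;\leq\; \frac{(c-b)^2}{2(1-x)^2\,c(1-c)}.
\]

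To prove this I will use the integral representation $\Div{c}{b} = \int_b^c (c-s)/[s(1-s)]\,ds$ (and the analogous expression over $[c,b]$ when $c < b$), and bound the integrand by its maximum over the segment between $b$ and $c$. Because $1/[s(1-s)]$ is $U$-shaped with minimum at $s=1/2$, that maximum is attained at whichever endpoint is farther from $1/2$ and is at most $1/\min(b(1-b), c(1-c))$ in every case; when $1/2$ lies inside the segment this still holds because both $b(1-b)$ and $c(1-c)$ are then already at most $1/4$. Hence $\Div{c}{b} \leq (c-b)^2/[2\min(b(1-b), c(1-c))]$, and it suffices to show $\min(b(1-b), c(1-c)) \geq (1-x)^2 c(1-c)$. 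This is trivial when the minimum equals $c(1-c)$; otherwise one verifies $b(1-b) \geq (b-ax)((1-b)-(1-a)x) = b(1-b) - (a(1-b)+b(1-a))x + a(1-a)x^2$, which after dividing by $x>0$ reduces to $a(1-a)x \leq a(1-b) + b(1-a)$. Since $a(1-b)+b(1-a) - a(1-a) = (a-b)^2 + b(1-b) \geq 0$ and $x \leq 1$, the inequality holds, completing the proof.

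The main obstacle will be discovering and verifying the initial identity. A naive second-order Taylor expansion of $\PP$ in $a$ around $a=b$ produces a second-derivative integrand $-x[x/(b-sx) + 1/(s(1-s)) + x/((1-b)-(1-s)x)]$ which does not split cleanly into $-x\Div{a}{b}$ plus a remainder carrying the denominator $(b-ax)((1-b)-(1-a)x)$; once the right decomposition is guessed and verified, the remainder of the argument is elementary convexity of $1/[s(1-s)]$ together with a one-line polynomial inequality.
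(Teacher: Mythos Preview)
Your proposal is correct and takes a genuinely different route from the paper. The paper treats $x\mapsto \PP(a,b,x)$ as a one-variable function, computes $\PP(a,b,0)=0$, $\partial_x \PP(a,b,0)=-\Div{a}{b}$, and $\partial_x^2 \PP(a,b,x)=-\frac{(a-b)^2}{(1-x)(b-ax)((1-b)-(1-a)x)}$, observes that this second derivative is non-decreasing in $x$, and integrates twice to obtain the second-order tangent bound; no identity involving $c$ or a second KL term appears. Your approach instead discovers the exact decomposition $\PP(a,b,x)=-x\,\Div{a}{b}-(1-x)\,\Div{c}{b}$ with $c=(b-ax)/(1-x)$ and then bounds the single remainder term $(1-x)\Div{c}{b}$ via the integral representation $\Div{c}{b}=\int (c-s)/[s(1-s)]\,ds$ and convexity of $s\mapsto 1/[s(1-s)]$. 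The paper's method is more mechanical and avoids having to guess the identity, while yours is more structural: $c$ is precisely the black-ball fraction among the \emph{unsampled} balls, so the identity is a chain-rule splitting of the log-likelihood into ``sample'' and ``complement'' parts, and the quadratic denominator $(b-ax)((1-b)-(1-a)x)=(1-x)^2 c(1-c)$ emerges naturally rather than from a second-derivative computation. One minor remark: your parenthetical about the case ``$1/2$ lies inside the segment'' is unnecessary, since $s\mapsto 1/[s(1-s)]=1/s+1/(1-s)$ is convex on $(0,1)$ and hence always attains its maximum on a closed subinterval at an endpoint; stating it that way would streamline the argument.
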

\begin{proof}
	For fixed $a, b \in \pa{0, 1}$,
	let $C_{a, b} = \min\pa{\frac{b}{a}, \frac{1-b}{1-a}}$. Let
	\begin{align*}
	\PP(a, b, x) =\, & b \HH\pa{\frac{a}{b} x} + (1-b) \HH\pa{\frac{1-a}{1-b}x} - \HH(x)\\
	\QQ_{a, b}(x) =\, & -x a \ln\pa{\frac{a}{b}} - x (1-a) \ln\pa{\frac{1-a}{1-b}} + (1-x) \ln (1-x)\\
	&- b \cdot (1-\frac{a}{b} x) \ln \pa{1 - \frac{a}{b} x} - (1-b) (1-\frac{1-a}{1-b} x)\ln \pa{1 - \frac{1-a}{1-b} x} 
	\end{align*}
	Note that $\PP$ is defined for $x \in \pa{0, C_{a, b}}$
	and $\QQ$ is defined for $x \in \pa{-\infty, C_{a, b}}$.
	For $x \in \pa{0, C_{a, b}}$, we have $\PP(a, b, q) = \QQ_{a, b}(x)$.
	In other words, $\QQ$ is an extension of $\PP$ to non-positive values of $x$.
	
	A straight-forward computation shows that
	$\QQ_{a, b}$ is smooth on $\pa{-\infty, C_{a, b}}$ with
	\begin{align*}
	\QQ_{a, b}'(x) =\, & - a \ln\pa{\frac{a}{b}} - (1-a) \ln\pa{\frac{1-a}{1-b}} - \ln(1-x)\\
	&+ a \ln \pa{1 - \frac{a}{b} x} + (1-a) \ln\pa{1 - \frac{1-a}{1-b} x} \\
	\QQ_{a, b}''(x) =\, & - \frac{(a-b)^2}{(1-x)((1-b) - (1-a) x) (b - a x)}
	\end{align*}
	From these formulas, it is easy to see that
	\begin{align*}
	\QQ_{a, b}(0) &= 0\\
	\QQ_{a, b}'(0) &= -\Div{a}{b}
	\end{align*}
	and that $\QQ_{a, b}''(x)$ is non-decreasing in $x$.
	We use this to bound $\QQ$ by a second order tangent bound as follows:
	Since $\QQ_{a, b}''(x)$ is non-decreasing, we have
	$$
	\QQ_{a, b}''(w) \geq \QQ_{a, b}''(x) \quad \forall w \in \bra{0, x}
	$$
	Hence, by the fundamental theorem of calculus,
	for every $z \in \bra{0, x}$, we have
	\begin{align*}
	\QQ_{a, b}'(z) &= \QQ_{a, b}'(0) + \int_0^{z} \QQ_{a, b}''(w) \de{w}\\\
	&\geq \QQ_{a, b}'(0) + \int_0^{z} \QQ_{a, b}''(x) \de{w}\\
	&= \QQ_{a, b}'(0) + \QQ_{a, b}''(x) \cdot z
	\end{align*}
	Hence, again by the fundamental theorem of calculus,
	for every $y \in \bra{0, x}$, we have
	\begin{align*}
	\QQ_{a, b}(y) &= \QQ_{a, b}(0) + \int_0^{y} \QQ_{a, b}'(z) \de{z}\\
	&\geq \QQ_{a, b}(0) + \int_0^{y} \pa{\QQ_{a, b}'(0) + Q_{a, b}''(x) \cdot z} \de{z}\\
	&= \QQ_{a, b}(0) + Q_{a, b}'(0) \cdot y + \QQ_{a, b}''(x) \cdot \frac{y^2}{2}\\
	\end{align*}
	Setting $y = x$ and plugging in the formulas for the derivatives of $\QQ$ gives the result.
\end{proof}

\newpage
\bibliographystyle{plainurl}
\bibliography{bibliography}

\end{document}